\newcommand{\nop}[1]{}
\newtheorem{example}{Example}
\newtheorem{definition}{Definition}
\definecolor{Xiang}{rgb}{1,0,0}
\newcommand\vldbdoi{XX.XX/XXX.XX}
\newcommand\vldbpages{XXX-XXX}
\newcommand\vldbvolume{14}
\newcommand\vldbissue{1}
\newcommand\vldbyear{2020}
\newcommand\vldbauthors{\authors}
\newcommand\vldbtitle{\shorttitle} 
\newcommand\vldbavailabilityurl{URL_TO_YOUR_ARTIFACTS}
\newcommand\vldbpagestyle{plain} 
\begin{document}
\title{Efficient Exact Subgraph Matching via GNN-based Path Dominance Embedding (Technical Report)}

\author{Yutong Ye}
\affiliation{%
  \institution{East China Normal University}
  \city{Shanghai}
  \country{China}
}
\email{52205902007@stu.ecnu.edu.cn}

\author{Xiang Lian}
\affiliation{%
  \institution{Kent State University}
  \city{Kent}
  \state{Ohio}
  \country{USA}
}
\email{xlian@kent.edu}

\author{Mingsong Chen}
\affiliation{%
  \institution{East China Normal University}
  \city{Shanghai}
  \country{China}
}
\email{mschen@sei.ecnu.edu.cn}

\begin{abstract}
The classic problem of \textit{exact subgraph matching} returns those subgraphs in a large-scale data graph that are isomorphic to a given query graph, which has gained increasing importance in many real-world applications such as social network analysis, knowledge graph discovery in the Semantic Web, bibliographical network mining, and so on. In this paper, we propose a novel and effective \textit{graph neural network (GNN)-based path embedding framework} (GNN-PE), which allows efficient \textit{exact subgraph matching} without introducing \textit{false dismissals}. Unlike traditional GNN-based graph embeddings that only produce \textit{approximate} subgraph matching results, in this paper, we carefully devise GNN-based embeddings for paths, such that: if two paths (and 1-hop neighbors of vertices on them) have the subgraph relationship, their corresponding GNN-based embedding vectors will strictly follow the dominance relationship. With such a newly designed property of path dominance embeddings, we are able to propose effective pruning strategies based on path label/dominance embeddings and guarantee no false dismissals for subgraph matching. We build multidimensional indexes over path embedding vectors, and develop an efficient subgraph matching algorithm by traversing indexes over graph partitions in parallel and applying our pruning methods. We also propose a cost-model-based query plan that obtains query paths from the query graph with low query cost. To further optimize our GNN-PE approach, we also propose a more efficient \textit{GNN-based path group embedding} (GNN-PGE) technique, which performs subgraph matching over grouped path embedding vectors. We design effective pruning strategies (w.r.t. grouped path embeddings) that can significantly reduce the search space during the index traversal. Through extensive experiments, we confirm the efficiency and effectiveness of our proposed GNN-PE and GNN-PGE approaches for exact subgraph matching on both real and synthetic graph data.
\end{abstract}

\maketitle

\pagestyle{\vldbpagestyle}
\begingroup\small\noindent\raggedright\textbf{PVLDB Reference Format:}\\
\vldbauthors. \vldbtitle. PVLDB, \vldbvolume(\vldbissue): \vldbpages, \vldbyear. \\
\href{https://doi.org/\vldbdoi}{doi:\vldbdoi}
\endgroup
\begingroup
\renewcommand\thefootnote{}\footnote{\noindent
This work is licensed under the Creative Commons BY-NC-ND 4.0 International License. Visit \url{https://creativecommons.org/licenses/by-nc-nd/4.0/} to view a copy of this license. For any use beyond those covered by this license, obtain permission by emailing \href{mailto:info@vldb.org}{info@vldb.org}. Copyright is held by the owner/author(s). Publication rights licensed to the VLDB Endowment. \\
\raggedright Proceedings of the VLDB Endowment, Vol. \vldbvolume, No. \vldbissue\ %
ISSN 2150-8097. \\
\href{https://doi.org/\vldbdoi}{doi:\vldbdoi} \\
}\addtocounter{footnote}{-1}\endgroup

\ifdefempty{\vldbavailabilityurl}{}{
\vspace{.3cm}
\begingroup\small\noindent\raggedright\textbf{PVLDB Artifact Availability:}\\
The source code, data, and/or other artifacts have been made available at \url{https://github.com/JamesWhiteSnow/GNN-PE}.
\endgroup
}
\section{Introduction}

For the past decades, graph data management has received much attention from the database community, due to its wide spectrum of real applications such as the Semantic Web \cite{orogat2022smartbench},
social networks \cite{wasserman1994social}, biological networks (e.g., gene regulatory networks \cite{karlebach2008modelling} and protein-to-protein interaction networks \cite{szklarczyk2015string}), road networks \cite{chen2009monitoring,zhang2022relative}, and so on. In these graph-related applications, one of the most important and classic problems is the \textit{subgraph matching} query, which retrieves subgraphs $g$ from a large-scale data graph $G$ that match with a given query graph pattern $q$.

Below, we give an example of the subgraph matching in real applications of skilled team formation in collaboration networks.

\begin{figure}[t]
    \centering
    \subfigure[collaboration social network $G$]{
        \includegraphics[height=3.4cm]{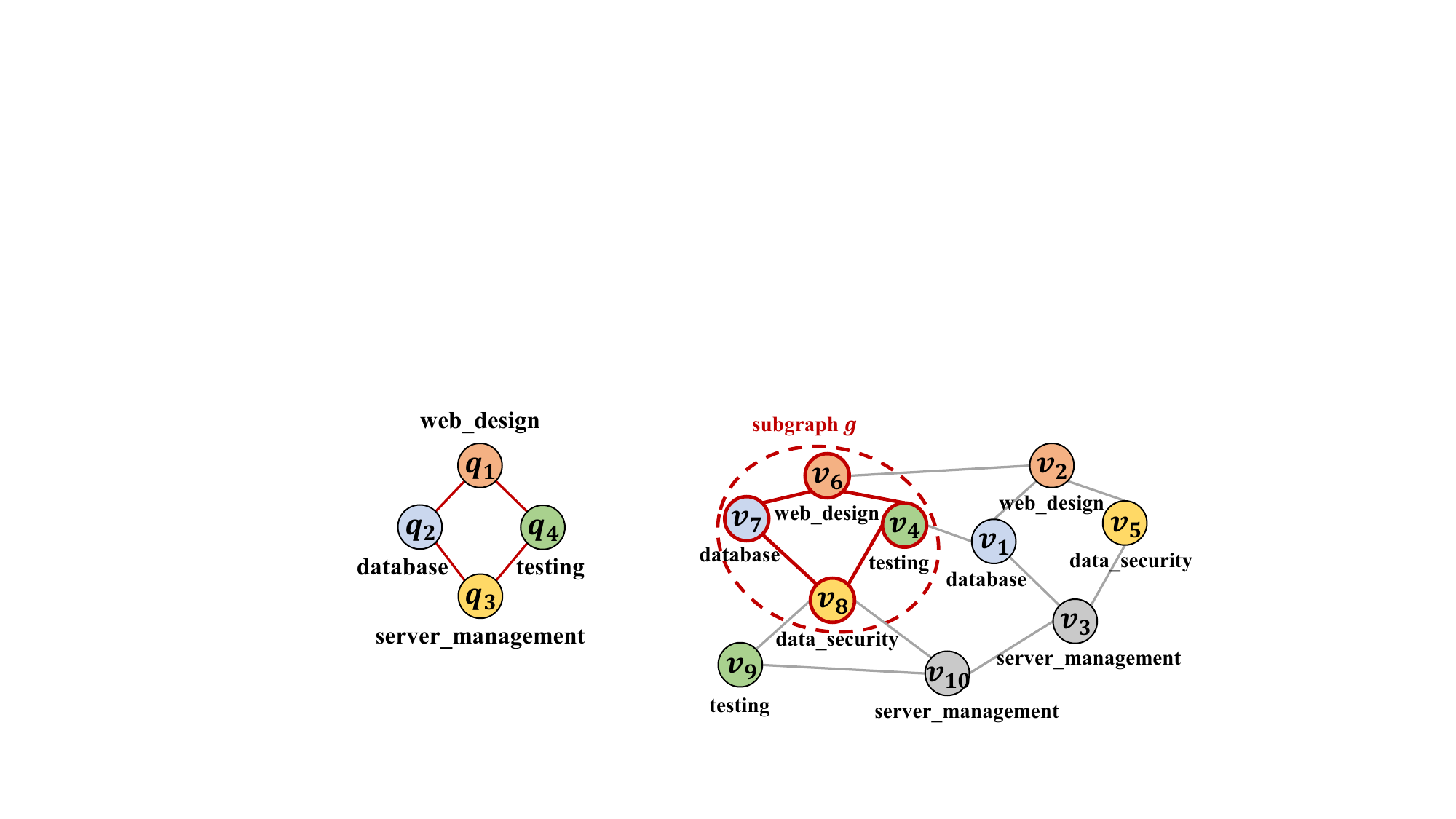}
        \label{subfig:data_graph}
    }
    \subfigure[query graph $q$]{
        \includegraphics[height=2.5cm]{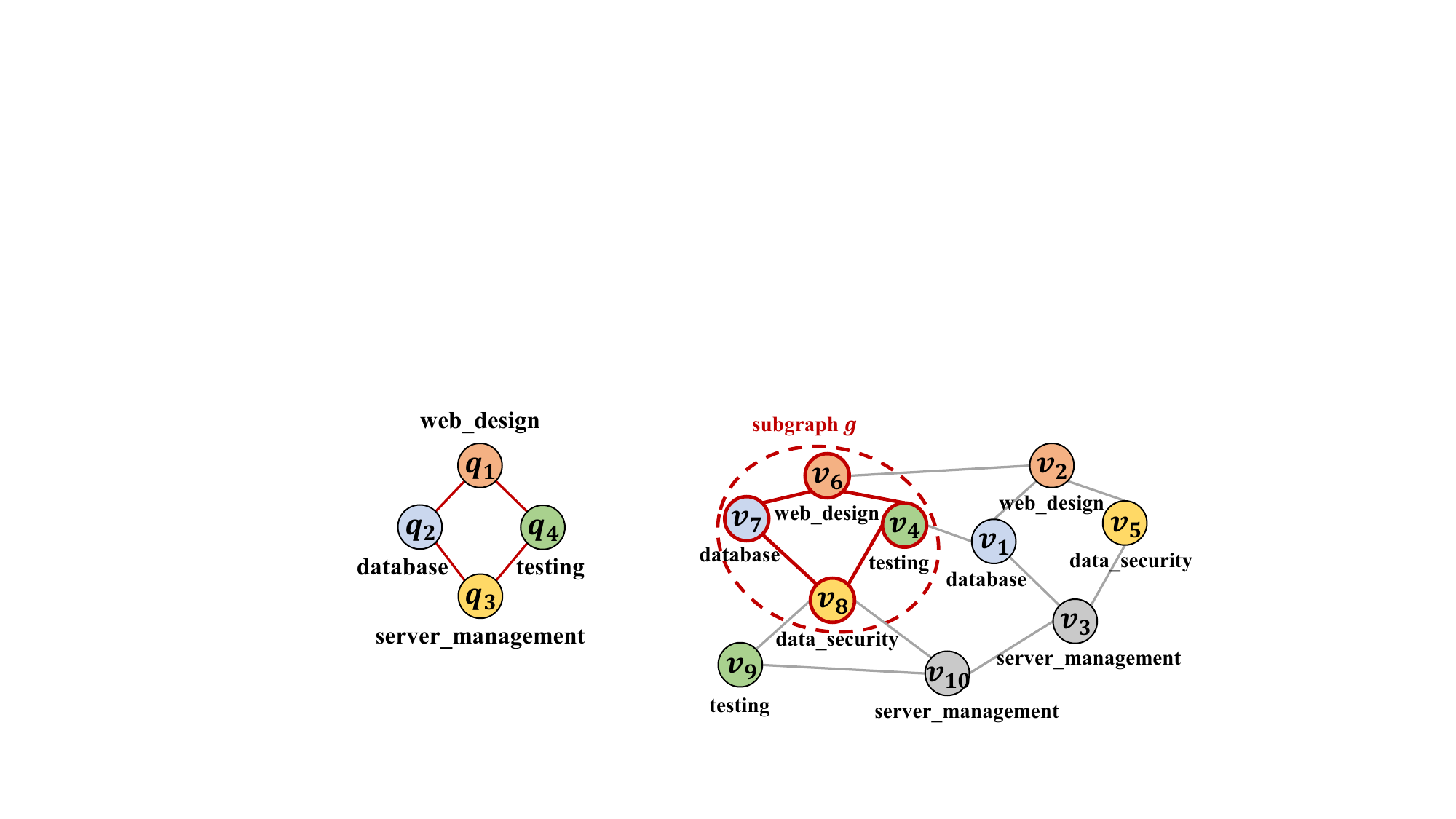}
        \label{subfig:query_graph}
    }
    \caption{An example of the subgraph matching in collaboration social networks.}
    \label{fig:matching}
\end{figure}

\begin{example}
{\bf (Skilled Team Formation in Collaboration Social Networks \cite{AnagnostopoulosBCGL12})}  In order to successfully accomplish a task, a project manager is interested in finding an experienced team that consists of members with complementary skills and having previous collaboration histories. Figure~\ref{subfig:data_graph} illustrates a collaboration social network, $G$, which contains user vertices, $v_1 \sim v_{10}$, with skill labels (e.g., $v_7$ with the skill ``database'') and edges (each connecting two user vertices, e.g., $v_7$ and $v_8$, indicating that they have collaborated in some project before). 
Figure~\ref{subfig:query_graph} shows a query graph $q$, specified by the project manager, which involves the required team members, $q_1 \sim q_4$, with specific skills and their historical collaboration requirements (e.g., the edge between nodes $q_1$ and $q_2$ that indicates the front-end and back-end collaborations). In this case, the project manager can specify this query graph $q$ and issue a subgraph matching query over the collaboration network $G$ to obtain candidate teams matching with $q$ (e.g., subgraph $g$ isomorphic to $q$, circled in Figure~\ref{subfig:data_graph}).\qquad $\blacksquare$
\end{example}

The subgraph matching has many other real applications. For example, in the Semantic Web application like the knowledge graph search  \cite{lian2011efficient}, a SPARQL query can be transformed to a query graph $q$, and thus the SPARQL query answering is equivalent to a subgraph matching query over an RDF knowledge graph, which retrieves RDF subgraphs isomorphic to the transformed query graph $q$. 

\noindent {\bf Prior Works.} The subgraph isomorphism problem is known to be NP-complete \cite{lewis1983michael,cordella2004sub,grohe2020graph}, which is thus not tractable. Prior works on the subgraph matching problem usually followed the filter-and-refine paradigm \cite{jin2021fast,yuan2021subgraph,kim2021versatile,wang2022reinforcement,zhang2022hybrid}, which first filters out subgraph false alarms with no \textit{false dismissals} and then returns actual matching subgraphs by refining the remaining candidate subgraphs.  

Due to the high computation cost of \textit{exact} subgraph matching, an alternative direction is to quickly obtain \textit{approximate} subgraph matching results, trading the accuracy for efficiency. Previous works on approximate subgraph matching \cite{du2017first,zhu2011structure,dutta2017neighbor,li2018efficient} usually searched $k$ most similar subgraphs in the data graph by using various graph similarity measures (e.g., graph edit distance \cite{zhu2011structure,li2018efficient}, chi-square statistic \cite{dutta2017neighbor}, and Sylvester equation \cite{du2017first}). 

Moreover, several recent works \cite{bai2019simgnn,xu2019cross,li2019graph} utilized deep-learning-based approaches such as \textit{Graph Neural Networks} (GNNs) to conduct approximate subgraph matching. Specifically, GNNs can be used to transform entire (small) complex data graphs into vectors in an embedding space offline. Then, we can determine the subgraph relationship between data and query graphs by comparing their embedding vectors, via either neural networks \cite{bai2019simgnn,xu2019cross} or similarity measures (e.g., Euclidean or Hamming distance \cite{li2019graph}). Although GNN-based approaches can efficiently, but approximately, assert the subgraph relationship between any two graphs, there is no theoretical guarantee about the accuracy of such an assertion, which results in approximate (but not exact) subgraph answers. Worse still, these GNN-based approaches usually work for comparing two graphs only, which are not suitable for tasks like retrieving the locations of matching subgraphs in a large-scale data graph.

\noindent {\bf Our Contributions.} In this paper, we focus on \textit{exact subgraph matching} queries over a large-scale data graph, and present a novel \textit{GNN-based path embedding} (GNN-PE) framework for exact and efficient subgraph matching. In contrast to traditional GNN-based embeddings without any evidence of accuracy guarantees, we design an effective \textit{GNN-based path dominance embedding} technique, which trains our newly devised GNN models to obtain embedding vectors of nodes (and their neighborhood structures) on paths such that: {\bf any two paths (including 1-hop neighbors of vertices on them) with the subgraph relationship will strictly yield embedding vectors with the \textit{dominance} relationship \cite{Borzsonyi01}}. This way, through GNN-based path embedding vectors, we can guarantee 100\% accuracy to effectively filter out those path false alarms. In other words, we can retrieve candidate paths (including their locations in the data graph)  matching with those in the query graph with no false dismissals. Further, we propose an effective approach to enhance the pruning power by using multiple sets of GNN-based path embeddings over randomized vertex labels in the data graph.

To deal with large-scale data graphs, we divide the data graph into multiple subgraph partitions and train GNN models for different partitions to enable parallel processing over path embeddings in a scalable manner. We also build an index over path label/dominance embedding vectors for each partition to facilitate the pruning, and develop an efficient and exact subgraph matching algorithm for (parallel) candidate path retrieval and refinement via our proposed cost-model-based query plan. 

Moreover, to further optimize the efficiency of our GNN-PE approach in the conference version \cite{ye2024efficient}, we design a novel \textit{GNN-based path group embedding} (GNN-PGE) approach to group embeddings of paths with the same starting vertices, which are treated as basic units for the index construction. Due to the grouping of path embeddings, we can significantly reduce the space cost of the index. Correspondingly, we propose effective pruning strategies with respect to such grouped path embeddings, which can rule out false alarms of path group embeddings and improve the efficiency of subgraph retrieval.  

In this paper, we make the following contributions:

\begin{enumerate}
    \item We propose a novel GNN-PE framework for designing GNN models to enable path embeddings and exact subgraph matching with no false dismissals in Section \ref{sec:problem_definition}. 

    \item  We design an effective \textit{GNN-based path dominance embedding} approach in Section \ref{sec:embedding}, which embeds data vertices/paths into vectors via GNNs, where their dominance relationships can ensure exact subgraph retrieval.

    \item We propose effective pruning methods for filtering out path false alarms, build aggregate R$^*$-tree indexes over GNN-based path embedding vectors, and develop an efficient parallel algorithm for exact subgraph matching via GNN-based path embeddings in Section \ref{sec:subgraph_matching_alg}.

    \item We devise a novel cost model for selecting the best query plan of the subgraph matching in Section \ref{sec:query_plan}.

    \item We design a \textit{GNN-based path group embedding} (GNN-PGE) technique, which groups path embeddings to further optimize the indexing and subgraph retrieval in Section \ref{sec:gnnpge}.

    \item Through extensive experiments, we evaluate our proposed GNN-PE approach for exact subgraph matching over both real and synthetic graphs and confirm the efficiency and effectiveness of our GNN-PE approach in Section \ref{sec:expr}.
\end{enumerate}

Section \ref{sec:related_work} reviews related works on exact/approximate subgraph matching and GNNs. Finally, Section \ref{sec:conclusions} concludes this paper.

\section{Problem Definition}
\label{sec:problem_definition}

In this section, we formally define the graph data model and subgraph matching queries over a graph database, and propose a GNN-based exact subgraph matching framework. Table \ref{tab:notations} depicts the commonly used symbols and their descriptions in this paper.

\begin{table}[t]\small
\begin{center}
\caption{Symbols and Descriptions}
\label{tab:notations}
\begin{tabular}{|l||p{6cm}|}
\hline
\textbf{Symbol}&\textbf{Description} \\
\hline\hline
    $G$ & a data graph\\\hline
    $q$ & a query graph\\\hline
    $g$ & a subgraph of the data graph $G$\\\hline
    $v_i$ (or $q_i$) & a vertex in graph $G$ (or $q$)\\\hline
    $e_{ij}$ (or $e_{q_iq_j}$) & an edge in graph $G$ (or $q$)\\\hline
    $V(G)$ (or $V(q)$) & a set of vertices $v_i$ (or $q_i$) \\\hline
    $E(G)$ (or $E(q)$) & a set of edges $e_{ij}$ (or $e_{q_iq_j}$) \\\hline
    $\phi(G)$ (or $\phi(q)$) &  a mapping function $V(G)\times V(G)\rightarrow E(G)$ (or $V(q)\times V(q)\rightarrow E(q)$) \\\hline
    $L(G)$ (or $L(q)$) & a labeling function of $G$ (or $q$) \\\hline
    $m$ & the number of graph partitions $G_j$ \\\hline
    $M_j$ & a GNN model for $G_j$ \\\hline
    $g_{v_i}$ (or $s_{v_i}$) & a unit star graph (or substructure) of center vertex $v_i$ \\\hline
    $o(g_{v_i})$ (or $o(v_i)$) & an embedding vector of center vertex $v_i$ from unit star graph $g_{v_i}$ \\\hline
    $o(p_z)$ (or $o(p_q)$) & a path dominance embedding vector of path $p_z$ (or $p_q$)\\\hline    
    $o_0(p_z)$ (or $o_0(p_q)$) & a path label embedding vector of path $p_z$ (or $p_q$) \\\hline
    
\end{tabular}
\end{center}
\end{table}

\subsection{Graph Data Model}
We first give the model for an undirected, labeled graph, $G$, below.

\begin{definition} \textbf{(Graph, $G$)}
A \textit{graph}, $G$, is represented by a quadruple $(V(G), E(G), \phi(G), L(G))$, where $V(G)$ is a set of vertices $v_i$, $E(G)$ is a set of edges $e_{ij}$ ($=(v_i, v_j)$) between vertices $v_i$ and $v_j$, $\phi(G)$ is a mapping function $V(G)$$\times$$V(G)$$\rightarrow$$E(G)$, and $L(G)$ is a labeling function that associates each vertex $v_i\in V(G)$ with a label $L(v_i)$.
\label{def:graph}
\end{definition}

Examples of graphs (as given in Definition~\ref{def:graph}) include social networks \cite{wasserman1994social}, road networks \cite{chen2009monitoring,zhang2022relative}, biological networks (e.g., gene regulatory networks \cite{karlebach2008modelling} and protein-to-protein interaction networks \cite{szklarczyk2015string}),  bibliographical networks \cite{tang2008arnetminer}, and so on.

\subsection{Graph Isomorphism}
In this subsection, we give the definition of the classic graph isomorphism problem between undirected, labeled graphs.

\begin{definition} \textbf{(Graph Isomorphism \cite{babai2018group,grohe2020graph}})
Given two graphs $G_A=(V_A,E_A,$ $\phi_A,L_A)$ and $G_B=(V_B, E_B, \phi_B,$ $L_B)$, we say that $G_A$ is \textit{isomorphic} to $G_B$ (denoted as $G_A \equiv G_B$), if there exists an edge-preserving bijective function $f: V_A\to V_B$, such that: i) $\forall v_i\in V_A$, $L_A(v_i)=L_B(f(v_i))$, and ii) $\forall v_i$, $v_j \in V_A$, if $(v_i, v_j)\in E_A$ holds, we have $(f(v_i),f(v_j))\in E_B$.
\label{def:subiso}
\end{definition}


In Definition \ref{def:subiso}, the graph isomorphism problem checks whether or not two graphs $G_A$ and $G_B$ exactly match each other. 

Moreover, we say that $G_A$ is \textit{subgraph isomorphic} to $G_B$ (denoted as $G_A \subseteq G_B$), if $G_A$ is isomorhic to an induced subgraph, $g_B$, of graph $G_B$. Note that, the subgraph isomorphism problem has been proven to be NP-complete \cite{cordella2004sub,lewis1983michael}.

\subsection{Subgraph Matching Queries}

We now define a \textit{subgraph matching query} over a large-scale data graph $G$, which obtains subgraphs, $g$, that match with a given query graph $q$.

\begin{definition} \textbf{(Subgraph Matching Query)}
Given a data graph $G$ and a query graph $q$, a \textit{subgraph matching query} retrieves all the subgraphs $g$ of the data graph $G$  that are isomorphic to the query graph $q$ (i.e., $g \equiv q$).
\label{def:subquery}
\end{definition}

Exact subgraph matching query (as given in Definition \ref{def:subquery}) has many real-world applications such as social network analysis, small molecule detection in computational chemistry, and pattern matching over biological networks \cite{qiao2017subgraph,sahu2017ubiquity}. For example, in real applications of fraud detection, subgraph matching is useful to detect fraudulent activities (represented by some graph patterns) in activity networks \cite{qiu2018real}. Moreover, in biological networks \cite{alon2007network}, we can also use subgraph matching to retrieve molecule structures (subgraphs) that follow some given structural pattern (i.e., query graph) with certain functions.

\subsection{Challenges}
Due to the NP-completeness of the subgraph matching problem \cite{lewis1983michael}, it is intractable to conduct exact subgraph matching over a large-scale data graph. Therefore, it is rather challenging to design effective optimization techniques to improve the query efficiency of the subgraph matching. In this paper, we will carefully devise effective graph partitioning, pruning, indexing, and refinement mechanisms to facilitate efficient and scalable subgraph matching. 

Moreover, we consider graph node/path embedding by training/using GNN models. Note that, GNNs usually provide approximate solutions (e.g., for the prediction or classification) without an accuracy guarantee. It is therefore not trivial how to utilize GNNs to guarantee the accuracy of exact subgraph matching (i.e., AI for DB without introducing false dismissals). In this work, we analyze the reason for the inaccuracy of using GNNs. Prior works usually trained and used GNNs on distinct training and testing graph data sets. In contrast, our work explores basic units of a finer resolution in the data graph (i.e., star subgraphs), such that the trained GNNs are over the same training/testing graph data set, which can ensure 100\% accuracy for the exact subgraph matching process. 

\begin{algorithm}[t!]
\caption{{\bf The GNN-Based Path Embedding (GNN-PE) Framework for Exact Subgraph Matching}}
\label{alg1}
\KwIn{
    a data graph $G$ and 
    a query graph $q$\\
}
\KwOut{
    subgraphs $g$ ($\subseteq G$) that are isomorphic to $q$
}

\tcp{\bf Offline Pre-Computation Phase}

divide graph $G$ into $m$ disjoint subgraphs $G_1$, $G_2$, ..., and $G_m$

\For{each subgraph partition $G_j$ $(1\leq j\leq m)$}{
    \tcp{train GNN models for graph node/edge embeddings}
    train a GNN model $M_j$ with node dominance embedding over vertices in $G_j$\\
    generate embedding vectors $o(p_z)$ for paths $p_z$ of lengths $l$ in $G_j$ via $M_j$\\
    \tcp{build an index over subgraph $G_j$}
    build aggregate R$^*$-tree indexes, $\mathcal{I}_j$, over embedding vectors for paths of length $l$ in $G_j$
}

\tcp{\bf Online Subgraph Matching Phase}
\For{each query graph $q$}{
    \tcp{retrieve candidate paths}
    compute a cost-model-based query plan $\varphi$ of multiple query paths $p_q$\\
    obtain a query embedding vector $o(q_i)$ of each vertex $q_i$ in $q$ from GNNs $M_j$, and embeddings $o(p_q)$ of query paths $p_q$\\
    find candidate path sets, $p_q.cand\_list$, that match with query paths, $p_q$, by traversing indexes $\mathcal{I}_j$\\
    \tcp{obtain and refine candidate subgraphs}
    assemble candidate subgraphs $g$ from candidate paths in $p_q.cand\_list$ and refine subgraphs $g$ via multi-way hash join\\
    \textbf{return} subgraphs $g$ ($\equiv q$)
}

\end{algorithm}

\subsection{GNN-Based Exact Subgraph Matching Framework}
\label{subsec:framework}

Algorithm~\ref{alg1} presents a novel \textit{GNN-based path embedding} (GNN-PE) framework for efficiently answering subgraph matching queries via path embeddings, which consists of two phases, offline pre-computation and online subgraph matching phases. That is, we first pre-process the data graph $G$ offline by building indexes over path embedding vectors via GNNs (lines 1-5), and then answer online subgraph matching queries over indexes (lines 6-11). 

Specifically, in the offline pre-computation phase, we first use the METIS \cite{karypis1998fast} to divide the large-scale data graph $G$ into $m$ disjoint subgraph partitions $G_1$, $G_2$, ..., and $G_m$, which aims to minimize the number of edge cuts (line 1). 
Then, for each subgraph partition $G_j$ ($1\leq j\leq m$), we train a GNN model $M_j$ by using our well-designed dominance embedding loss function, where the training data set contains all possible vertices in $G_j$ and their neighborhoods in $G$ (i.e., star subgraph structures; lines 2-3).
Here, the GNN model training (until the loss equals 0) is conducted offline over multiple subgraph partitions $G_j$ in parallel, which can achieve low training costs.
After that, for each vertex $v_i \in V(G_j)$, we can generate an embedding vector $o(v_i)$ via the trained GNN $M_j$.
For any paths $p_z$ of length $l$, we concatenate embedding vectors of adjacent vertices in order on paths $p_z$ to obtain path embedding vectors $o(p_z)$ (line 4). Next, for each partition $G_j$, we build an aR$^*$-tree index \cite{Lazaridis01,beckmann1990r}, $\mathcal{I}_j$, over embedding vectors of paths $p_z$ with length $l$ (starting from vertices in $G_j$) to facilitate exact subgraph matching (line 5).

In the online subgraph matching phase, given any query graph $q$, we first compute a cost-model-based query plan $\varphi$ that divides the query graph $q$ into multiple query paths $p_q$ (lines 6-7). 
Then, we obtain node embedding vectors $o(q_i)$ of query vertices $q_i \in V(q)$ in $q$ via GNNs in $M_j$ (for $1\leq j\leq m$), and embeddings of query paths $p_q$ by concatenating node embedding vectors on them (line 8). Next, we traverse indexes $\mathcal{I}_j$ (in parallel) to retrieve candidate path sets, $p_q.cand\_list$, of query paths $p_q$ (line 9). Finally, we combine candidate paths in $p_q.cand\_list$ to assemble candidate subgraphs $g$ and refine them via the multi-way hash join to return actual matching subgraphs $g$ (lines 10-11).

\section{GNN-based Dominance Embedding}
\label{sec:embedding}

In this section, we discuss how to calculate GNN-based dominance embeddings for vertices/paths (lines 3-4 of Algorithm \ref{alg1}), which can enable subgraph relationships to be preserved in the embedding space and support efficient and accurate path candidate retrieval. Section \ref{subsec:GNN} designs a GNN model for computing node embeddings in the data graph. Section \ref{subsec:node_embedding} presents a loss function used in the GNN training, which can guarantee the dominance relationships of node embeddings (when the loss equals zero) for exact subgraph matching with no false dismissals. Finally, Section \ref{subsec:path_embedding} combines embeddings of nodes on paths to obtain path dominance embeddings, which are used for online exact subgraph matching.

\subsection{GNN Model for the Node Embedding}
\label{subsec:GNN}

In this work, we use a GNN model (e.g., \textit{Graph Attention Network} (GAT) \cite{velivckovic2018graph}) to enable the node embedding in the data graph $G$. Specifically, the GNN takes a \textit{unit star graph} $g_{v_i}$ (i.e., a star subgraph containing a center vertex $v_i \in V(G)$ and its 1-hop neighbors) as input and an embedding vector, $o(v_i)$, of vertex $v_i$ as output. Figure~\ref{fig:gnnmodel} illustrates an example of this GNN model (with unit star graph $g_{v_1}$ as input), which consists of input, hidden, and output layers.

\begin{figure}[t]
    \centering
    \includegraphics[scale=0.33]{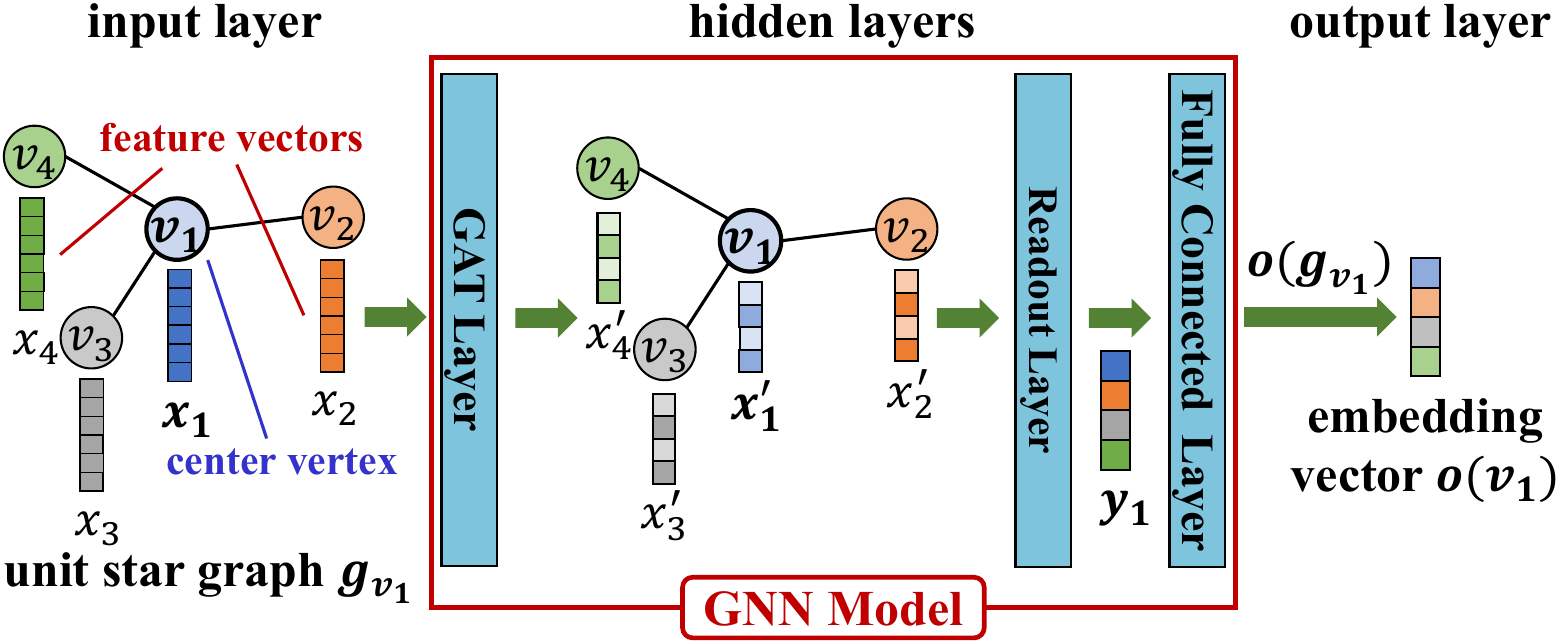}
    \caption{Illustration of our GNN model.}
    \label{fig:gnnmodel}
\end{figure}

\noindent{\bf Input Layer.}  As mentioned earlier, the input of the GNN model is a unit star graph $g_{v_i}$ (or its star substructure/subgraph, denoted as $s_{v_i}$). 
Each vertex $v_j$ in $g_{v_i}$ (or $s_{v_i}$) is associated with an initial feature vector $x_j$ of size $F$, which is obtained via either vertex label encoding or one-hot encoding \cite{bisong2019introduction}.

Figure~\ref{fig:graphdata} shows an example of unit star graph $g_{v_1}$ in data graph $G$ and one of its star substructures $s_{v_1}$ ($\subseteq g_{v_1}$), which are centered at vertex $v_1$. In Figure \ref{subfig:unit_star_graph}, vertices $v_1 \sim v_4$ have their initial feature vectors $x_1 \sim x_4$, respectively. The case of the star substructure $s_{v_1}$ in Figure \ref{subfig:substructure} is similar.

\noindent{\bf Hidden Layers.} As shown in Figure \ref{fig:gnnmodel}, we use three hidden layers in our GNN model, including GAT \cite{velivckovic2018graph}, readout, and fully connected layers.

\underline{\it GAT layer:} In the first GAT layer, the feature vector of each vertex will go through a linear transformation parameterized by a weight matrix $\textbf{W}\in \mathbb{R}^{F'\times F}$. 
Specifically, we compute an \textit{attention coefficient}, $ac_{v_iv_j}$, between any vertices $v_i$ and $v_j$ as follows:
\begin{equation}
    ac_{v_iv_j} = a\big(\textbf{W} x_i, \textbf{W} x_j\big),
\end{equation}
which indicates the importance of vertex $v_i$ to vertex $v_j$, 
where the shared attentional mechanism $a(\cdot, \cdot)$ is a function (e.g., a single-layer neural network with learnable parameters): $\mathbb{R}^{F'}\times \mathbb{R}^{F'}\rightarrow \mathbb{R}$ that outputs the correlation between two feature vectors.

Denote $\mathcal{N}(v_i)$ as the neighborhood of a vertex $v_i$. For each vertex $v_i$, we aggregate feature vectors of its 1-hop neighbors $v_j \in \mathcal{N}(v_i)$. That is, we first use a softmax function to normalize attention coefficients $ac_{v_iv_j}$ as follows:
\begin{equation}
    \alpha_{v_iv_j}=softmax(ac_{v_iv_j})=\frac{exp(ac_{v_iv_j})}{\sum_{v_k\in \mathcal{N}(v_i)}exp(ac_{v_iv_k})}.
\end{equation}

Then, the output of the GAT layer is computed by a linear combination of feature vectors:
\begin{equation}
    x'_i=\sigma \left(\sum_{v_j\in \mathcal{N}_{v_i}}\alpha_{v_iv_j}\textbf{W}x_j\right),
    \label{eq:singlehead}
\end{equation}
where $\sigma(Z)$ is a nonlinear activation function (e.g., rectified linear unit \cite{nair2010rectified}, Sigmoid \cite{han1995influence}, etc.) with input and output vectors, $Z$ and $\sigma(Z)$, of length $F'$, respectively. 

To stabilize the learning process of self-attention, GAT uses a multi-head attention mechanism similar to \cite{vaswani2017attention}, where each head is an independent attention function and $K$ heads can execute the transformation of Eq.~(\ref{eq:singlehead}) in parallel. Thus, an alternative GAT output $x'_i$ can be a concatenation of feature vectors generated by $K$ heads:
\begin{equation}
    x'_i=\big\Vert^{K}_{k=1}\sigma\left(\sum_{v_j\in \mathcal{N}_{v_i}}\alpha^{(k)}_{v_iv_j}\textbf{W}^{(k)}x_{j}\right),
\end{equation}
where $\Vert^{K}_{k=1}$ is the concatenation operator of $K$ vectors, $\alpha^{(k)}_{v_1v_j}$ is the normalized attention coefficient computed by the $k$-th attention head, and $\textbf{W}^{(k)}$ is the $k$-th parameterized weight matrix.

\begin{figure}[t!]
    \centering
    \subfigure[][{\small unit star graph $g_{v_1}\!\subseteq\!G$}]{
        \includegraphics[height=3.4cm]{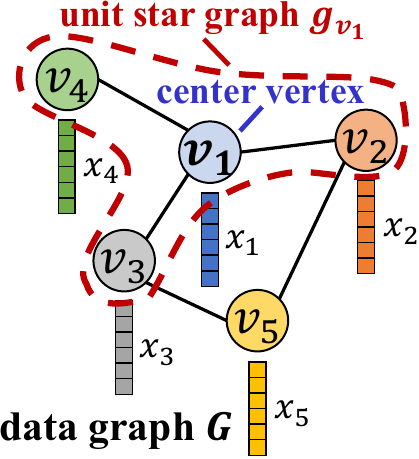}
        \label{subfig:unit_star_graph}
    }\qquad\qquad
    \subfigure[][{\small star substructure $s_{v_1}\!\subseteq\!g_{v_1}$}]{
        \includegraphics[height=3.4cm]{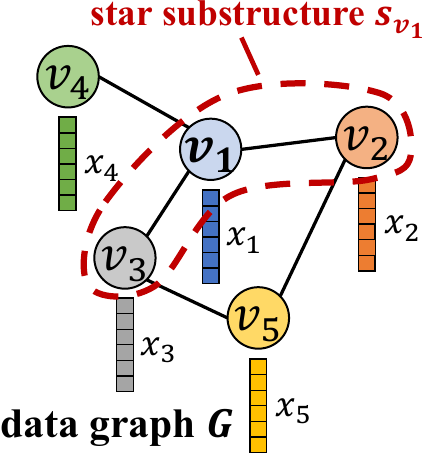}
        \label{subfig:substructure}
    }
    \caption{Illustration of the input for the GNN model.}
    \label{fig:graphdata}
\end{figure}

\underline{\it Readout layer:} A readout layer \cite{ying2018hierarchical,zhang2018end} generates an embedding vector, $y_i$, for the entire unit star graph $g_{v_i}$, by summing up feature vectors $x_j'$ of all vertices $v_j$ in $g_{v_i}$, which is permutation invariant. That is, we obtain:

\begin{equation}
    y_i =\sum_{\forall v_j \in V(g_{v_i})}x_j'.\label{eq:eq5}
\end{equation}

\underline{\it Fully Connected Layer:} A fully connected layer performs a nonlinear transformation of $y_i$ (given in Eq.~(\ref{eq:eq5})) via an activation function $\sigma(\cdot)$ and obtains the embedding vector, $o(g_{v_i})$, of size $d$ for vertex $v_i$. That is, we have:
\begin{equation}
    o(g_{v_i})=\sigma(\mathbb{W}y_i),\label{eq:eq6}
\end{equation}
where $\sigma(\cdot)$ is an activation function and $\mathbb{W}$ is a $d \times (K \cdot F')$ weight matrix. 

In this paper, we used the Sigmoid activation function
$\sigma(x)=\frac{1}{1+e^{-x}}$ $(\in (0,1))$
, where $e$ is a mathematical constant. We will leave the study of using other activation functions as our future work.

\noindent{\bf Output Layer.} In this layer, we output $o(g_{v_i})$ (given in Eq.~(\ref{eq:eq6})) as the embedding vector, $o(v_i)$, of center vertex $v_i$.

\subsection{Node Dominance Embedding}
\label{subsec:node_embedding}

In this subsection, we propose an effective \textit{GNN-based node dominance embedding} approach to train the GNN model (discussed in Section \ref{subsec:GNN}), such that our node embedding via GNN can reflect the subgraph relationship between unit star graph $g_{v_i}$ and its star substructures $s_{v_i}$ in the embedding space. Such a node dominance embedding can enable exact subgraph matching.

\noindent {\bf Loss Function.} Specifically, for the GNN model training, we use a training data set $D_j$ that contains \textit{all} pairs of unit star graphs $g_{v_i}$ and their substructures $s_{v_i}$ (for all vertices $v_i$ in the subgraph partition $G_j$ of data graph $G$). 

Then, we design a loss function $\mathcal{L}(D_j)$ over a training data set $D_j$ as follows:
\begin{equation}
    \mathcal{L}(D_j)=\sum_{\forall (g_{v_i},s_{v_i})\in D_j}\big\Vert max\big\{0,o(s_{v_i})-o(g_{v_i})\big\}\big\Vert_2^2,
    \label{eq:loss}
\end{equation}
where $o(g_{v_i})$ and $o(s_{v_i})$ are the embedding vectors of a unit star graph $g_{v_i}$ and its substructures $s_{v_i}$, respectively, and $||$$\cdot$$||_2$ is the $L_2$-norm.

\noindent {\bf GNN Model Training.} To guarantee that we do not lose any candidate vertices for exact subgraph matching, we train the GNN model, until the loss function $\mathcal{L}(\cdot)$ (given in Eq.~(\ref{eq:loss})) is equal to 0. 

Intuitively, from Eq.~(\ref{eq:loss}), when the loss function $\mathcal{L}(D_j)$ $ = 0$ holds, the embedding vector $o(s_{v_i})$ of any star substructure $s_{v_i}$ is \textit{dominating} \cite{Borzsonyi01} (or equal to) that, $o(g_{v_i})$, of its corresponding unit star graph $g_{v_i}$. In the sequel, we will simply say that $o(s_{v_i})$ \textit{dominates} $o(g_{v_i})$ (denoted as $o(s_{v_i}) \preceq o(g_{v_i})$), if $o(s_{v_i})[t] \leq o(g_{v_i})[t]$ for all $0\leq t < d$ (including the case where $o(s_{v_i}) = o(g_{v_i})$).

In other words, given the subgraph relationship between $s_{v_i}$ and $g_{v_i}$ (i.e., $s_{v_i} \subseteq g_{v_i}$), our \textit{GNN-based node dominance embedding} approach can always guarantee that their embedding vectors $o(s_{v_i})$ and $o(g_{v_i})$ follow the dominance relationship (i.e.,  $o(s_{v_i}) 	\preceq o(g_{v_i})$).

Algorithm~\ref{alg2} illustrates the training process of a GNN model $M_j$ over a subgraph parititon $G_j$ ($1\leq j\leq m$). 
For each vertex $v_i \in G_j$, we obtain all (shuffled) pairs of unit star subgraphs $g_{v_i}$ and their star substructures $s_{v_i}$, which result in a training data set $D_j$ (lines 1-5).
Then, for each training iteration, we use a training epoch to update model parameters (lines 6-10), and a testing epoch to obtain the loss $L_e$ (lines 11-15). The training loop of the GNN model $M_j$ terminates until the loss $L_e$ equals 0 (line 16).

To obtain GNN-based node embeddings with high pruning power, we train multiple GNN models with $b$ sets of random initial weights (line 17), and select the one with zero loss and the highest quality of the generated node embeddings (i.e., the lowest expected query cost, $Cost_{M_j}$, as discussed in Eq.~(\ref{eq:cost_model_select}) below; line 18). Finally, we return the best trained GNN model $M_j$ (line 19).

\begin{algorithm}[htp]
\caption{\bf GNN Model Training}
\label{alg2}
\KwIn{
    i) a subgraph partition $G_j \subseteq G$;
    ii) a training data set $D_j$, and;
    iii) a learning rate $\eta$
}
\KwOut{
    a trained GNN model $M_j$
}

\SetKwRepeat{Do}{do}{while}%

\tcp{\bf generate a training data set $D_j$}
$D_j = \emptyset$\\
\For{each vertex $v_i\in V(G_j)$}{
    obtain the unit star subgraph $g_{v_i}$ and its star substructures $s_{v_i}$ \\
    add all pairs $(g_{v_i}, s_{v_i})$ to $D_j$
}
randomly shuffle pairs in $D_j$\\

\tcp{\bf train a GNN model $M_j$ until the loss equals to $0$}

\Do{$(L_e=0)$}{
    \tcp{the training epoch}
    \For{each batch $B \subseteq D_j$}{
        obtain embedding vectors of pairs in $B$ by $M_j$\\
        compute the loss function $\mathcal{L}(B)$ of $M_j$ by Eq.~(\ref{eq:loss})\\
        $M_j.update(\mathcal{L}(B),\eta)$\\
    }
    \tcp{the testing epoch}
    $L_e = 0$\\
    \For{each batch $B \subseteq D_j$}{
        obtain embedding vectors of pairs in $B$ by $M_j$\\
        compute the loss function $\mathcal{L}(B)$ of $M_j$ by Eq.~(\ref{eq:loss})\\
        $L_e\leftarrow L_e+\mathcal{L}(B)$\\
    }
}
repeat lines 6-16 to train $b$ GNN models with random initial weight parameters to avoid local optimality\\
select the best model $M_j$ (satisfying $L_e=0$) with the smallest expected query cost $Cost_{M_j}$ (given in Eq.~(\ref{eq:cost_model_select}))\\
\Return the best trained GNN model $M_j$\\
\end{algorithm}

\noindent{\bf Complexity Analysis of the GNN Training.}
In Algorithm \ref{alg2}, we extract all star substructures $s_{v_i}$ from each unit star subgraph $g_{v_i}$ (line 3), which are used for GNN training. Thus, the time complexity of the GNN training is given by $O\big(\sum_{v_i\in V(G_j)}2^{deg(v_i)}\big)$, where $deg(v_i)$ is the degree of vertex $v_i$.
On the other hand, we train $b$ GNN models (as mentioned in Section~\ref{subsec:GNN}) with different initial weights to achieve high pruning power (line 17). For each GNN model, the time complexity of the computation on the GAT layer is $O((|V(g_{v_i})|+|E(g_{v_i})|)\cdot F')$, and that for the fully connected layer is $O(F'\cdot d)$. 
In this paper, for the input unit star subgraph $g_{v_i}$ (or star substructure $s_{v_i}$), we have $|V(g_{v_i})|=deg(v_i)+1$ and $|E(g_{v_i})|=deg(v_i)$. Therefore, the total time complexity of the GNN training is given by $O\big(b \cdot \sum_{v_i\in V(G_j)}2^{deg(v_i)}\cdot (2\cdot deg(v_i)+d+1)\cdot F'\cdot \mathbb{N}\big)$, where $\mathbb{N}$ is the number of training epochs until zero loss.

Since vertex degrees in real-world graphs usually follow the power-law distribution \cite{barabasi1999emergence,newman2005power,barabasi2003scale,albert2002statistical,dorogovtsev2003evolution}, only a small fraction of vertices have high degrees. For example, in US Patents graph data \cite{sun2020memory}, the average vertex degree is around 9, which incurs about 512 ($=2^9$) star substructures per vertex. Thus, this is usually acceptable for offline GNN training on a single machine.

In practice, for vertices $v_i$ with high degrees $deg(v_i)$ (e.g., greater than a threshold $\theta$), instead of enumerating a large number of $2^{deg(v_i)}$ star substructures, we simply set their embeddings $o(v_i)$ to all-ones vectors $\mathbb{1}$. The rationale behind this is that embeddings of those high-degree vertices often have low pruning power, and it would be better to directly consider them as vertex candidates without costly star substructure enumeration/training. 
This way, our GNN training complexity is reduced to $O\big(b \cdot \sum_{v_i\in V(G_j),deg(v_i)\leq \theta}\\2^{deg(v_i)}\cdot (2\cdot deg(v_i)+d+1)\cdot F'\cdot \mathbb{N}\big)$.

\noindent {\bf Usage of the Node Dominance Embedding for Exact Subgraph Matching.} Intuitively, with the node dominance embeddings, we can convert exact subgraph matching into the dominance search problem in the embedding space. Specifically, if a vertex $q_i$ in the query graph $q$ matches with a vertex $v_i$ in some subgraph $g$ of $G$, then it must hold that $o(g_{q_i}) \preceq o(g_{v_i})$, where $o(g_{q_i})$ is an embedding vector of vertex $q_i$ (and its 1-hop neighbors) in query graph $q$ via the trained GNN. 

This way, we can always use the embedding vector $o(g_{q_i})$ of $q_i$ to retrieve candidate vertices $v_i$ in $G$ (i.e., those vertices with embedding vectors $o(g_{v_i})$ dominated by $o(g_{q_i})$ in the embedding space). Our trained GNN with zero training loss can guarantee that vertices $v_i$ dominated by $o(g_{q_i})$ will not miss any truly matching vertices (i.e., with 100\% recall ratio). This is because all possible query star structures $g_{q_i}$ have already been offline enumerated and trained during the GNN training process (i.e., $g_{q_i} \equiv g_{s_i}$).

\underline{\it The Quality of the Generated Node Embeddings:} Note that, $b$ GNN models can produce at most $b$ sets of node embeddings that can fully satisfy the dominance relationships (i.e., with zero loss). Thus, in line 18 of Algorithm \ref{alg2}, we need to select one GNN model with the best \textit{node embedding quality}, which is defined as the expected query cost $Cost_{M_j}$ (or the expected number of embedding vectors, $o(g_{v_x})$, generated from unit star subgraphs dominated by that, $o(g_{q_i})$, of query star subgraphs) below:
\begin{equation}
    Cost_{M_j}=\frac{\sum_{\forall g_{q_i}} \left|\left\{ \forall v_x\in V(G) \text{ | } o(g_{q_i}) \preceq o(g_{v_x}) \right\}\right|}{\# \text{ of query unit star subgraphs } g_{q_i}},
    \label{eq:cost_model_select}
\end{equation}
where  $g_{q_i}$ are all the possible query unit star subgraphs (i.e., all star substructures $s_{v_x}$ extracted from $g_{v_x}$), and the $\#$ of possible query unit star subgraphs $g_{q_i}$ is given by $|D_j|$.

\begin{example} 
{\it Figure~\ref{fig:nodeembedding} illustrates an example of the node dominance embedding between data graph $G$ and query graph $q$. Each vertex $v_i \in V(G)$ has a 2D embedding vector $o(v_i)$ via the GNN, whereas each query vertex $q_i\in V(q)$ is transformed to a 2D vector $o(q_i)$. For example, as shown in the tables, we have $o(v_1) = (0.78, 0.79)$  and $o(q_1) = (0.62, 0.61)$. We plot the embedding vectors of vertices in a 2D embedding space on the right side of the figure. We can see that $o(q_1)$ is dominating $o(v_1)$ and $o(v_2)$, which implies that $g_{q_1}$ is potentially a subgraph of (i.e., matching with) $g_{v_1}$ and $g_{v_2}$. On the other hand, since $o(q_1)$ is not dominating $o(v_3)$ in the 2D embedding space, query vertex $q_1$ cannot match with vertex $v_3$ in the data graph $G$. \qquad $\blacksquare$}
\end{example}

\noindent{\bf Multi-GNN Node Dominance Embedding.}
In order to further reduce the number of candidate vertices $v_i$ that match with a query vertex $q_i\in V(q)$ (or enhance the pruning power), we use multiple independent GNNs to embed vertices. Specifically, for each subgraph partition $G_j$, we convert the label of each vertex to a new randomized label (e.g., via a hash function or using the vertex label as a seed to generate a pseudo-random number). This way, we can obtain a new subgraph $G_j'$ with the same graph structure, but different vertex labels, and train/obtain a new GNN model $M_j'$ with new embeddings $o'(v_i)$ of vertices $v_i \in V(G_j')$.

\begin{figure}[!htp]
    \centering
    \includegraphics[scale=0.35]{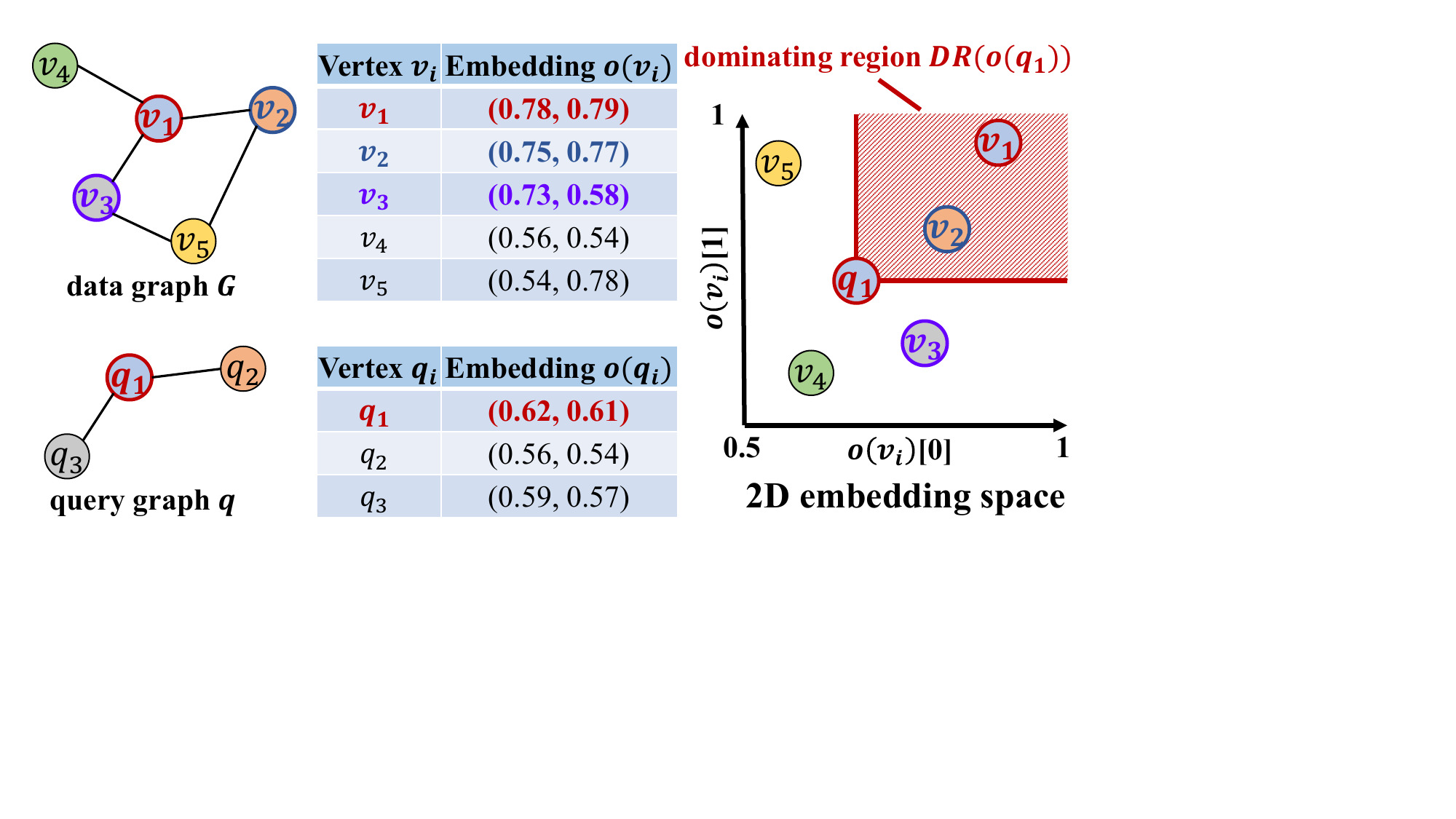}
    \caption{An example of node dominance embedding.}
    \label{fig:nodeembedding}
\end{figure}

With multiple versions of randomized vertex labels, we can obtain different vertex embeddings $o'(v_i)$, where the subgraph relationships between the unit star subgraph and its star substructures also follow the dominance relationships of their embedding vectors. 

Therefore, for each vertex $v_i$, we can compute different versions (via different GNNs) of embedding vectors (e.g., $o(v_i)$ and $o'(v_i)$), which can be used together for retrieving candidate vertices $v_i$ that match with a query vertex $q_i$ with higher pruning power.

To simplify the notations, in the following discussion, we will present the embeddings using a single GNN. Nevertheless, our proposed approach can be easily generalized to multi-GNNs.

\noindent{\bf Convergence Analysis.} To guarantee no false dismissals for online subgraph matching, we need to offline train a GNN model $M_j$, until the training loss equals zero (i.e., $\mathcal{L}(D_j) = 0$). Below, we give the convergence analysis of the GNN model training, including i) parameter settings for achieving sufficient GNN capacity, ii) the existence of GNN parameters to achieve the training goal, and iii) the target accessibility of the GNN training.

\underline{\it GNN Model Capacity for Zero-Loss Training:} As mentioned in \cite{goodfellow2016deep}, the GNN model needs enough \textit{capacity} to achieve zero loss on the training data sets. In particular, the capacity of a GNN model $M_j$ \cite{loukas2020graph} can be defined as $M_j.cap=M_j.dep\times M_j.wid$, where $M_j.dep$ and $M_j.wid$ are the depth (i.e., $\#$ of layers) and width (i.e., the maximum dimension of intermediate node embeddings in all layers) of the GNN model $M_j$, respectively. 

According to \cite{loukas2020graph}, the GNN model's capacity to accomplish graph tasks with zero-loss training needs to satisfy the following condition:
\begin{equation}
    M_j.cap \geq \tilde{\Omega}(|V(\cdot)|^\delta),
\end{equation}
where $\delta \in [1/2,2]$ is an exponent factor reflecting the complexity of the task (e.g., solving some NP-hard problems necessitates $\delta=2$), $|V(\cdot)|$ is the size of the input graph, and $\tilde{\Omega}(\cdot)$ indicates that as the graph size increases, the GNN model capacity also increases with the same rate (up to a logarithmic factor).

In our subgraph matching task, the GNN model aims to learn a partial order between a unit star subgraph $g_{v_i}$ and its star substructure $s_{v_i}$ in each pair of the training data set $D_j$. Since this partial order exists in each individual pair, the GNN capacity is only relevant to the maximum input size of unit star subgraphs, i.e., $\max_{\forall g_{v_i}\in D_j}\{|V(g_{v_i})|\}^\delta$. Therefore, by overestimating the $\delta$ value (i.e., $\delta=2$), we have the lower bound of the GNN capacity $M_j.cap$ to solve our partial-order learning problem below:
\begin{equation}
    M_j.cap\geq\max_{\forall g\in D_j}\left\{|V(g)|^{2}\right\}.
    \label{eq:cap_inequality}
\end{equation}

Note that, for our task of learning partial order, counter-intuitively, the GNN model capacity is constrained theoretically by the input graph size, instead of the size of the training data set $D_j$ \cite{loukas2020graph}.

Specifically, the GNN model $M_j$ we use in this paper (as shown in Figure \ref{fig:gnnmodel}) has 3 hidden layers (i.e., GNN depth $M_j.dep = 3$). If we set $F = 1$, $K = 3$, $F' = 32$, and $d = 2$ by default, then we have the GNN width $M_j.wid = K\cdot F' = 96$. As a result, we have the model capacity $M_j.cap=M_j.dep \times M_j.wid = 3\times 96=288$. On the other hand, since we set the degree threshold $\theta=10$ for node embeddings (as discussed in the complexity analysis above), the maximum input size of unit star subgraph does not exceed 11 (i.e., $max_{\forall g\in D_j}(|V(g)|) = 11$). Thus, we can see that Inequality~(\ref{eq:cap_inequality}) holds (i.e., $288 = M_j.cap\geq\max_{\forall g\in D_j}\left\{|V(g)|^{2}\right\} = 11^2$ holds), which implies that our GNN model $M_j$ has enough capacity to achieve zero loss on the training data set $D_j$.

\underline{\it The Existence of the GNN Model that Meets the Training Target:} \\Next, we prove that there exists at least one set of GNN model parameters that make the loss equal to zero over the training data.

In the following lemma, we give a special case of GNN model parameters, which can ensure the dominance relationship between node embedding vectors $o(g_{v_i})$ and $o(s_{v_i})$ (satisfying $o(s_{v_i})\preceq o(g_{v_i})$) of any two star subgraphs $g_{v_i}$ and $s_{v_i}$ (satisfying $s_{v_i}\subseteq g_{v_i}$).

\begin{lemma} {\bf (A Special Case of GNN Model Parameter Settings)}
    For a unit star subgraph $g_{v_i}$ and its star substructure $s_{v_i}$ ($\subseteq g_{v_i}$), their GNN-based node embedding vectors satisfy the dominance condition that: $o(s_{v_i})\preceq o(g_{v_i})$, if values of the weight matrix $\mathbb{W}$ (in Eq.~(\ref{eq:eq6})) in the fully connected layer are all zeros, i.e., $\mathbb{W} = \mathbf{0}$.
    \label{lemma:special_case}
\end{lemma}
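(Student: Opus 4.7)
The plan is to prove this by direct computation: under the assumption $\mathbb{W} = \mathbf{0}$, both embedding vectors $o(g_{v_i})$ and $o(s_{v_i})$ collapse to the same constant vector, independent of the input star graph, so the dominance relation holds trivially (with equality).

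First, I would trace through Eq.~(\ref{eq:eq6}) for the unit star graph: $o(g_{v_i}) = \sigma(\mathbb{W} y_i)$, where $y_i$ is the readout vector produced by Eq.~(\ref{eq:eq5}). Substituting $\mathbb{W} = \mathbf{0}$ gives $\mathbb{W} y_i = \mathbf{0} \in \mathbb{R}^d$ regardless of the intermediate representations computed by the GAT and readout layers. Because the paper fixes $\sigma$ to be the Sigmoid function with $\sigma(0) = \tfrac{1}{2}$, this yields $o(g_{v_i}) = (\tfrac{1}{2},\tfrac{1}{2},\ldots,\tfrac{1}{2}) \in \mathbb{R}^d$. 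Next, I would repeat the same argument for the substructure $s_{v_i}$: its readout vector $y_i^{(s)}$, whatever it happens to be, is annihilated by $\mathbb{W} = \mathbf{0}$, so $o(s_{v_i}) = \sigma(\mathbf{0}) = (\tfrac{1}{2},\ldots,\tfrac{1}{2})$ as well.

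Finally, since $o(s_{v_i})[t] = o(g_{v_i})[t] = \tfrac{1}{2}$ for every coordinate $t \in \{1,\ldots,d\}$, the definition of dominance (which the paper explicitly admits equality as a valid case, $o(s_{v_i})[t] \leq o(g_{v_i})[t]$ for all $t$) is satisfied, and we conclude $o(s_{v_i}) \preceq o(g_{v_i})$. There is essentially no hard step here; the only subtlety worth flagging in the write-up is that the argument is independent of the specific GAT parameters, readout aggregation, and input features of $g_{v_i}$ and $s_{v_i}$, which is precisely what makes the zero-weight configuration a valid witness to the existence of parameters achieving zero loss. The lemma therefore functions as a trivial sanity-check witness in the larger convergence argument, confirming that the feasible set of parameter configurations meeting the training target is nonempty, and the subsequent material presumably argues that the GNN training can reach a nontrivial such point with useful pruning power.
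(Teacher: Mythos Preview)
Your proposal is correct and mirrors the paper's own proof almost verbatim: both observe that $\mathbb{W}=\mathbf{0}$ forces $\mathbb{W}y_i=\mathbf{0}$ regardless of the input, so the Sigmoid outputs $o(g_{v_i})[t]=o(s_{v_i})[t]=0.5$ in every coordinate and dominance holds with equality. Your additional remarks about independence from the GAT parameters and the lemma's role as an existence witness are accurate and match the paper's surrounding discussion.
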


\begin{proof}
    If all the values in the weight matrix $\mathbb{W}$ are zeros, in Eq.~(\ref{eq:eq6}), for any $y_i$, we have $\mathbb{W}y_i =\mathbf{0}$. With a Sigmoid activation function $\sigma(Z)=\frac{1}{1+e^{-Z}}$, any dimension of the final output, $o(g_{v_i})$ or $o(s_{v_i})$, in Eq.~(\ref{eq:eq6}) is always equal to $0.5$ for arbitrary input 
    Therefore, we have: $0.5=o(s_{v_i})[t]\leq o(g_{v_i})[t]=0.5$, for all dimensions $1\leq t\leq d$. In other words, when $\mathbb{W} = \mathbf{0}$ holds, any unit star subgraph $g_{v_i}$ and its star substructure $s_{v_i}$ ($\subseteq g_{v_i}$) have their node embedding vectors satisfying the dominance condition (i.e., $o(s_{v_i})\preceq o(g_{v_i})$). 
\end{proof}

In the special case of Lemma \ref{lemma:special_case}, the loss function $\mathcal{L}(D_j)$ given in Eq.~(\ref{eq:loss}) over all the training pairs in $D_j$ is always equal to 0. Note that, although there is no pruning power in this special case (i.e., all the node embedding vectors are the same, which preserves the dominance relationships), it at least indicates that there exists a set of weight parameters (in Lemma \ref{lemma:special_case}) that can achieve zero loss. 

In reality (e.g., from our experimental results), there are multiple possible sets of GNN parameters that can reach zero training loss (e.g., the special case of $o(s_{v_i})=o(g_{v_i})$ given in Lemma~\ref{lemma:special_case}). This is because we are looking for embedding vectors that preserve dominance relationships between individual pairs $(g_{v_i}, s_{v_i})$ in $D_j$ (rather than seeking for a global dominance order for all star subgraphs). 

\underline{\it The Target Accessibility of the GNN Training:}
Up to now, we have proved that we can guarantee enough GNN capacity for zero-loss training via parameter settings, and GNN parameters that can achieve zero loss exist. We now illustrate the accessibility of our GNN training that can meet the training target (i.e., the loss equals zero).

First, based on \cite{loukas2020graph}, if a GNN model $M_j$ over connected attributed graphs has enough capacity, then the GNN model can approach the optimal solution infinitely. That is, we have:
\begin{equation}
    |f_j(\cdot)-f_{opt}(\cdot)|\rightarrow 0,
    \label{eq:optimal}
\end{equation}
where $f_j(\cdot)$ is a non-linear function with input $x_i$ and output $o(v_i)$ that learned by the GNN model $M_j$, and $f_{opt}(\cdot)$ is an optimal function for GNN that achieves zero loss.

Moreover, from \cite{du2019gradient}, with randomized initial weights,  first-order methods (e.g., \textit{Stochastic Gradient Descent} (SGD) \cite{kingma2015adam} with the Adam optimizer used in our work) can achieve zero training loss, at a linear convergence rate. That is, it can find a solution with $\mathcal{L}(\cdot)\leq \epsilon$ in $O(log(1/\epsilon))$ epochs, where $\epsilon$ is the desired accuracy.

Note that, to handle some exceptional cases that zero training loss cannot be achieved within a limited number of epochs, we may remove those relevant pairs (causing the loss to be non-zero) and train a new GNN on them, which we will leave as our future work.

In summary, we can train our designed GNN model $M_j$ for node dominance embeddings, and the training process can converge to zero loss. 

\begin{figure*}[htp]
\centering
\subfigure[][{\small \# of training pairs ($|V(G_j)|$$=$$500K$)}]{                    
\scalebox{0.2}[0.2]{\includegraphics{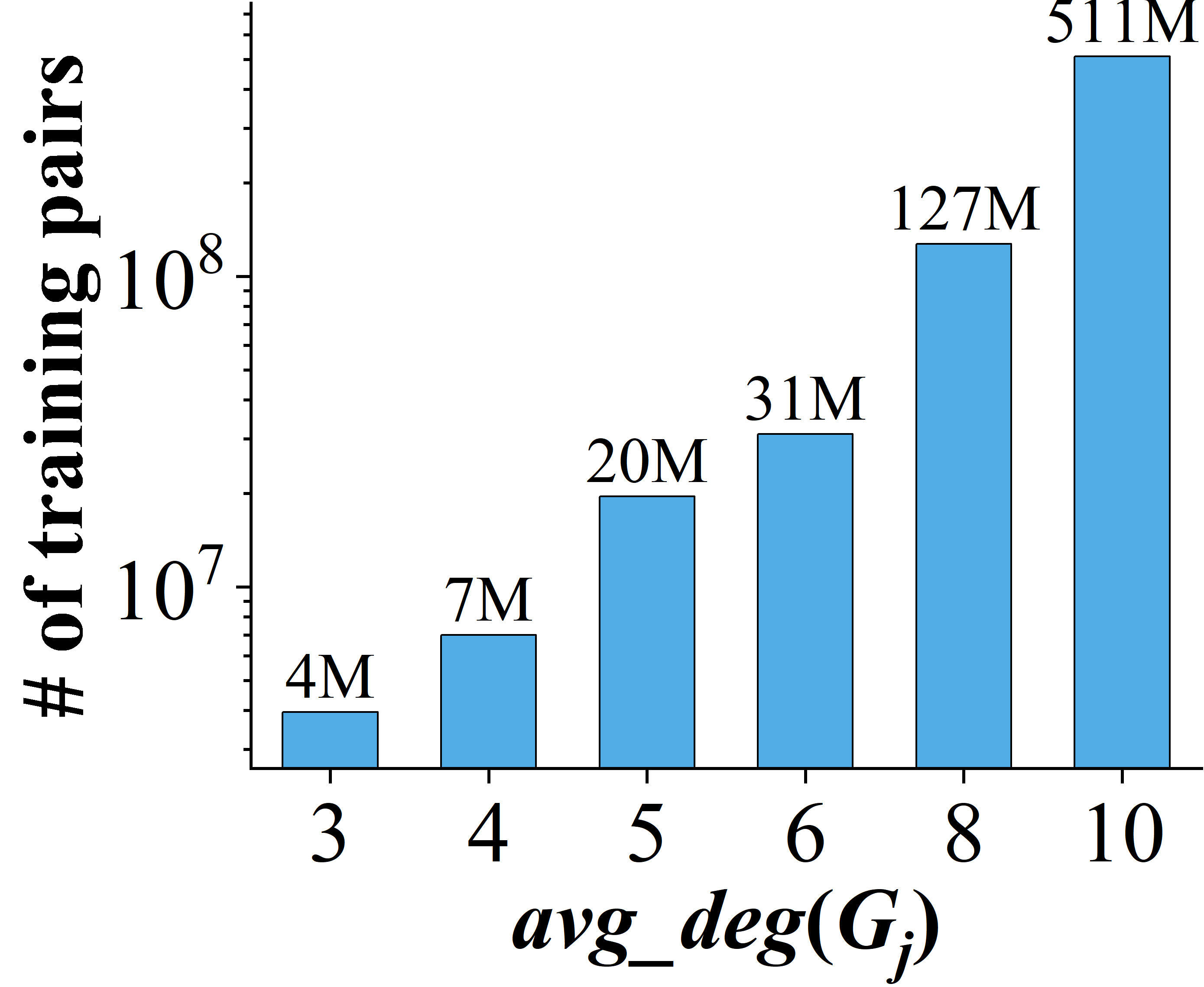}}\label{subfig:training_pair_deg}}\qquad
\subfigure[][{\small convergence  ($|V(G_j)|$$=$$500K$)}]{
\scalebox{0.2}[0.2]{\includegraphics{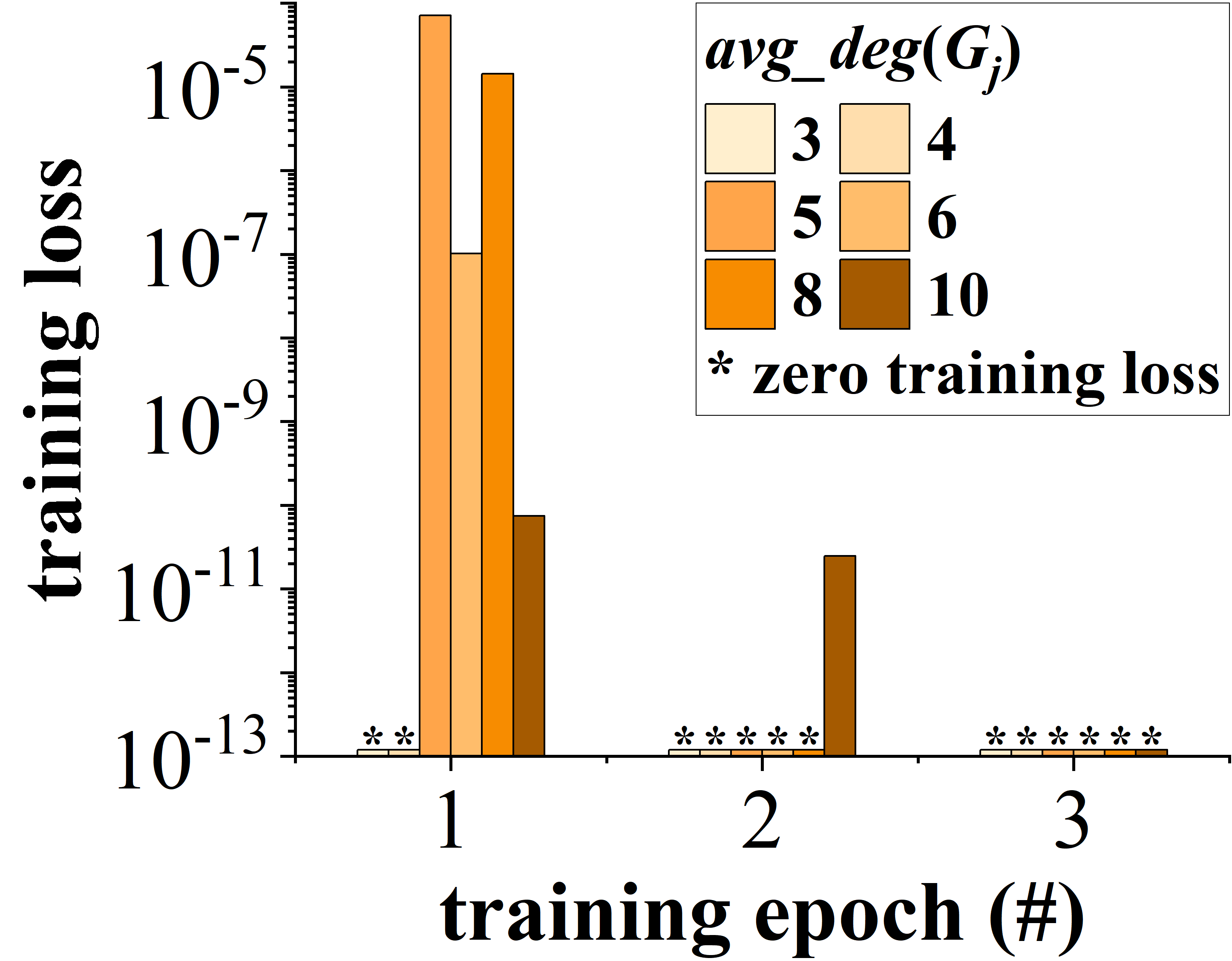}}\label{subfig:training_loss_deg}}\qquad
\subfigure[][{\small training time ($|V(G_j)|$$=$$500K$)}]{                    
\scalebox{0.2}[0.2]{\includegraphics{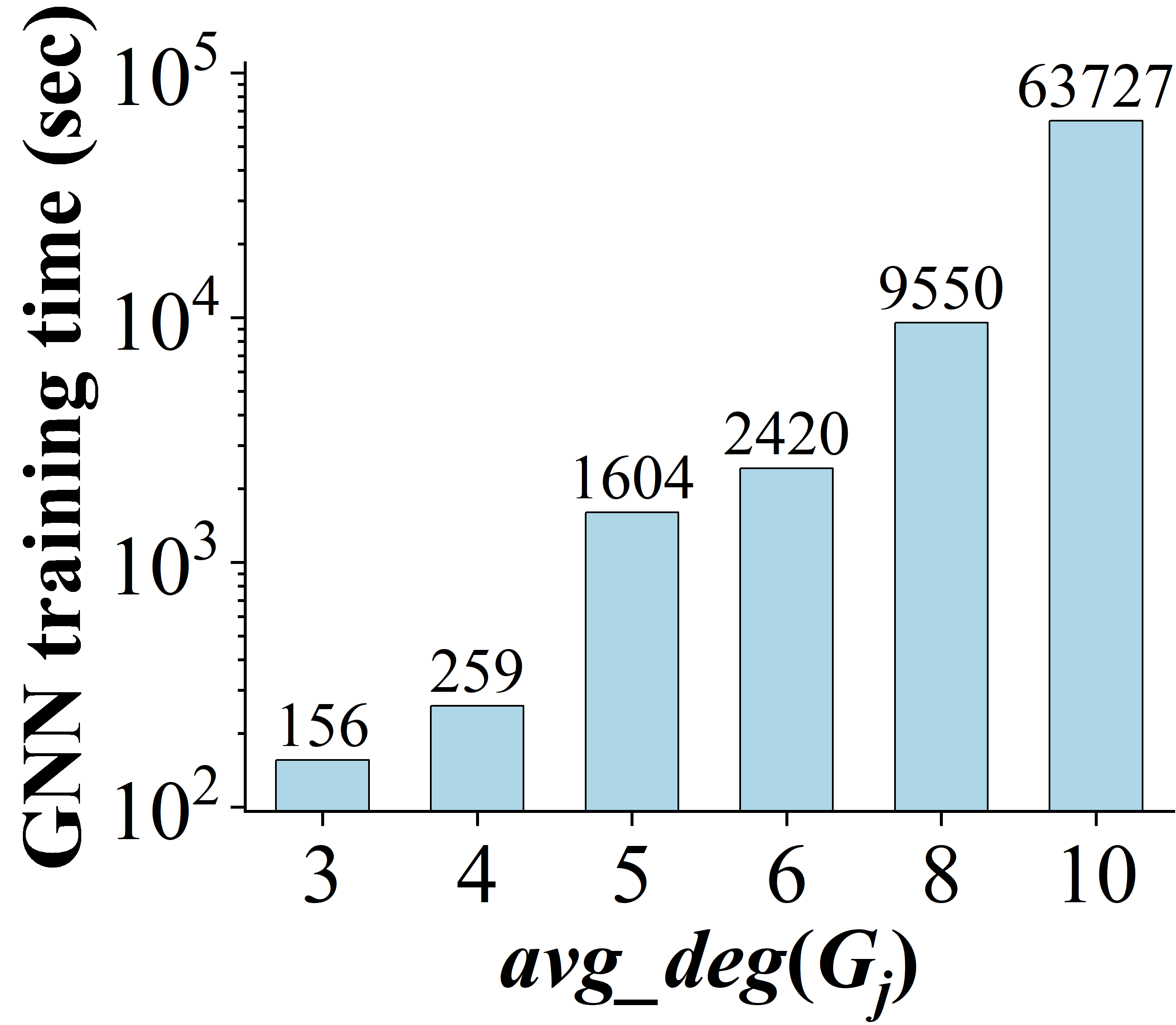}}\label{subfig:training_time_deg}}\\
\subfigure[][{\small \# of training pairs ($avg\_deg(G_j)$$=$$5$)}]{                    
\scalebox{0.2}[0.2]{\includegraphics{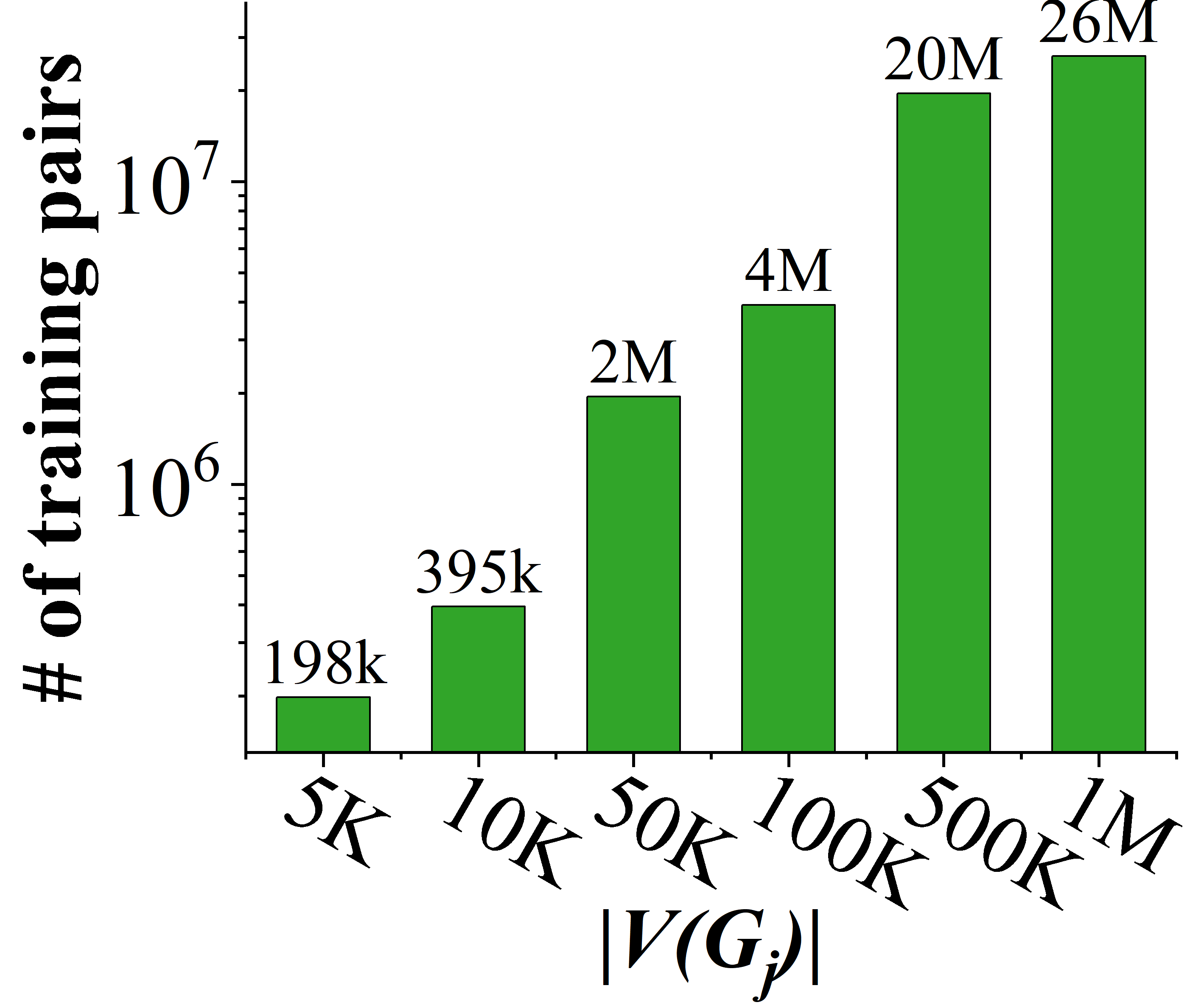}}\label{subfig:training_pair_size}}\qquad
\subfigure[][{\small convergence ($avg\_deg(G_j)$$=$$5$)}]{
\scalebox{0.2}[0.2]{\includegraphics{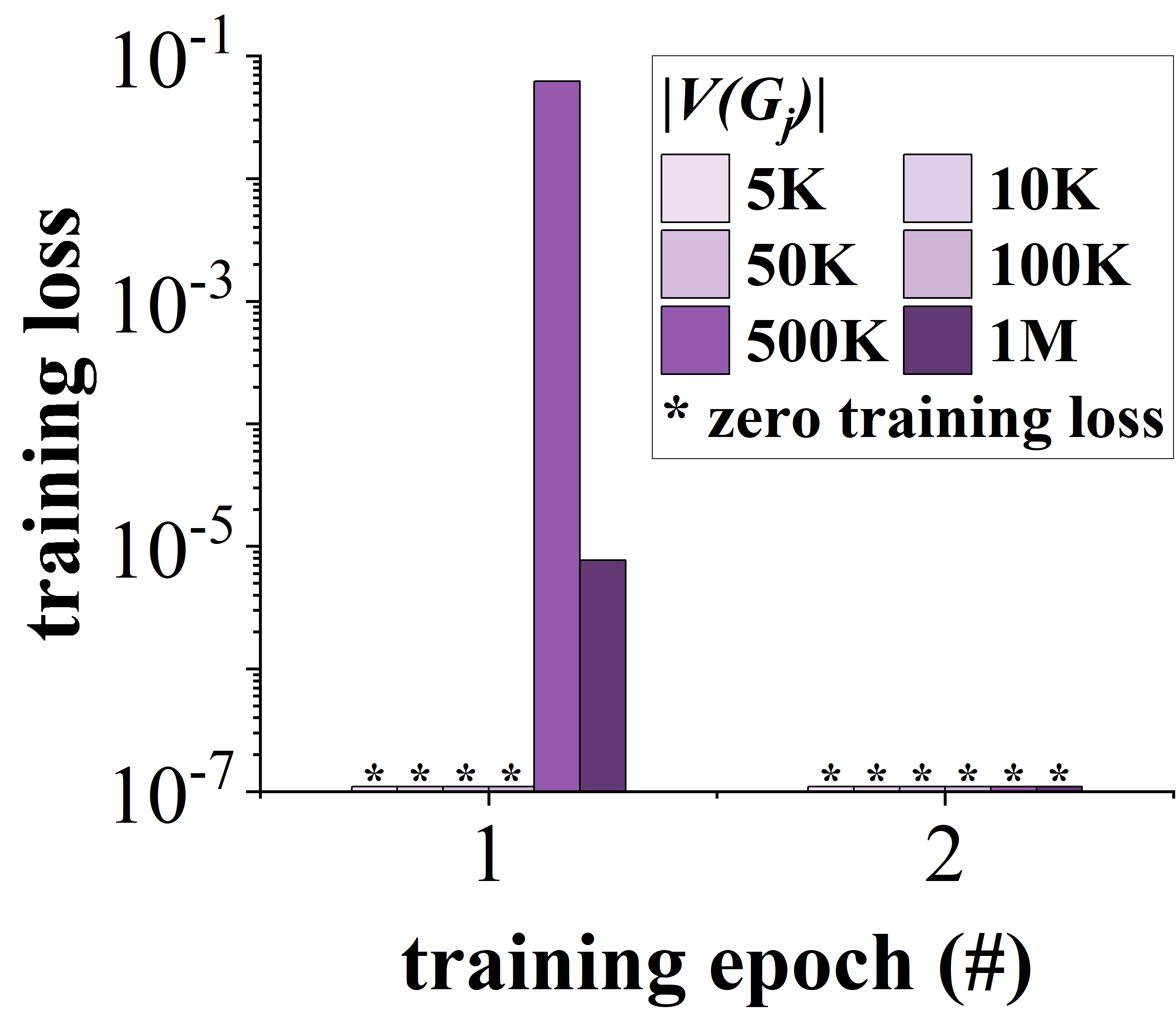}}\label{subfig:training_loss_size}}\qquad
\subfigure[][{\small training time ($avg\_deg(G_j)$$=$$5$)}]{
\scalebox{0.2}[0.2]{\includegraphics{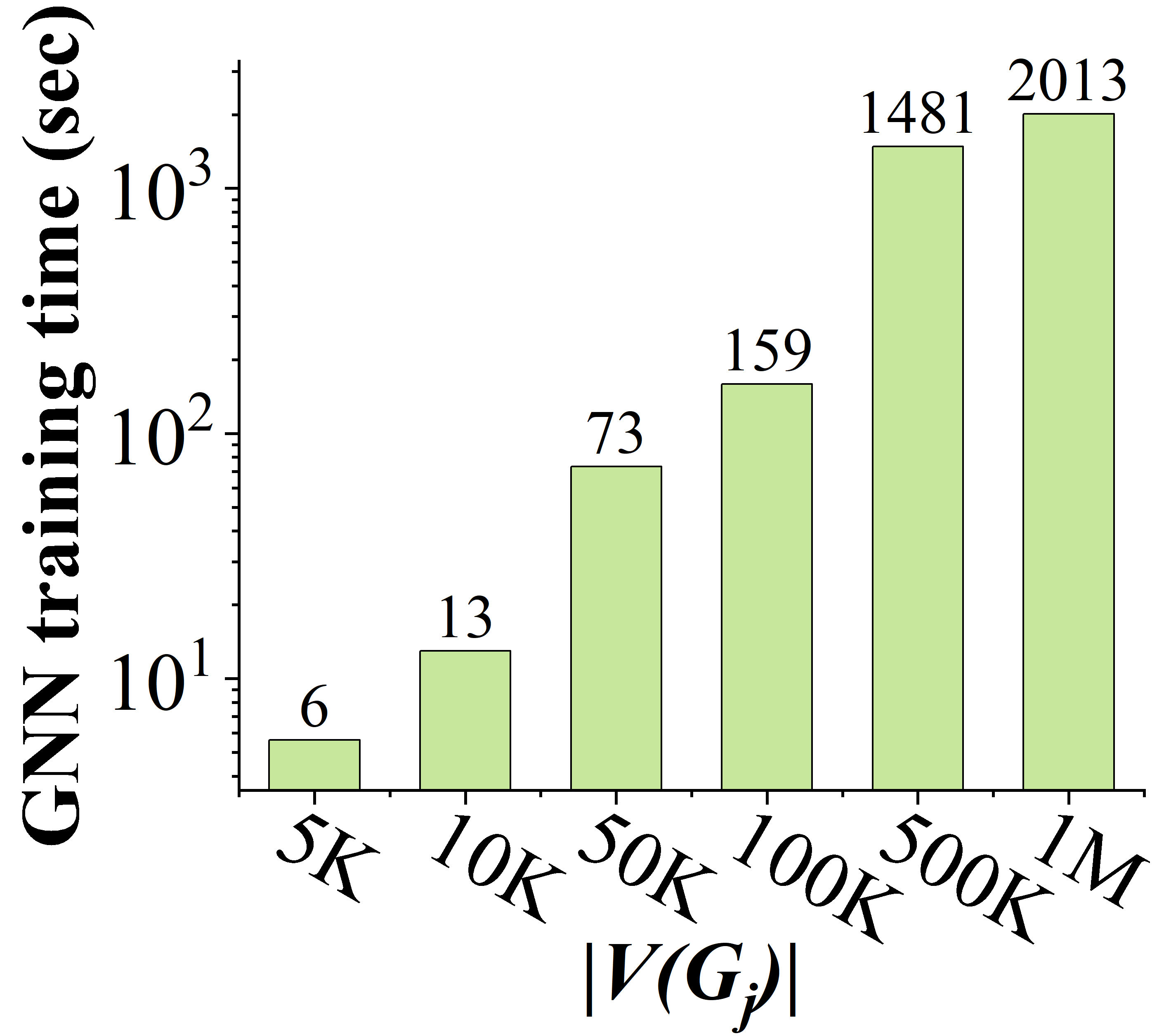}}\label{subfig:training_time_size}}
\caption{Illustration of the GNN training performance for node dominance embeddings.}
\label{fig:GNN_convergence}
\end{figure*}

\noindent{\bf The GNN Training Scalability w.r.t. Node Dominance Embedding.} We train a GNN model (with $F=1$, $K=3$, $F'=32$, and $d=2$) over large training data sets $D_j$ (containing pairs $(g_{v_i}, s_{v_i})$), where by default the vertex label domain size $|\Sigma|=500$, the default average vertex degree, $avg\_deg(G_j)=5$, $|V(G_j)|=500K$, the learning rate of the Adam optimizer $\eta=0.001$, and batch size $1K \sim 4K$.

Figure~\ref{subfig:training_pair_deg} illustrates the number of training pairs $(g_{v_i}, s_{v_i})$ that the GNN model (in Figure~\ref{fig:gnnmodel}) can support (i.e., the loss function $\mathcal{L}(D_j)$ achieves zero), where we vary the average vertex degree, $avg\_deg(G_j)$ from 3 to 10.
From our experimental results, our GNN model can learn as many as $\geq 511M$ pairs for a graph with $500K$ vertices and an average degree equal to 10. 

Figure~\ref{subfig:training_loss_deg} reports the convergence performance of our GNN-PE approach, where $avg\_deg(G_j)$ varies from 3 to 10. As described in Algorithm~\ref{alg2}, we train/update the parameters of the GNN model after each batch of a training epoch, and evaluate the loss of the GNN model after training all batches in $D_j$ at the end of each training epoch. From the figure, the GNN training needs no more than three epochs (before the loss becomes 0), which confirms that we can train our designed GNN model $M_j$ for node dominance embeddings within a small number of epochs, and the training process can converge fast to zero loss.  

Figures \ref{subfig:training_pair_size} and \ref{subfig:training_loss_size} vary the graph size $|V(G_j)|$ from $5K$ to $1M$, and similar experimental results can be obtained, in terms of the number of training pairs and the convergence performance, respectively.

Figures \ref{subfig:training_time_deg} and \ref{subfig:training_time_size} show the \textit{GNN training time} for different average degrees $avg\_deg(G_j)$ from 3 to 6 and graph sizes $|V(G_j)|$ from $5K$ to $1M$, respectively, where the GNN training needs no more than three epochs (before the loss becomes 0) and takes less than 17 hours. This indicates the efficiency and scalability of training GNNs to offline perform node dominance embeddings.

\begin{figure}[htp]
    \centering
    \includegraphics[scale=0.38]{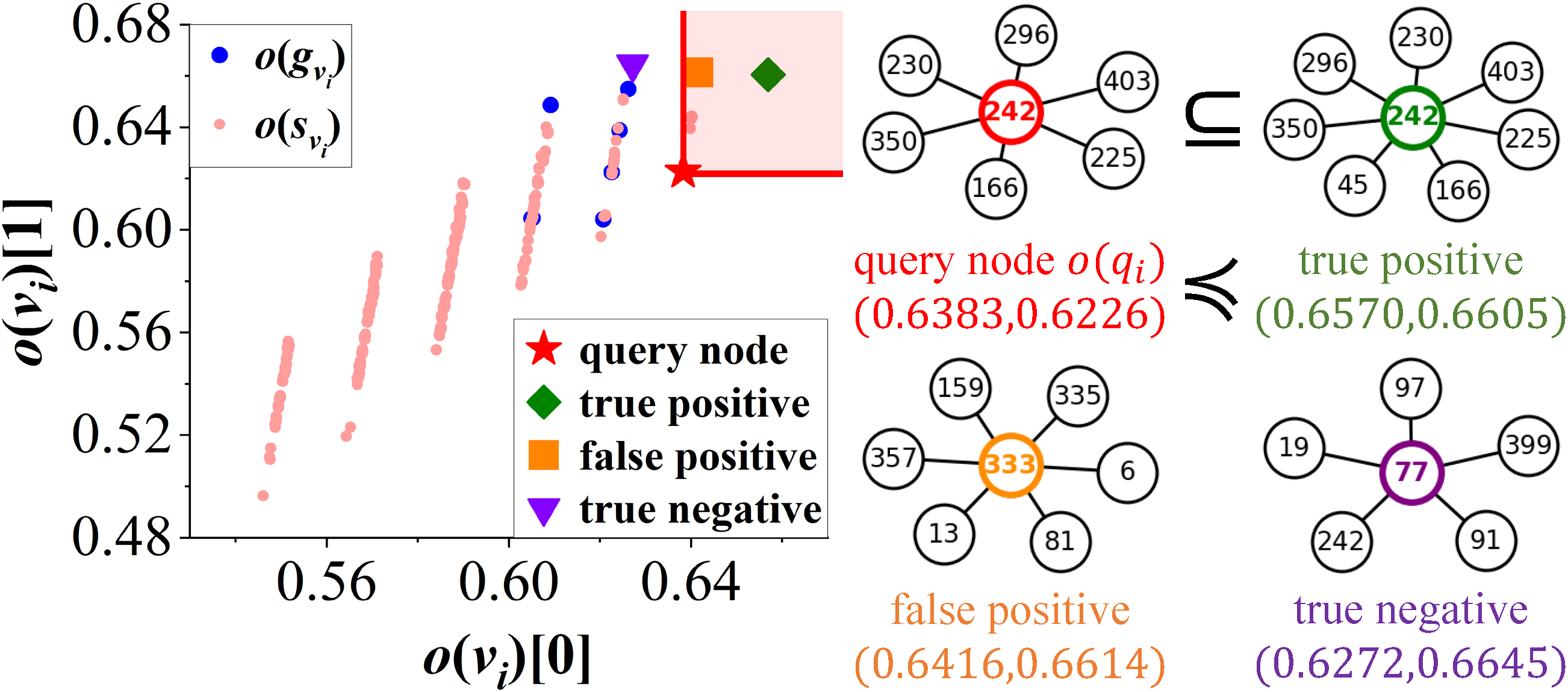}
    \caption{A visualization of node dominance embeddings.}
    \label{fig:visualization}
\end{figure}

\noindent{\bf Visualization Analysis of Node Dominance Embeddings.} We randomly sample 10 vertices $v_i$ from synthetic graph $G_j$ (used in Figure \ref{fig:GNN_convergence}). For each vertex $v_i$, we obtain its unit star subgraph $g_{v_i}$ and all star substructures $s_{v_i}$, and plot in Figure~\ref{fig:visualization} their GNN-based node dominance embeddings $o(g_{v_i})$ (blue points) and $o(s_{v_i})$ (pink points), respectively, in a 2D embedding space. 

As a case study shown in Figure~\ref{fig:visualization}, given a query node embedding $o(q_i)$ (red star point), its dominating region $DR(o(q_i))$ contains true positive (green diamond; matching vertex) and false positive (orange square; mismatching candidate vertex). The purple triangle point is not in $DR(o(q_i))$, which is true negative (i.e., not a matching vertex). From the visualization, our GNN-based embedding vectors are distributed on some piecewise curves, and their dominance relationships can be well-preserved.

\subsection{Path Dominance Embedding}
\label{subsec:path_embedding}

Next, we discuss how to obtain path dominance embedding from node embeddings (as discussed in Section \ref{subsec:node_embedding}). Specifically, given a path $p_z$ in $G$ starting from $v_i$ and with length $l$, we concatenate embedding vectors $o(v_j)$ of all consecutive vertices $v_j$ on path $p_z$ and obtain a path embedding vector $o(p_z)$ of size $((l+1)\cdot d)$, where $l$ is the length of path $p_z$ and $d$ is the dimensionality of node embedding vector $o(v_j)$. That is, we have:
\begin{equation}
    o(p_z)=\big\Vert_{\forall v_j \in p_z} o(v_j),
    \label{eq:eq8}
\end{equation}
where $\Vert$ is the concatenation operator. 

Note that, node dominance embedding can be considered as a special case of path dominance embedding, where path $p_z$ has a length equal to 0.

\noindent {\bf Property of the Path Dominance Embedding.} Given two paths $p_q$ and $p_z$, if path $p_q$ (and 1-hop neighbors of vertices on $p_q$) is a subgraph of $p_z$ (and 1-hop neighbors of vertices on $p_z$), then it must hold that $o(p_q) \preceq o(p_z)$.

\begin{example} 
{\it Figure~\ref{fig:pathembedding} shows an example of the path dominance embedding for paths with length 2. Consider a path $p_z = v_3v_1v_2$, and its 6D path embedding vector $o(p_z) = (0.73, 0.58; 0.78, 0.79; 0.75, 0.77)$, which is a concatenation of three 2D embedding vectors $o(v_3) || o(v_1)$ $|| o(v_2)$ (as given in Figure~\ref{fig:nodeembedding}). Similarly, in the query graph $q$, we can obtain a 6D embedding vector $o(p_q)$ of a path $p_q = q_3q_1q_2$. 

In the figure, we can see that $o(p_q)$ dominates $o(p_z)$, which indicates that path $p_z$ may potentially match path $p_q$ (according to the property of the path dominance embedding). 
\qquad $\blacksquare$}
\end{example}

\begin{figure}[t!]
    \centering
    \includegraphics[scale=0.38]{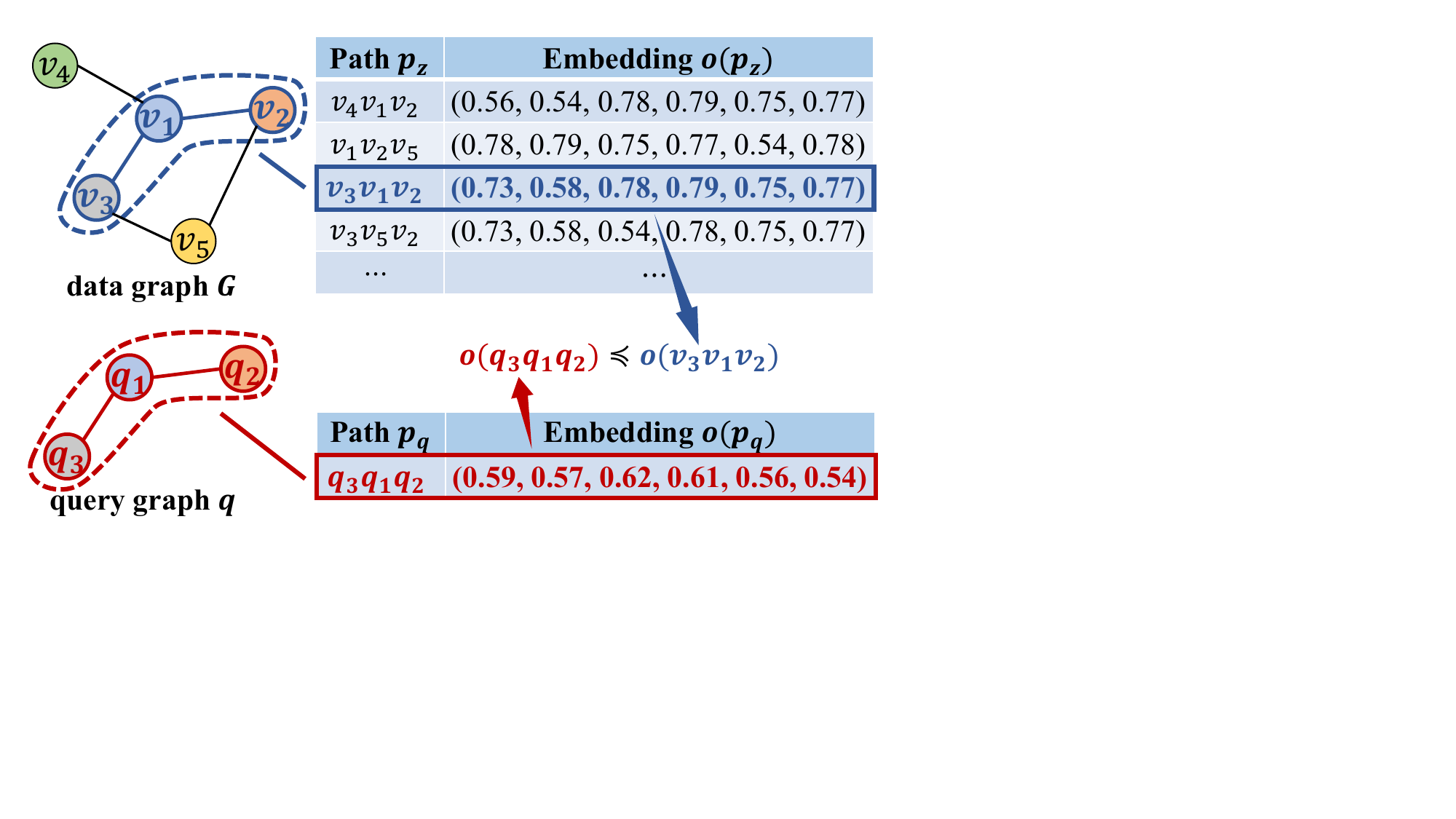}
    \caption{An example of path dominance embedding.}
    \label{fig:pathembedding}
\end{figure}

\section{Subgraph Matching with GNN-Based Path Embedding}
\label{sec:subgraph_matching_alg}

\subsection{Pruning Strategies}
\label{subsec:pruning}

In this subsection, we present effective pruning strategies, namely \textit{path label} and \textit{path dominance pruning}, to filter out false alarms of subgraphs $g$ ($\subseteq G$) that cannot match with a given query graph $q$. 


\noindent {\bf Path Label Pruning.} Let $s_0(v_i)$ (or $s_0(q_i)$) be a special star substructure containing an isolated vertex $v_i$ (or $q_i$; without any 1-hop neighbors). Assume that we can obtain an embedding vector, $o(s_0(v_i))$ (or $o(s_0(q_i))$), of the isolated vertex $v_i$ (or $q_i$) via the GNN model (discussed in Section \ref{subsec:node_embedding}). For simplicity, we denote $o(s_0(v_i))$ (or $o(s_0(q_i))$) as $o_0(v_i)$ (or $o_0(q_i)$).

Similarly, we can concatenate embedding vectors $o_0(v_i)$ (or $o_0(q_i)$) of vertices $v_i$ (or $q_i$) on path $p_z$ (or $p_q$) and obtain a \textit{path label embedding} vector $o_0(p_z) = \Vert_{\forall v_i\in p_z} o_0(v_i)$ (or $o_0(p_q) = \Vert_{\forall q_i\in p_q} o_0(q_i)$), which intuitively encodes labels of vertices on the path $p_z$ (or $p_q$).

\begin{lemma}
{\bf (Path Label Pruning)} 
Given a path $p_z$ in the subgraph $g$ of data graph $G$ and a query path $p_q$ in query graph $q$, path $p_z$ can be safely pruned, if it holds that $o_0(p_z) \ne o_0(p_q)$.
\label{lemma:path_label_pruning}
\end{lemma}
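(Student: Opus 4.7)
\textbf{Proof plan for Lemma~\ref{lemma:path_label_pruning}.} The plan is to argue by contrapositive: I will show that whenever the path $p_z$ in $G$ genuinely matches the query path $p_q$ in $q$ (in the subgraph-isomorphism sense used throughout the paper), the two path label embeddings must coincide, i.e.\ $o_0(p_z)=o_0(p_q)$. Contrapositively, $o_0(p_z)\neq o_0(p_q)$ implies that $p_z$ cannot participate in any valid match of $p_q$, and hence it is safe to discard it.

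The key observation is that $o_0(v_i)=o(s_0(v_i))$ is obtained by pushing the \emph{isolated} star substructure $s_0(v_i)$ through the trained GNN $M_j$. Since $s_0(v_i)$ has no 1-hop neighbors, the GAT aggregation over $\mathcal{N}(v_i)$ in Eq.~(\ref{eq:singlehead}) reduces to a function of $v_i$'s own initial feature vector $x_i$ alone; the readout (Eq.~(\ref{eq:eq5})) and the fully connected layer (Eq.~(\ref{eq:eq6})) are deterministic transforms of that single feature vector. Because the initial feature $x_i$ is obtained by label encoding (or one-hot encoding) of $L(v_i)$, the output $o_0(v_i)$ is a deterministic function of the label $L(v_i)$ only. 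Therefore, for any two vertices $v_i$ and $q_i$ with $L(v_i)=L(q_i)$, we have $o_0(v_i)=o_0(q_i)$.

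Now suppose $p_z$ matches $p_q$ under some edge-preserving bijection $f$ between their vertex sets (as in Definition~\ref{def:subiso}). Then $L(q_i)=L(f(q_i))$ for each vertex $q_i$ on $p_q$, and $f$ preserves the ordering of vertices along the path. By the label-only dependence established above, $o_0(q_i)=o_0(f(q_i))$ for every vertex on the path. Concatenating along the path in the matched order, by Eq.~(\ref{eq:eq8}) applied to the isolated-vertex embeddings, yields
\begin{equation*}
o_0(p_q)=\bigl\Vert_{\forall q_i\in p_q} o_0(q_i)=\bigl\Vert_{\forall v_j\in p_z} o_0(v_j)=o_0(p_z).
\end{equation*}
Taking the contrapositive gives the lemma.

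The argument is essentially routine once the label-only dependence of $o_0(\cdot)$ is isolated; I expect the main (minor) obstacle to be nailing down the corner case where a vertex has no neighbors, so that the GAT aggregation and softmax in Eq.~(2) are well-defined. I would handle this by observing that the paper's convention treats $s_0(v_i)$ as a legitimate input to $M_j$ (it is an element of the star-substructure family used during training), so the network is evaluated with an empty neighborhood aggregation that depends deterministically on $x_i$ alone; the rest of the proof then consists only of the bookkeeping above.
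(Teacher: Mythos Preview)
Your proposal is correct and follows essentially the same contrapositive route as the paper's own proof: matching paths force identical vertex labels at each position, identical labels yield identical isolated-vertex embeddings through the same deterministic GNN $M_j$, and concatenation then gives $o_0(p_z)=o_0(p_q)$. The only difference is that you spell out in more detail \emph{why} $o_0(\cdot)$ depends solely on the vertex label (via the empty-neighborhood GAT/readout/FC pipeline), whereas the paper leaves this implicit; this is an elaboration, not a different argument.
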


\begin{proof} 
If a path $p_z$ in a subgraph $g$ matches with a query path $p_q$ in the query graph $q$, then their corresponding vertices $v_i$ and $q_i$ on paths $p_z$ and $p_q$, respectively, must have the same labels. Through the same GNN model $M_j$ and with isolated vertices $v_i$ and $q_i$ as input, we can obtain label embedding vectors $s_0(v_i)$ and $s_0(q_i)$ as outputs, respectively, and in turn path label embeddings which are the concatenation of node label embedding vectors, that is, $o_0(p_z) = \Vert_{\forall v_i\in p_z} o_0(v_i)$ and $o_0(p_q) = \Vert_{\forall q_i\in p_q} o_0(q_i)$. If it holds that $o_0(p_z) \ne o_0(p_q)$, it indicates that at least one vertex position on paths $p_z$ and $p_q$ does not have the same label. In other words, paths $p_z$ and $p_q$ do not match with each other, and thus path $p_z$ can be safely pruned, which completes the proof of this lemma.
\end{proof}

\noindent{\bf Path Dominance Pruning.} For paths $p_z \subseteq G$ (or $p_q \subseteq q$) of length $l$, their embedding vectors, $o(p_z)$ (or $o(p_q)$), follow the property of the path dominance embedding (as discussed in Section \ref{subsec:path_embedding}). Based on this property, we have the lemma for the \textit{path dominance pruning} below.

\begin{lemma}
{\bf (Path Dominance Pruning)} 
Given a path $p_z$ in the subgraph $g$ of data graph $G$ and a query path $p_q$ in query graph $q$, path $p_z$ can be safely pruned, if $o(p_q) \preceq o(p_z)$ does not hold (denoted as $o(p_q) \npreceq o(p_z)$).
\label{lemma:path_dominance_pruning}
\end{lemma}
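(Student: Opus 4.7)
The plan is to prove Lemma~\ref{lemma:path_dominance_pruning} by establishing its contrapositive: if the path $p_z$ in $G$ does match the query path $p_q$ in $q$, then the path dominance relationship $o(p_q) \preceq o(p_z)$ must hold. Pruning $p_z$ whenever $o(p_q) \npreceq o(p_z)$ is therefore safe (no false dismissals).

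First I would unpack what ``$p_z$ matches $p_q$'' means in our setting. By the way candidate paths are used in the framework of Algorithm~\ref{alg1}, a match requires a label-preserving correspondence between consecutive vertices of $p_q$ and $p_z$, together with the matching of their 1-hop neighborhoods (so that the surrounding structure in $q$ is embedded into the surrounding structure in $G$). In particular, for every vertex $q_i$ at position $t$ on $p_q$, the corresponding vertex $v_i$ at position $t$ on $p_z$ satisfies $g_{q_i} \subseteq g_{v_i}$ as star subgraphs, i.e., $g_{q_i}$ is (isomorphic to) a star substructure of $g_{v_i}$.

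Next I would invoke the node dominance property from Section~\ref{subsec:node_embedding}. Since the GNN model $M_j$ is trained (overfit) to zero loss on the training set $D_j$ that contains all pairs $(g_{v_i}, s_{v_i})$, the special case $s_{v_i} \equiv g_{q_i}$ falls under the trained pairs, and the loss-zero condition in Eq.~(\ref{eq:loss}) forces $o(g_{q_i}) \preceq o(g_{v_i})$; equivalently, $o(q_i) \preceq o(v_i)$ in each dimension $1 \le t \le d$. Applying this to every position on the two paths gives a pointwise dominance on the per-vertex embedding blocks.

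Finally I would lift this pointwise dominance to the concatenated path embeddings. Since $o(p_q) = \Vert_{q_i \in p_q} o(q_i)$ and $o(p_z) = \Vert_{v_i \in p_z} o(v_i)$ (Eq.~(\ref{eq:eq8})) are the aligned concatenations of the per-vertex vectors, dominance in each coordinate block yields $o(p_q)[t] \le o(p_z)[t]$ for all $1 \le t \le (l+1)\cdot d$, i.e., $o(p_q) \preceq o(p_z)$. Taking the contrapositive: if $o(p_q) \npreceq o(p_z)$, then $p_z$ cannot match $p_q$, so it is safely pruned. The only delicate step is the first one, namely making precise that a positional match of paths together with the 1-hop-neighborhood requirement implies the per-vertex star-subgraph relation $g_{q_i} \subseteq g_{v_i}$; everything else is a direct application of the already-established node dominance property and the definition of concatenation.
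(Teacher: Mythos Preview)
Your proposal is correct and follows essentially the same contrapositive route as the paper: a match of $p_q$ with $p_z$ forces the 1-hop-neighborhood subgraph relation, which yields $o(p_q)\preceq o(p_z)$, so $o(p_q)\npreceq o(p_z)$ safely prunes $p_z$. The only difference is granularity: the paper invokes the path dominance property of Section~\ref{subsec:path_embedding} as a single black box, whereas you explicitly derive it from the per-vertex node dominance (zero training loss on the star pairs) and then lift it through the concatenation in Eq.~(\ref{eq:eq8}).
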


\begin{proof}
If a query path $p_q$ in query graph $q$ matches with a path $p_z$ in a subgraph $g$, then query path $p_q$ and its surrounding 1-hop neighbors in $q$ must be a subgraph of path $p_z$ (and 1-hop neighbors of its vertices) in $g$. Thus, based on the property of path dominance embedding in Section \ref{subsec:path_embedding}, their path dominance embeddings satisfy the condition $o(p_q) \preceq o(p_z)$. In other words (via contraposition), if this condition $o(p_q) \preceq o(p_z)$ does not hold, then $p_q$ does not match with $p_z$, and $p_z$ can be safely pruned, which completes the proof.
\end{proof}

\subsection{Indexing Mechanism}
In this subsection, we discuss how to obtain paths of length $l$ in (expanded) subgraph partitions $G_j$ and offline construct indexes, $\mathcal{I}_j$, over these paths to facilitate efficient processing of exact subgraph matching. 
Specifically, starting from each vertex $v_i \in V(G_j)$, we extract all paths $p_z$ of length $l$ (i.e., in an expanded subgraph partition that extends $G_j$ outward by $l$-hop) and compute their path dominance embedding vectors $o(p_z)$ via the GNN model $M_j$.
Then, we will build an aggregate R$^*$-tree (or aR-tree) \cite{Lazaridis01,beckmann1990r} over these path embedding vectors $o(p_z)$, by using standard \textit{insert} operator. 

In addition to the \textit{minimum bounding rectangles} (MBRs) of path embedding vectors $o(p_z)$ in index nodes, we store aggregate data such as MBRs of path label embedding $o_0(p_z)$, which can be used for path label pruning (as given by Lemma \ref{lemma:path_label_pruning} in Section \ref{subsec:pruning}).

\noindent{\bf Leaf Nodes.} Each leaf node $N \in \mathcal{I}_j$ contains multiple paths $p_z$, where each path has an embedding vector $o(p_z)$ via a GNN in $M_j$. Each path $p_z \in N$ is associated with a path label embedding vector $o_0(p_z)$ via the GNN $M_j$.

\noindent{\bf Non-Leaf Nodes.} Each non-leaf node $N \in \mathcal{I}_j$ contains multiple entries $N_c$, each of which is an MBR, $N_c.MBR$, of all path embedding vectors $o(p_z)$ for all paths $p_z$ under entry $N_c$. Each entry $N_c \in N$ is associated with an MBR, $N_c.MBR_0$, over path label embedding vector $o_0(p_z)$ for all paths $p_z$ under entry $N_c$.

\subsection{Index-Level Pruning}
In this subsection, we present effective pruning strategies on the node level of indexes $\mathcal{I}_j$, which can be used for filtering out (a group of) path false alarms in nodes.

\noindent{\bf Index-Level Path Label Pruning.} We first discuss the \textit{index-level path label pruning}, which prunes entries, $N_i$, in index nodes, containing path labels that do not match with that of the query path $p_q$.

\begin{lemma}
{\bf (Index-Level Path Label Pruning)} Given a query path $p_q$ and an entry $N_i$ of index node $N$, entry $N_i$ can be safely pruned, if it holds that $o_0(p_q)\notin N_i.MBR_0$.
\label{lemma:index_label_pruning}
\end{lemma}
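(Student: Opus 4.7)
The plan is to reduce this index-level statement to the path-level label pruning already established in Lemma~\ref{lemma:path_label_pruning}, by exploiting the containment property of the MBR stored in each entry $N_i$. Concretely, I would proceed by contraposition: show that if any path $p_z$ under $N_i$ could still match the query path $p_q$, then $o_0(p_q)$ must lie inside $N_i.MBR_0$, so the contrapositive gives the lemma.

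First, I would recall from the indexing mechanism that $N_i.MBR_0$ is defined as the \emph{minimum bounding rectangle} over all path label embedding vectors $o_0(p_z)$ for paths $p_z$ indexed under entry $N_i$. Thus, by construction, every such $o_0(p_z)$ satisfies $o_0(p_z)\in N_i.MBR_0$. Second, I would take an arbitrary path $p_z$ under $N_i$ and observe that, under the hypothesis $o_0(p_q)\notin N_i.MBR_0$, we immediately get $o_0(p_q)\ne o_0(p_z)$, since otherwise $o_0(p_q)$ would coincide with a point that lies inside $N_i.MBR_0$.

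Third, I would invoke Lemma~\ref{lemma:path_label_pruning} at the path level: the inequality $o_0(p_q)\ne o_0(p_z)$ guarantees that $p_z$ cannot match $p_q$ and can be safely pruned. Since $p_z$ was an arbitrary path under $N_i$, \emph{every} path indexed by $N_i$ is a path false alarm, so the whole entry $N_i$ can be discarded without losing any true match, completing the proof.

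I do not anticipate a serious obstacle here; the argument is essentially a geometric containment check combined with the already-proven label pruning. The only subtlety worth stating explicitly is that the MBR is axis-aligned and tight enough to contain every $o_0(p_z)$ under the entry, so membership in $N_i.MBR_0$ is a genuine necessary condition for equality with some indexed label vector, which is exactly what the contrapositive step requires.
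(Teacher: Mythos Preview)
Your proposal is correct and follows essentially the same argument as the paper: both use that $N_i.MBR_0$ contains every $o_0(p_z)$ under the entry, so $o_0(p_q)\notin N_i.MBR_0$ forces $o_0(p_q)\ne o_0(p_z)$ for all such $p_z$, and then the path-level label pruning (Lemma~\ref{lemma:path_label_pruning}) rules out every path under $N_i$. The paper's version is terser and does not explicitly cite Lemma~\ref{lemma:path_label_pruning}, but the logic is identical.
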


\begin{proof}
MBR $N_i.MBR_0$ bounds all the label embedding vectors, $o_0(p_z)$, of paths $p_z$ under node entry $N_i$. If it holds that $o_0(p_q)\notin N_i.MBR_0$, it implies that labels of the query path $p_q$ do not match with those of any path $p_z$ under $N_i$. Therefore, node entry $N_i$ does not contain any candidate path $p_z$, and thus can be safely pruned.
\end{proof}

\noindent{\bf Index-Level Path Dominance Pruning.} Similarly, we can obtain the \textit{index-level path dominance pruning}, which rules out those index node entries $N_i$, under which all path dominance embedding vectors are not dominated by that of the query path $p_q$. 

Let $DR(o(p_q))$ be a \textit{dominating region} that is dominated by an embedding vector $o(p_q)$ in the embedding space. For example, as shown in Figure \ref{fig:nodeembedding}, the embedding vector $o(q_1)$ of vertex $q_1$ (i.e., a special case of a path with length 0) has a dominating region, $DR(o(q_1))$. Then, we have the following lemma:

\begin{lemma}
{\bf (Index-Level Path Dominance Pruning)} Given a query path $p_q$ and a node entry $N_i$, entry $N_i$ can be safely pruned, if $DR(o(p_q))\cap N_i.MBR = \emptyset$ holds.
\label{lemma:index_dominance_pruning}
\end{lemma}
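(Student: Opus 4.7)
The plan is to reduce the index-level statement to the path-level Path Dominance Pruning of Lemma~\ref{lemma:path_dominance_pruning} by exploiting the containment property of the MBRs stored at each entry. Concretely, $N_i.MBR$ is, by construction, the minimum bounding rectangle of the path dominance embedding vectors $o(p_z)$ of every path $p_z$ indexed under $N_i$, so $o(p_z)\in N_i.MBR$ for all such $p_z$. I will show that each disjunct of the hypothesis independently forces every path under $N_i$ to fail the dominance test $o(p_q)\preceq o(p_z)$, whence Lemma~\ref{lemma:path_dominance_pruning} lets me prune each individual path and therefore the whole entry.

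For the first disjunct $DR(o(p_q))\cap N_i.MBR=\emptyset$, I would argue by contraposition: if any path $p_z$ under $N_i$ could match $p_q$, then by the property of path dominance embedding we would have $o(p_q)\preceq o(p_z)$, i.e., $o(p_z)\in DR(o(p_q))$; but $o(p_z)\in N_i.MBR$, so $o(p_z)\in DR(o(p_q))\cap N_i.MBR$, contradicting the assumed emptiness. Hence every path under $N_i$ is pruned by Lemma~\ref{lemma:path_dominance_pruning}, and the entry itself can be discarded.

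For the second disjunct $DR(o'(p_q))\cap N_i.MBR'=\emptyset$, the reasoning is identical once I invoke the multi-GNN version of the dominance property: the GNN $M_j'$ trained with randomized vertex labels is also overfit to zero loss, so the star-subgraph-to-dominance implication continues to hold for the alternate embeddings $o'(\cdot)$, and consequently so does the concatenated path-level version. Thus the analogous Lemma~\ref{lemma:path_dominance_pruning} argument applies to the primed embeddings, and the same MBR-containment contradiction gives safe pruning. Since the two conditions are linked by logical ``or'', I only need one of them to hold; either independently certifies that no stored path under $N_i$ can match $p_q$.

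The main obstacle I anticipate is being explicit that the same path-level pruning lemma applies verbatim to the randomized-label embeddings $o'(\cdot)$: I need to cite the discussion of Multi-GNN Node Dominance Embedding in Section~\ref{subsec:node_embedding} to justify that $o'(s_{v_i})\preceq o'(g_{v_i})$ whenever $s_{v_i}\subseteq g_{v_i}$, and then lift this to paths via the concatenation definition in Eq.~(\ref{eq:eq8}). Beyond that, the argument is a short contrapositive combined with the set-theoretic fact $o(p_z)\in N_i.MBR$, so no further technical difficulty should arise.
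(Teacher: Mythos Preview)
Your proposal is correct and follows essentially the same approach as the paper: both argue that since $N_i.MBR$ (respectively $N_i.MBR'$) contains every $o(p_z)$ (respectively $o'(p_z)$) for paths $p_z$ under $N_i$, emptiness of the intersection with the dominating region means no such $p_z$ can satisfy $o(p_q)\preceq o(p_z)$, so the entry contains no candidate paths. The paper's own proof is terser and does not explicitly invoke Lemma~\ref{lemma:path_dominance_pruning} or spell out the multi-GNN justification for the primed case, but your more explicit contrapositive reduction and citation of Section~\ref{subsec:node_embedding} for the $o'(\cdot)$ embeddings are perfectly in line with the intended argument.
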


\begin{proof}
For a query path $p_q$, any embedding vector $o(p_z)$ in its dominating region $DR(o(p_q))$ corresponds to a candidate path $p_z$ that may potentially be a supergraph of $p_q$. Thus, if $DR(o(p_q))$ and $N_i.MBR$ overlap with each other, then node entry $N_i$ may contain candidate paths and have to be accessed. Otherwise (i.e., $DR(o(p_q))\cap N_i.MBR = \emptyset$ holds), node entry $N_i$ can be safely pruned, since it does not contain any candidate path, which completes the proof.
\end{proof}

Intuitively, in Lemma \ref{lemma:index_dominance_pruning}, if embedding vector $o(p_q)$ does not fully or partially dominate $N_i.MBR$, then the entire index entry $N_i$ can be pruned. This is because any path $p_z$ under entry $N_i$ cannot be dominated by $p_q$ in the embedding space, and thus cannot be a candidate path that matches with query path $p_q$.

\begin{algorithm}[!h]
\caption{{\bf Exact Subgraph Matching with GNN-Based Path Dominance Embedding}}
\label{alg3}
\KwIn{
    i) a query graph $q$;
    ii) a trained GNN model $M_j$, and;
    iii) an aR-tree index $\mathcal{I}_j$ over subgraph partition $G_j$
}
\KwOut{
    a set, $\mathcal{S}$, of matching subgraphs
}
\SetKwProg{Function}{function}{}{end}

obtain a set, $Q$, of query paths with length $l$ from a cost-model-based query plan $\varphi$\\
\For{each query path $p_q \in Q$}{
    $p_q.cand\_list = \emptyset$\\
    obtain $o(p_q)$ and $o'(p_q)$ via multi-GNNs\\
    obtain $o_0(p_q)$ via $M_j$\\
}
    
\tcp{traverse index $\mathcal{I}_j$ to find candidate paths}
    initialize a \textit{maximum heap} $\mathcal{H}$ accepting entries in the form $(N, key(N))$\\
    $root\big(\mathcal{I}_j\big).list \leftarrow Q$\\
    insert $\big(root\big(\mathcal{I}_j\big), 0\big)$ into $\mathcal{H}$\\
    \While{$\mathcal{H}$ is not empty}{
        deheap an entry $(N, key(N)) = \mathcal{H}.pop()$;\\
        \If{$key(N) < \min_{\forall p_q\in Q}\{||o(p_q)||_1\}$}{
            terminate the loop;
        }
        
        \eIf{$N$ is a leaf node}{
            \For{each path $p_z\in N$}{
                \For{each query path $p_q\in N.list$}{
                
                \If{$o_0(p_q) = o_0(p_z)$ \quad \tcp{Lemma \ref{lemma:path_label_pruning}}}{
                    \If{$o(p_q)\preceq o(p_z)$}{
                       
                        $p_q.cand\_list \leftarrow p_q.cand\_list \cup \{p_z\}$ \qquad \qquad \tcp{Lemma \ref{lemma:path_dominance_pruning}}
                    }
                }
                }
            }
        }
        {
            \For{each child node $N_i \in N$}{
                \For{each query path $p_q\in N.list$}{
                    \If{$o_0(p_q)\in N_i.MBR_0$ \tcp{Lemma \ref{lemma:index_label_pruning}}}{
                        \If{$DR(o(p_q))\cap N_i.MBR \ne \emptyset$}{
                            $N_i.list \leftarrow N_i.list \cup \{p_q\}$ \qquad \tcp{Lemma \ref {lemma:index_dominance_pruning}}
                        }
                    }
                }
                \If{$N_i.list \ne \emptyset$}{
                    insert $(N_i, key(N_i))$ into heap $\mathcal{H}$\\
                }
            }
        }
    }
    concatenate all candidate paths in $p_q.cand\_list$ for $p_q\in Q$ and refine/obtain matching subgraphs $g$ in $\mathcal{S}$\\
    {\bf return} $\mathcal{S}$;\\
\end{algorithm}

\subsection{GNN-Based Subgraph Matching Algorithm}

In this subsection, we illustrate the exact subgraph matching algorithm by traversing the indexes over GNN-based path embeddings in Algorithm~\ref{alg3}. Specifically, given a query graph $q$, we first obtain all query paths of length $l$ in a set $Q$ from the query plan $\varphi$ (line 1). Then, for each query path $p_q\in Q$, we generate path embedding vectors $o(p_q)$ and $o_0(p_q)$ (via $M_j$) (lines 2-5). Next, we traverse the index $\mathcal{I}_j$ once to retrieve path candidate sets for each query path $p_q\in Q$ (lines 6-28). Finally, we refine candidate paths and join the matched paths to obtain/return subgraphs $g \in \mathcal{S}$ that are isomorphic to $q$ (lines 29-30).

\noindent {\bf Index Traversal.} To traverse the index $\mathcal{I}_j$, we initialize a \textit{maximum heap} $\mathcal{H}$, which accepts entries in the form of $(N, key(N))$ (line 6), where $N$ is an index entry and $key(N)$ is the key of index entry $N$ in the heap. Here, for an index entry $N$, we use $N.MBR_{max}$ to denote the maximal corner point of MBR $N.MBR$, which takes the maximal value of the interval for each dimension of the MBR. The $key(N)$ of entry $N$ is calculated by $key(N) = ||N.MBR_{max}||_1$, where a $d$-dimensional vector $Z$ has the $L_1$-norm $||Z||_1 = \sum_{i=1}^d|Z[i]|$.

We also initialize a list, $root(\mathcal{I}_j).list$, of the index root with the query path set $Q$, and insert an entry $(root(\mathcal{I}_j), 0)$ into heap $\mathcal{H}$ (lines 7-8). We traverse the index by accessing entries from heap $\mathcal{H}$ (lines 9-28). Specifically, each time we will pop out an index entry $(N,$ $key(N))$ with the maximum key from heap $\mathcal{H}$ (line 10). If $key(N)<\min_{\forall p_q\in Q}\{||o(p_q)||_1\}$ holds, which indicates that paths in the remaining entries of $\mathcal{H}$ cannot dominate any query paths $p_q\in Q$, then we can terminate the index traversal (lines 11-12); otherwise, we will check entries in node $N$.

When we encounter a leaf node $N$, for each path $p_z\in N$ and each query path $p_q\in N.list$, we apply the \textit{path label pruning} (Lemma~\ref{lemma:path_label_pruning}) and \textit{path dominance pruning} (Lemma~\ref{lemma:path_dominance_pruning}) (lines 13-18). If $p_z$ cannot be ruled out by these two pruning methods, we will add $p_z$ to the path candidate set $p_q.cand\_list$ (line 19).

When $N$ is a non-leaf node, we consider each child node $N_i\in N$ and each query path $p_q\in N.list$ (lines 20-22). If entry $N_i$ cannot be pruned by \textit{index-level path label pruning} (Lemma~\ref{lemma:index_label_pruning}) and \textit{index-level path dominance pruning} (Lemma~\ref{lemma:index_dominance_pruning}), then we will add $p_q$ to $N_i.list$ for further checking (lines 23-26).
If $N_i.list$ is not empty (i.e., $N_i$ is a node candidate for some query path), we will insert entry $(N_i, key(N_i))$ into heap $\mathcal{H}$ for further investigation (lines 27-28).

The index traversal terminates when either heap $\mathcal{H}$ is empty (line 9) or the remaining heap entries in $\mathcal{H}$ cannot contain path candidates (lines 11-12).

\noindent {\bf Refinement.} After finding all candidate paths in $p_q.cand\_list$ for each query path $p_q\in Q$, we will assemble these paths (with overlapping vertex IDs) into candidate subgraphs to be refined, and return the actual matching subgraph answers in $\mathcal{S}$ (lines 29-30).

Specifically, we consider the following two steps to obtain candidate subgraphs: 1) local join within each partition, and; 2) global join for partition boundaries. First, inside each partition, we perform the \textit{multi-way hash join} by joining vertex IDs of candidate paths for different query paths. Then, for those boundary candidate paths across partitions, we also use the \textit{multi-way hash join} to join them with candidate paths from all partitions globally. Finally, we can refine and return the resulting candidate subgraphs. 

\noindent {\bf Complexity Analysis.} 
In Algorithm \ref{alg3}, the time complexity of finding a query path set $Q$ (line 1) is given by $O(|P|\cdot (|Q|\cdot (l+1)))$, where $|P|$ is the number of different query path sets we evaluate for the query plan selection, $|Q|$ is the number of paths in $Q$, and $l$ is the path length.
Since the GNN computation cost is $O(|E(q)|+d^2)$ \cite{wu2020comprehensive,wang2022reinforcement}, the time complexity of obtaining embedding vectors of query paths $p_q$ in $Q$ via GNNs (lines 2-5) is given by $O(|V(q)|\cdot (|E(q)|+d^2))$, where $|V(q)|$ and $|E(q)|$ are the numbers of vertices and edges in $q$, resp., and $d$ is the size of vertex embedding vectors. 

For the index traversal (lines 6-28), assume that $h$ is the height of the aR-tree $\mathcal{I}_j$, $f$ is the average fanout of each index node $N$, and the pruning power is $PP_i$ on the $i$-th level of the tree index.
Then, on the $i$-th level, the number of index nodes (or paths) to be accessed is given by $f^{h-i+1}\cdot (1-PP_{i})$. Thus, the total index traversal cost has $O\big(\sum^{h}_{i=0}|Q|\cdot f^{h-i+1}\cdot (1-PP_{i})\big)$ complexity.

Finally, we use a multiway hash join for candidate path assembling and refinement (line 29). The time complexity is given by $O\big(\sum_{p_q\in Q}|p_q.cand\_list|\big)$, where $|p_q.cand\_list|$ is the number of candidate paths for query path $p_q\in Q$. 

Therefore, the overall time complexity of Algorithm \ref{alg3} is $O\big(|P|\cdot |Q|\cdot (l+1)+|V(q)|\cdot (|E(q)|+d^2)+\sum^{h}_{i=0}|Q|\cdot f^{h-i+1}\cdot (1-PP_{i})+\sum_{p_q\in Q}|p_q.cand\_list|\big)$.

\section{Cost-Model-Based Query Plan}
\label{sec:query_plan}

In this section, we discuss how to select a good query plan $\varphi$ from the query graph $q$, based on our proposed cost model, which returns a number of query paths $p_q$ (line 1 of Algorithm~\ref{alg3}).

\subsection{Cost Model}
\label{subsec:cost_model}

In this subsection, we provide a formal cost model to estimate the query cost of a query plan $\varphi$ which contains a set $Q$ of query paths $p_q$ from query graph $q$ (used for retrieving matching paths from the index). Intuitively, 
fewer query paths with small overlapping would result in lower query cost, and fewer candidate paths that match with query paths will also lead to lower query cost. 

Based on this observation, we define the query cost, $Cost_Q(\varphi)$, for query paths $p_q \in Q$ as follows: 

\begin{eqnarray}
    Cost_Q(\varphi)=\sum_{p_q\in Q}w(p_q),
    \label{eq:cost_model}
\end{eqnarray}
where $w(p_q)$ is the weight (or query cost) of a query path $p_q$.

Thus, our goal is to find a good query plan $\varphi$ with query paths in $Q$ that minimize the cost function $Cost_Q(\varphi)$ given in Eq.~(\ref{eq:cost_model}). 

\noindent {\bf Discussions on the Calculation of Path Weights.} We next discuss how to compute the path weight $w(p_q)$ in Eq.~(\ref{eq:cost_model}), which implies the search cost of query path $p_q$. Intuitively, when degrees of vertices in query path $p_q$ are high, the number of candidate paths that may match with $p_q$ is expected to be small, which incurs low query cost. We can thus set $w(p_q)= - \sum_{q_i\in p_q}deg(q_i)$, where $deg(q_i)$ is the degree of vertex $q_i$ on query path $p_q$. 

Alternatively, we can use other query cost metrics such as the number of candidate paths (to be retrieved and refined) dominated by $q_p$ in the embedding space. For example, we can set: $w(p_q)=|DR(o(p_q))|$, where $|DR(o(p_q))|$ is the number of candidate paths in the region, $DR(o(p_q))$, dominated by embedding vector $o(p_q)$. 

\subsection{Cost-Model-Based Query Plan Selection}
\label{subsec:queryplan}

Algorithm~\ref{alg4} illustrates how to select the query plan $\varphi$ in light of the cost model (given in Eq.~(\ref{eq:cost_model})), which returns a set, $Q$, of query paths from query graph $q$. Specifically, we first initialize an empty set $Q$ and query cost $cost_Q(\varphi)$ (line 1). Then, we select a starting vertex $q_i$ with the highest degree, whose node embedding vector expects to have high pruning power (line 2). Next, we obtain a set, $P$, of initial paths that pass through vertex $q_i$ (line 3). We start from each initial path, $p_q$, in $P$, and each time expand the local path set $local\_Q$ by including one path $p$ that minimally overlaps with $local\_Q$ and has minimum weight $w(p)$ (lines 4-9). For different initial paths in $P$, we always keep the best-so-far path set in $Q$ and the smallest query cost in $Cost_Q(\varphi)$ (lines 10-12). Finally, we return the best query path set $Q$ with the lowest query cost (line 13).

\begin{algorithm}[!h]
\caption{\bf Cost-Model-Based Query Plan Selection}
\label{alg4}
\KwIn{
    i) a query graph $q$;
    ii) path length $l$;
}
\KwOut{
    a set, $Q$, of query paths in the query plan $\varphi$
}

$Q=\emptyset$; $Cost_Q(\varphi) = +\infty$;\\
select a starting vertex $q_i$ with the highest degree\\
obtain a set, $P$, of initial paths of length $l$ containing $q_i$\\
\tcp{apply OIP, AIP, or $\boldsymbol{\varepsilon}$IP strategy in Section~\ref{subsec:queryplan}}
\For{each possible initial path $p_q\in P$}{
    $local\_Q = \{p_q\}$; $local\_cost = 0$;\\
    \While{at least one vertex of $V(q)$ is not covered}{
        select a path $p$ of length $l$ that connects with $Q$ with minimum overlapping and minimum weight $w(p)$\\
        $local\_Q \leftarrow local\_Q  \cup \{p\}$\\
        $local\_cost \leftarrow local\_cost + w(p)$
    }
    \If{ $local\_cost < Cost_Q(\varphi)$}{
        $Q \leftarrow local\_Q$\\
        $Cost_Q(\varphi) \leftarrow local\_cost$\\
    }
}
\textbf{return} $Q$\\
\end{algorithm}

\noindent {\bf Discussions on the Initial Path Selection Strategy.} In line 3 of Algorithm~\ref{alg4}, we use one of the following three strategies to select initial query path(s) in $P$:

\begin{itemize}
\item  {\bf One-Initial-Path (OIP)}: select one path $p_q$ with the minimum weight $w(p_q)$ that passes by the starting vertex $q_i$; 

\item  {\bf All-Initial-Path (AIP)}: select all paths that pass through the starting vertex $q_i$; and

\item  {\bf $\boldsymbol{\varepsilon}$-Initial-Path ($\boldsymbol{\varepsilon}$IP)}: randomly select $\varepsilon$ paths passing through the starting vertex $q_i$.
\end{itemize}

\section{GNN-PE Optimizations with Path Group Embeddings}
\label{sec:gnnpge}

Although the GNN-PE framework (as given in Algorithm \ref{alg1}) can transform the expensive subgraph search problem to an efficient dominance range search in an embedding space, it relies on the index over embeddings of all possible paths of length $l$ in the data graph $G$, which may not be space-efficient and lead to high query cost. In order to further optimize our GNN-PE approach, we propose an optimized GNN-PE approach, named \textit{GNN-based path group embedding} (GNN-PGE), which aggregates embeddings for groups of paths and queries over an index of \textit{path group embeddings} (rather than embeddings of individual paths). 

Specifically, we use \textit{minimum bounding rectangles} (MBRs) to minimally bound all path embeddings that share the same starting vertices, which can significantly reduce the space cost of the index, as well as the index search cost. We design effective pruning strategies tailored to MBRs of path group embeddings, which can facilitate efficient candidate path retrieval without false dismissals. 

\subsection{Rationale Behind the Path Grouping}

In our GNN-PE framework  (as illustrated in Algorithm \ref{alg1}), we used embeddings of paths (i.e., concatenation of vertex embeddings on the path) to enable the pruning via the dominance relationships. While we can achieve high pruning power with such path embeddings, the space cost of storing these individual path embeddings (and their index as well) could be large, which may also lead to extra index traversal cost. 

To further alleviate the space/search overheads, we propose an optimized GNN-PGE approach, by grouping similar path dominance embeddings (instead of treating each path individually). We observe that those paths with the same starting vertex usually share many common path prefixes, whose embeddings can be combined together to obtain a small MBR (i.e., a compact representation for the path group). Therefore, with this grouping strategy, the number of indexed items can be significantly reduced from $O(|V(G)|\cdot avg\_deg(G)^l)$ to $O(|V(G)|)$, especially for high-degree graph and long paths, which can greatly improve the index space cost and the efficiency of the index traversal, where $avg\_deg(G)$ is the average degree of vertices in graph $G$.

\subsection{Path Group Embeddings (PGEs)}
\label{subsec:path_group_emb}

In this subsection, we present how to obtain \textit{path group embeddings} (PGEs). In particular, for each vertex $v_i \in V(G)$, one straightforward method is to extract all paths starting from vertex $v_i$ and of length $l$, obtain their path dominance embeddings, and group them with a \textit{path group dominance embedding MBR} $v_i.MBR$. 

Observing that all paths starting from $v_i$ share many common prefixes, we do not actually need to enumerate all paths of length $l$ one by one (with redundant path prefix retrieval). Instead, we can perform a \textit{breadth-first search} (BFS) for $l$ hops, starting from vertex $v_i$. For the $x$-th hop, we extend the current paths of length $(x-1)$ by including neighbors of their ending vertices, and only need to calculate the MBR of neighbors' embeddings for the $x$-th hop. Finally, we combine/concatenate MBRs from $l$ hops to form an MBR, as the path group embedding $v_i.MBR$. This way, we can efficiently compute a compact representation of \textit{path group dominance embedding MBR}, $v_i.MBR$, for each vertex $v_i$, without redundant path traversal.

Formally, given a path length $l$, we define lower and upper bounds of MBR $v_i.MBR$ on the $(x\cdot d+t)$-th dimension (for $0 \leq x \leq l$ and $0 \leq t < d$) as follows:
\begin{eqnarray}
&&    v_i.MBR[2 (x \cdot d + t)]   = \min \{ o(v_j)[t] \text{ }|\text{ }\forall v_j \in \mathcal{N}_x(v_i)\}, \label{eq:MBR1}\\
&&    v_i.MBR[2 (x \cdot d + t)+1] = \max \{ o(v_j)[t] \text{ } |\text{ }\forall v_j \in \mathcal{N}_x(v_i)\}, \label{eq:MBR2}
\end{eqnarray}
where $d$ is the dimensionality of each vertex's embedding vector, $x$ denotes the number of hops from vertex $v_i$, $\mathcal{N}_x(v_i)$ represents a set of neighbors $x$ hops away from vertex $v_i$ along any paths, and $o(v_j)[t]$ is the $t$-th dimension of the embedding of vertex $v_j$ (for $0 \leq t < d$). 

\begin{figure}[t]
    \centering
    \includegraphics[scale=0.34]{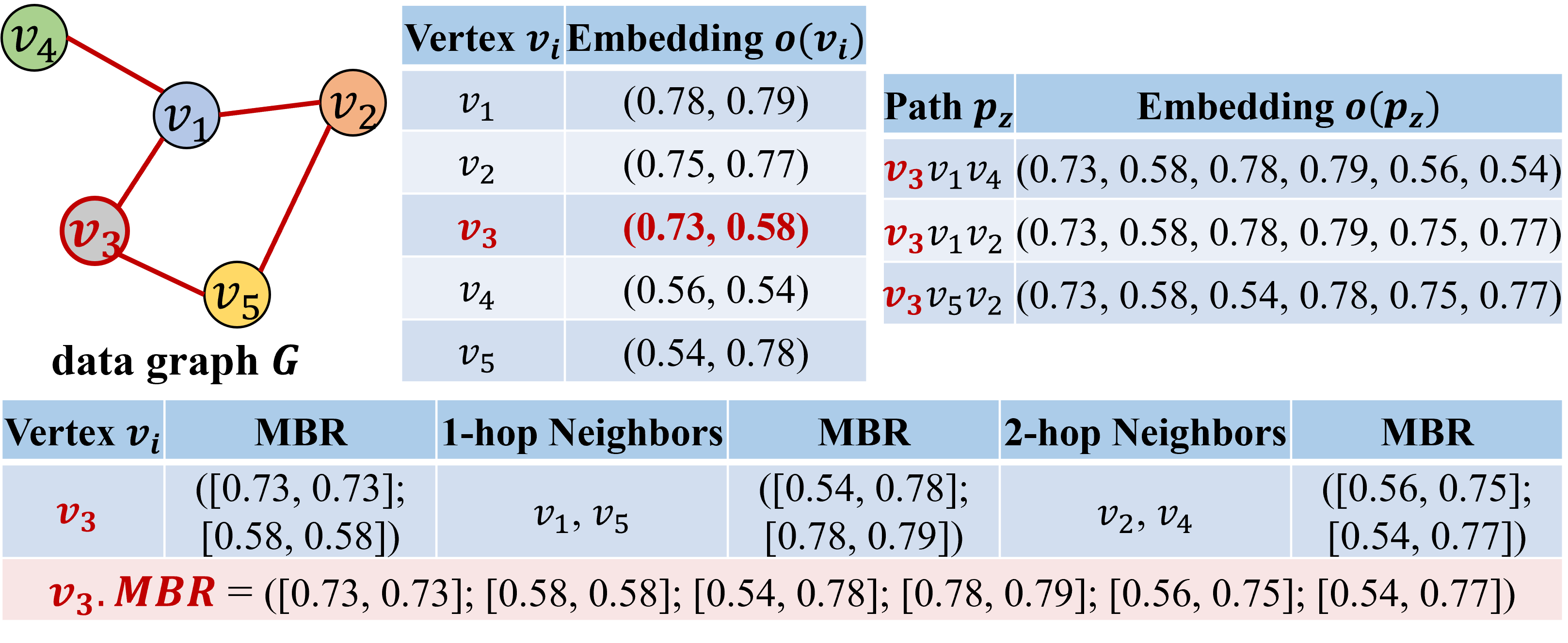}
    \caption{An example of the path group embedding ($l=2$).}
    \label{fig:path_group_example}
\end{figure}

\begin{example}
\it
    Consider the data graph $G$ in Figure~\ref{fig:path_group_example}, where each vertex $v_i$ is associated with an embedding $o(v_i)$. Following our GNN-PE framework, we have 3 paths $p_z$ of length 2 starting from vertex $v_3$, that is, $v_3v_1v_4$, $v_3v_1v_2$, and $v_3v_5v_2$, each of which has a path embedding (given by the concatenation of 2D vertex embeddings). For example, path $v_3v_1v_4$ has the path embedding $(o(v_3), o(v_1), o(v_4)) = (0.73, 0.58,$ $0.78,$ $0.79, 0.56, 0.54)$.

    To obtain the path group embedding $v_3.MBR$, we can start from vertex $v_3$ and do the BFS for 0-hop, 1-hop, and 2-hop neighbors of $v_3$ on paths, given by $\{v_3\}$, $\{v_1, v_5\}$, and $\{v_2, v_4\}$, respectively. We use an MBR to bound the set of the $x$-th hop neighbors ($0\leq x\leq 2$). For example, for 0-hop neighbors, we have MBR  $([0.73, 0.73]; [0.58, 0.58])$; for 1-hop neighbors, we obtain MBR $([0.54,0.78]; [0.78,0.79])$, and; for 2-hop neighbors, we have MBR $([0.56,0.75];$ $[0.54,$ $0.77])$. By concatenating these MBRs, we can obtain $v_3.MBR$ as shown in Figure \ref{fig:path_group_example}.     \qquad $\blacksquare$
\end{example}

\noindent{\bf Complexity Analysis of Obtaining Path Group Embedding MBRs.}
In GNN-PGE, we avoid explicit path enumeration by grouping all paths of length $l$ (sharing many prefixes) starting from the same vertex $v_i$ into a single MBR. For each vertex $v_i$, instead of enumerating all possible paths one by one, we only need to traverse vertex $v_i$'s $x$-hop neighbors ($0 \leq x \leq l$) once, and compute minimum and maximum bounds on each dimension over these $x$-hop neighbors' embeddings. Therefore, our PGE MBR construction traverses approximately $\frac{avg\_deg(G)^{l+1} -1}{avg\_deg(G) -1}$ ($=\sum_{x=0}^{l} avg\_deg(G)^x$) vertices within $l$ hops away from $v_i$, and aggregates neighbor embeddings up to $l$ hops. The time complexity reduces to $O\Big(\sum_{v_i \in V(G)} \\\frac{avg\_deg(G)^{l+1} -1}{avg\_deg(G) -1} \cdot d \Big)$. In practice, since real-world graphs typically follow a power-law degree distribution, only a small fraction of vertices have high degrees, and most vertices have low degrees, which usually leads to a low average degree $avg\_deg(G)$. Thus, $avg\_deg(G)^l$ can be treated as a constant, and in turn the time complexity $O\Big(|V(G)| \cdot \frac{avg\_deg(G)^{l+1} -1}{avg\_deg(G) -1} \cdot d \Big)$ of the PGE construction is nearly linear in the graph size.

\begin{figure}[t]
    \centering
    \includegraphics[scale=0.3]{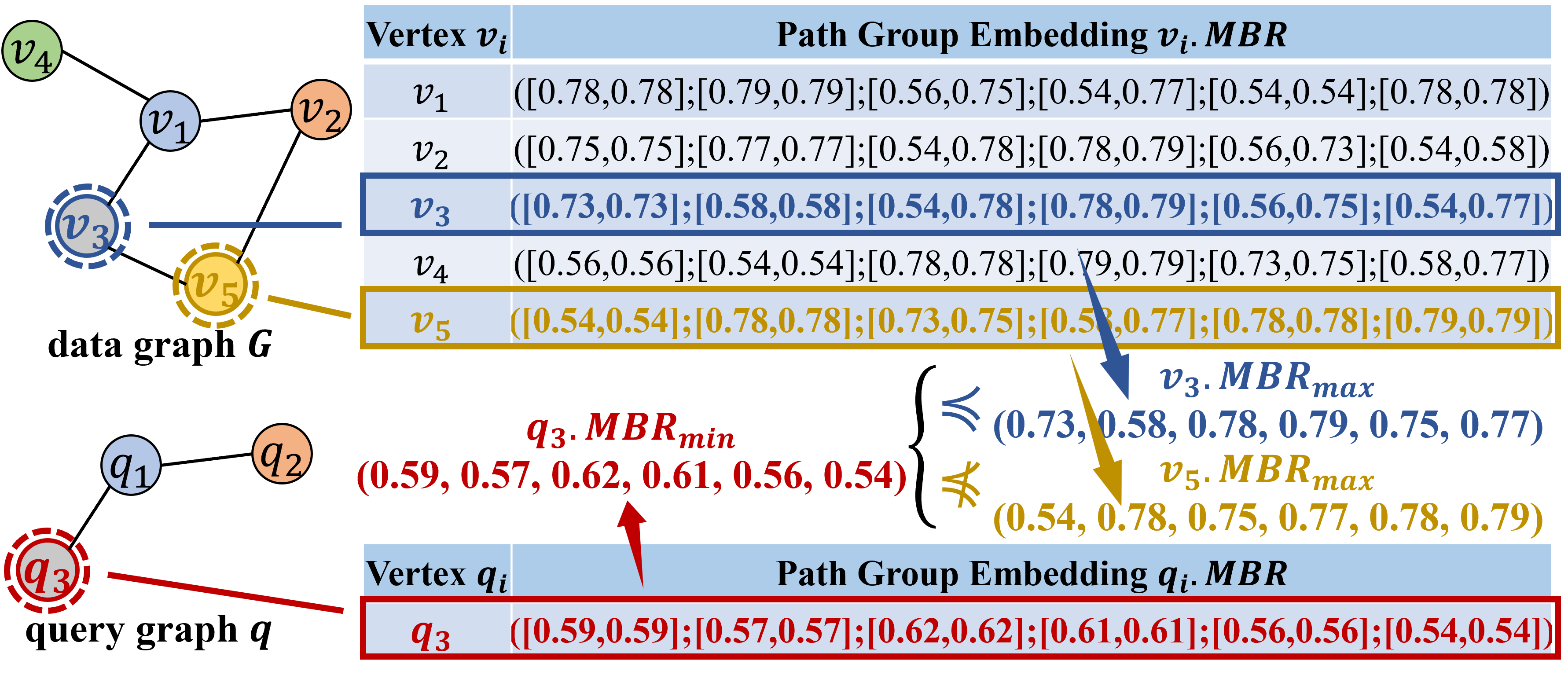}
    \caption{An example of path group embedding usage ($l=2$).}
    \label{fig:path_group_dominance}
\end{figure}

\noindent{\bf Property of the Path Group Dominance Embedding.}
Given a query vertex $q_i$ in the query graph $q$ and a vertex $v_i$ in a subgraph $g$ of the data graph $G$, we denote $q_i.MBR_{min}$ and $v_i.MBR_{max}$ as the minimum and maximum corner points of PGE MBRs $q_i.MBR$ and $v_i.MBR$, respectively, where corner points take lower and upper bounds of MBRs on each dimension, respectively. As a result, if $q_i$ matches with $v_i$, then the dominance relationship must hold that: $q_i.MBR_{min}\preceq$\\ $v_i.MBR_{max}$, that is, $q_i.MBR_{min}[x \cdot d + t]\leq v_i.MBR_{max}[x \cdot d + t]$ for all the $(l\cdot d)$ dimensions, where $0 \leq x \leq l$ and $0 \leq t < d$.

Note that, the condition $q_i.MBR_{min}\preceq v_i.MBR_{max}$ captures the potential dominance relationship between any query and data path embeddings in MBRs $q_i.MBR$ and $v_i.MBR$, respectively, in the embedding space. In other words, if the PGE MBR, $v_i.MBR$, of a data vertex $v_i$ intersects with the dominating region of minimum corner point $q_i.MBR_{min}$ of $q_i$'s PGE MBR $q_i.MBR$ (as illustrated in Section \ref{subsec:node_embedding}), then $v_i$ may match with $q_i$.

\begin{example}
\it
    Figure~\ref{fig:path_group_dominance} illustrates the dominance property of the path group embedding with path length $l=2$. 
    Based on the vertex embedding shown in Figure \ref{fig:nodeembedding}, for the query vertex $q_3$ and the data vertex $v_3$, we have $q_3.MBR_{min} = (0.59, 0.57,\\ 0.62, 0.61, 0.56, 0.54)$ and $v_3.MBR_{max} = (0.73, 0.58, 0.78, 0.79,\\ 0.75, 0.77)$. 
    Based on the PGE dominance property, we can see that $q_3.MBR_{min}$ dominates $v_3.MBR_{max}$, which indicates that vertex $v_3$ may potentially match with query vertex $q_3$.
    
    On the other hand, the maximum corner point, $v_5.MBR_{max}$ $=(0.54, 0.78, 0.75, 0.77, 0.78, 0.79)$, of PGE MBR $v_5.MBR$, is not dominated by $q_3.MBR_{min}$, i.e., $q_3.MBR_{min}\npreceq v_5.MBR_{max}$, which means vertex $v_5$ cannot match with query vertex $q_3$.    $\blacksquare$
\end{example}

\subsection{Pruning Strategies with Path Groups}
\label{subsec:pruning_group}
In this subsection, we present effective pruning strategies, namely \textit{path group label} and \textit{path group dominance pruning}, to filter out false alarms of subgraphs $g$ $(\subseteq G)$ that cannot match with a given query graph $q$.

\noindent{\bf Path Group Label Pruning.} The basic idea of the \textit{path group label pruning} is as follows. For each vertex $v_i \in V(G)$, we can find a group of paths, $p_z$, of length $l$ starting from $v_i$, and obtain a \textit{path group label embedding MBR}, denoted as $v_i.MBR_0$, minimally bounding all path group label embeddings $o_0(p_z)$ (as mentioned in Section~\ref{subsec:pruning}). Similarly, for any query vertex $q_i\in V(q)$, we can also obtain its path group label MBR $q_i.MBR_0$. Intuitively, if $q_i$ matches $v_i$, then their corresponding path label MBRs must satisfy the containment relationship (i.e., $q_i.MBR_0 \subseteq v_i.MBR_0$).

\begin{lemma}
    \textbf{(Path Group Label Pruning)}
    Given a vertex $v_i$ in the subgraph $g$ of data graph $G$ and a query vertex $q_i$ in query graph $q$, vertex $v_i$ can be safely pruned, if it holds that $q_i.MBR_0\nsubseteq v_i.MBR_0$.
    \label{lemma:path_label_pruning_group}
\end{lemma}

\begin{proof}
    If a query vertex $q_i$ in query graph $q$ matches a data vertex $v_i$ in a subgraph $g$, then the path group starting from $q_i$ must be a subset of that starting from $v_i$. As a result, the $x$-th hop neighbors of $q_i$ form a subset of the $x$-th hop neighbors of $v_i$, which leads to the containment relationship between their corresponding vertex label embedding MBRs. Therefore, if it holds that $q_i.MBR_0 \nsubseteq v_i.MBR_0$, it indicates that some neighbor(s) of $q_i$ have labels not contained in the neighbors of $v_i$. Hence, data vertex $v_i$ cannot match with query vertex $q_i$ and can thus be safely pruned, due to their neighbor label mismatch in the path group label set, which completes the proof.  \qquad $\square$
\end{proof}

\noindent{\bf Path Group Dominance Pruning.} Next, we consider the pruning via the dominance relationship for paths in path groups of query and data  vertices, $q_i$ and $v_i$, respectively. If $q_i$ matches with $v_i$, then the GNN-based embeddings of paths in $q_i.MBR$ must dominate some path group embeddings in $v_i.MBR$ (as discussed in Section \ref{subsec:path_group_emb}). In other words, if this condition does not hold, then data vertex $v_i$ can be safely pruned. Below, we have the lemma of the path group dominance pruning.

\begin{lemma}
    \textbf{(Path Group Dominance Pruning)}
    Given a vertex $v_i$ in a subgraph $g$ and a query vertex $q_i$ in query graph $q$, vertex $v_i$ cannot match with $q_i$, if $q_i.MBR_{min}\npreceq v_i.MBR_{max}$ holds.
    \label{lemma:path_dominance_pruning_group}
\end{lemma}

\begin{proof}
    Based on the dominance property of path group embedding in Section \ref{subsec:path_group_emb}, if query and data vertices, $q_i$ and $v_i$, match with each other, then their path group embeddings must satisfy the condition that: $q_i.MBR_{min}\preceq v_i.MBR_{max}$. Thus, by the contrapositive, if this condition does not hold (i.e., $q_i.MBR_{min}\npreceq v_i.MBR_{max}$), then $q_i$ does not match with $v_i$, which completes the proof.  \qquad $\square$
\end{proof}

\subsection{Indexing Mechanism Over PGE MBRs}

Similar to the GNN-PE index, for each graph partition $G_j$ ($1\leq j\leq m$), we build an aR-tree index $\mathcal{I}_j$ over path group label embeddings, $v_i.MBR_0$, and path group dominance embeddings, $v_i.MBR$, which can facilitate the pruning, as discussed in Lemmas \ref{lemma:path_label_pruning_group} and \ref{lemma:path_dominance_pruning_group}, respectively.

\noindent{\bf Leaf Nodes.}
Each leaf node $N\in \mathcal{I}_j$ contains $n$ path group embedding (PGE) MBRs, $v_i.MBR$, of vertices $v_i$. Each vertex $v_i$ (or PGE MBR $v_i.MBR$) is associated with aggregate data of path group label embeddings, $v_i.MBR_0$.

\noindent{\bf Non-Leaf Nodes.}
Each non-leaf node $N\in \mathcal{I}_j$ contains multiple entries $N_c$, each of which is represented by: (1) a path group embedding MBR, $N_c.MBR$, over PGE MBRs $v_i.MBR$ of all vertices $v_i$ under entry $N_c$, and; (2) a path group label embedding MBR, $N_c.MBR_0$, over path group label embedding MBRs $o_0(p_z)$ of all vertices $v_i$ under entry $N_c$.

\noindent{\bf Index-Level Pruning.}
Next, we discuss pruning strategies on the node level of indexes $\mathcal{I}_j$ ($1\leq j\leq m$), which can be used for effectively filtering out node entries with false alarms of vertices.

\underline{\it Index-Level Path Group Label Pruning.} For the \textit{index-level path group label pruning}, we prune those node entries $N_c$ that do not contain some path group label of the query vertex $q_i$. Specifically, by considering their path group label embeddings $N_c.MBR_0$ and $q_i.MBR_0$, we have the following lemma.

\begin{lemma}
{\bf (Index-Level Path Group Label Pruning)} Given a query vertex $q_i$ and an entry $N_c$ of index node $N$, entry $N_c$ can be safely pruned (w.r.t. $q_i$), if it holds that $q_i.MBR_0\nsubseteq N_c.MBR_0$.
\label{lemma:index_label_pruning_group}
\end{lemma}

\begin{proof}
    MBR $N_c.MBR_0$ bounds all the \textit{path group label embedding MBRs}, $v_i.MBR_0$, of vertices $v_i$ under node entry $N_c$. If it holds that $q_i.MBR_0\nsubseteq N_c.MBR_0$, it implies that some path group label of the query vertex $q_i$ does not appear in that of all vertices $v_i$ under $N_c$. Therefore, no vertex under node entry $N_c$ matches with query vertex $q_i$, and thus can be safely pruned.\qquad $\square$
\end{proof}

\underline{\it Index-Level Path Group Dominance Pruning.} The \textit{index-level path group dominance pruning} aims to rule out those node entries $N_c$, in each of which all \textit{path group dominance embedding MBRs} are not dominated by that of the query vertex $q_i$.

Let $DR(q_i.MBR_{min})$ be a \textit{dominating region} that is dominated by the minimum corner point of vertex $q_i$'s MBR $q_i.MBR$ in the embedding space. Then, we have the following lemma of index-level path group dominance pruning.

\begin{lemma}
{\bf (Index-Level Path Group Dominance Pruning)} Given a query vertex $q_i$ and a node entry $N_c$, entry $N_c$ can be safely pruned (w.r.t. $q_i$), if $DR(q_i.MBR_{min})\cap N_c.MBR = \emptyset$ (i.e., $q_i.MBR_{min} \npreceq N_c.MBR_{max}$) holds.
\label{lemma:index_dominance_pruning_group}
\end{lemma}

\begin{proof}
    If $DR(q_i.MBR_{min})$ and $N_c.MBR$ do not overlap with each other (i.e., $DR(q_i.MBR_{min})\cap N_c.MBR = \emptyset$), then node entry $N_c$ does not contain any paths whose path embeddings are dominated by that of $q_i$'s. In other words, $q_i$ does not match with any vertices under node entry $N_c$. Thus, entry $N_c$ can be safely pruned with respect to $q_i$, which completes the proof. \qquad $\square$
\end{proof}

\begin{algorithm}[t]
\caption{\bf Offline Pre-Computation of the GNN-PGE Approach}
\label{alg5}
\KwIn{
    i) a subgraph partition $G_j$ of data graph $G$;
    ii) path length $l$;
}
\KwOut{
    an index $\mathcal{I}_j$ over $G_j$; trained mult-GNN models $M_j$
}

\tcp{the model training (the same protocol as GNN-PE)}
extract unit star subgraph $g_{v_i}$ and its substructures $s_{v_i}$ to build the training data set

train GNN model $M_j$ until the loss equals 0 over this data set, following Algorithm~\ref{alg2}

generate vertex dominance embeddings $o(v_i)$ and vertex label embeddings $o_0(v_i)$ by the trained model $M_j$

\tcp{path group dominance embedding generation and index construction}

\For{each vertex $v_i\in G_j$}{

    obtain $v_i$'s $x$-hop neighbors in $\mathcal{N}_x(v_i)$ ($1 \leq x \leq l$)

    compute the path group dominance embedding MBR, $v_i.MBR$, and path group label embedding MBR, $v_i.MBR_0$, via Eqs. (\ref{eq:MBR1}) and (\ref{eq:MBR2})
    
    insert $v_i.MBR$ (associated with aggregate $v_i.MBR_0$) into the aR-tree index $\mathcal{I}_j$
}

{\bf return} $\mathcal{I}_j$ and $M_j$;\\

\end{algorithm}

\begin{algorithm}[t]
\caption{\bf Online Subgraph Matching of the GNN-PGE Approach}
\label{alg6}
\KwIn{
    i) an index $\mathcal{I}_j$ over graph partition $G_j$;
    ii) the trained multi-GNN models $M_j$, and;
    iii) a query graph $q$
}
\KwOut{
    a set, $\mathcal{S}$, of matching subgraphs
}

\tcp{generate query path group embeddings}

\For{each query vertex $q_i$}{
    generate path group dominance embedding MBR, $q_i.MBR$, and path group label embedding MBR, $q_i.MBR_0$, following lines 5-8 of Algorithm \ref{alg5}
}

\tcp{traverse index $\mathcal{I}_j$ to find candidate vertices}

initialize a \textit{maximum heap} $\mathcal{H}$ accepting entries in the form $(N, key(N))$

$root\big(\mathcal{I}_j\big).list \leftarrow V(q)$

insert $\big(root\big(\mathcal{I}_j\big), 0\big)$ into $\mathcal{H}$

\While{$\mathcal{H}$ is not empty}{
    de-heap an entry $(N, key(N)) = \mathcal{H}.pop()$
    
    \If{$key(N) < \min_{\forall q_i\in V(q)}\{||q_i.MBR_{min}||_1\}$}{
        terminate the loop;
    }
    
    \eIf{$N$ is a leaf node}{
        \For{each data vertex $v_c\in N$}{
            \For{each query vertex $q_i\in N.list$}{
                \If{$q_i.MBR_0\subseteq v_c.MBR_0$}{\tcp{Lemma \ref{lemma:path_label_pruning_group}}
                    \If{$q_i.MBR_{min}\preceq v_c.MBR_{max}$}{\tcp{Lemma \ref{lemma:path_dominance_pruning_group}}
                        $q_i.cand\_list \leftarrow q_i.cand\_list \cup \{v_c\}$
                    }
                }
            }
        }
    }
    {
        \For{each child node $N_c \in N$}{
            \For{each query vertex $q_i\in N.list$}{
                \If{$q_i.MBR_0\subseteq N_c.MBR_0$}{ \tcp{Lemma \ref{lemma:index_label_pruning_group}}
                    \If{$DR(q_i.MBR_{min})\cap N_c.MBR \ne \emptyset$}{\tcp{Lemma \ref {lemma:index_dominance_pruning_group}}
                        $N_c.list \leftarrow N_c.list \cup \{q_i\}$ 
                    }
                }
            }
            \If{$N_c.list \ne \emptyset$}{
                insert $(N_c, key(N_c))$ into heap $\mathcal{H}$\\
            }
        }
    }
}
    
\tcp{generate the matching order }

obtain an ordered list $\Phi$ based on the number of candidates for each query vertex

\tcp{refine candidate subgraphs}

invoke a back-tracking search function to obtain the actual matching subgraphs $g$ and add them to $\mathcal{S}$

{\bf return} $\mathcal{S}$;

\end{algorithm}

\subsection{GNN-PGE Algorithm}
In this subsection, we illustrate the exact subgraph matching algorithm, named GNN-PGE, by traversing indexes over GNN-based path group label/dominance embeddings, which, similar to GNN-PE, consists of \textit{offline pre-computation} and \textit{online subgraph matching phases}.

\noindent{\bf Offline Pre-Computation.} As shown in Algorithm \ref{alg5}, similar to GNN-PE, the offline pre-computation phase first trains the model $M_j$ on unit star subgraphs $g_{v_i}$ and their substructures $s_{v_i}$ until the dominance loss converges to zero, and then generates vertex dominance embeddings $o(v_i)$ and label embeddings $o_0(v_i)$ (as mentioned in Section~\ref{subsec:pruning}) for all vertices $v_i \in G_j$ (lines 1–3).
Next, we obtain an $x$-hop neighbor set, $\mathcal{N}_x(v_i)$, for each vertex $v_i$, where $1 \leq x \leq l$ (line 5). 
Next, following Eqs. (\ref{eq:MBR1}) and (\ref{eq:MBR2}), we compute both the path group dominance embedding MBR, $v_i.MBR$, and path group label embedding MBR, $v_i.MBR_0$, by concatenating $l$ MBRs of vertices’ dominance and label embeddings in $\mathcal{N}_x(v_i)$ (for $1\leq x\leq l$), respectively (line 6).
The path group dominance embedding MBR $v_i.MBR$ is then inserted into an aR-tree index $\mathcal{I}_j$, and the path group label embedding MBR $v_i.MBR_0$ is treated as the aggregate of MBR $v_i.MBR$ (line 7).
Finally, we return the constructed index $\mathcal{I}_j$ and trained GNN model $M_j$ (line 8).

\noindent{\bf Online Subgraph Matching Processing.}
Algorithm \ref{alg6} illustrates the online processing of a subgraph matching query, by traversing the indexes over GNN-based path group embedding MBRs. Specifically, given a query graph $q$, we first generate path group embeddings of query vertices $q_i$ via the trained GNN model, in the same way as introduced in Algorithm \ref{alg5} (lines 1-2).
Next, we traverse the index $\mathcal{I}_j$ once to retrieve a candidate vertex set for each query vertex $q_i\in V(q)$ (lines 3-23). Finally, we generate a matching order and 
invoke the back-tracking search to refine/return subgraphs $g\in \mathcal{S}$ that are isomorphic to $q$ (lines 24-26).

\underline{\it Index Traversal.}
To traverse the index $\mathcal{I}_j$, we initialize a \textit{maximum heap} $\mathcal{H}$, which accepts entries in the form of $(N, key(N))$ (line 3), where $N$ is an index entry and $key(N)$ is the key of index entry $N$ in the heap. 
Here, for an index entry $N$, the $key(N)$ of entry $N$ is calculated by $key(N) = ||N.MBR_{max}||_1$, where a $d$-dimensional vector $Z$ has the $L_1$-norm $||Z||_1 = \sum_{i=1}^d|Z[i]|$.

We also initialize a list, $root(\mathcal{I}_j).list$, of the index root with all query vertices in $V(q)$, and insert the first entry $(root(\mathcal{I}_j), 0)$ into heap $\mathcal{H}$ (lines 4-5). We traverse index $\mathcal{I}_j$ by accessing entries from heap $\mathcal{H}$ (lines 6-23). Specifically, each time we pop out an index entry $(N,$ $key(N))$ with the maximum key from heap $\mathcal{H}$ (line 7). If it holds that $key(N)<\min_{\forall q_i\in V(q)}\{||q_i.MBR_{min}||_1\}$, which indicates that vertices in the remaining entries of $\mathcal{H}$ cannot be dominated by any query vertex $q_i\in V(q)$, then we can terminate the index traversal (lines 8-9); otherwise, we will check entries in node $N$.

When we encounter a leaf node $N$, for each vertex $v_c\in N$ and each query vertex $q_i\in N.list$, we apply the \textit{path group label pruning} (Lemma~\ref{lemma:path_label_pruning_group}) and \textit{path group dominance pruning} (Lemma~\ref{lemma:path_dominance_pruning_group}) (lines 10-14). If $v_c$ cannot be ruled out by these two pruning methods, we will add $v_c$ to the candidate vertex set $q_i.cand\_list$ (line 15).

When $N$ is a non-leaf node, we consider each child node $N_c\in N$ and each query vertex $q_i\in N.list$ (lines 16-18). If entry $N_c$ cannot be pruned by \textit{index-level path group label pruning} (Lemma~\ref{lemma:index_label_pruning_group}) or \textit{index-level path group dominance pruning} (Lemma~\ref{lemma:index_dominance_pruning_group}), then we will add $q_i$ to $N_c.list$ for further checking (lines 19-21).
If $N_c.list$ is not empty (i.e., node $N_c$ may contain candidate vertices for some query vertex), we will insert entry $(N_c, key(N_c))$ into heap $\mathcal{H}$ for further investigation (lines 22-23).

The index traversal terminates when either heap $\mathcal{H}$ is empty (line 6) or the remaining heap entries in $\mathcal{H}$ cannot contain vertex candidates (lines 8-9).

\underline{\it Matching Order Generation.}
After finding all candidate vertices in $q_i.cand\_list$ for each query vertex $q_i\in V(q)$, we generate a matching order $\phi$ for the refinement (line 24). The selection of the matching order has the goal of reducing the size of intermediate join results. Thus, we will first choose a query vertex $q_i$ with the smallest list size $|q_i.cand\_list|$ as the first query vertex in an ordered list $\Phi$, and then iteratively append a neighbor $q_j$ of $q_i\in \Phi$ with the minimum number of candidates to $\Phi$ until all query vertices in $V(q)$ have been added to $\Phi$.

\underline{\it Refinement.}
Based on the matching order $\Phi$, we use a left-deep join based method \cite{kankanamge2017graphflow,shang2008taming} to assemble candidate vertices in $q_i.cand\_list$ into candidate subgraphs, and obtain/return the actual matching subgraph answer set $\mathcal{S}$ (lines 25-26).

\underline{\it Complexity Analysis.}
In Algorithm \ref{alg6}, since the time complexity of path group embedding construction is $O(|V(q)|\cdot l\cdot avg\_deg(q)\cdot d)$ and the GNN computation cost is $O(|E(q)|+d^2)$ \cite{wu2020comprehensive,wang2022reinforcement}, the time complexity of generating path group embeddings (lines 1-2) is given by $O(|V(q)|\cdot l\cdot avg\_deg(q)\cdot d+V|q|\cdot (|E(q)|+d^2))$. 

For the index traversal (lines 3-23), assume that $h$ is the height of the aR-tree $\mathcal{I}_j$, $f$ is the average fanout of each index node $N$, and the pruning power is $PP_i$ on the $i$-th level of the tree index.
Then, on the $i$-th level, the number of index nodes (or paths) to be accessed is given by $f^{h-i+1}\cdot (1-PP_{i})$. Thus, the total index traversal cost has $O\big(\sum^{h}_{i=0}|Q|\cdot f^{h-i+1}\cdot (1-PP_{i})\big)$ complexity.

Next, for the greedy-based matching order generation (line 24), we need to iteratively select a neighbor of vertices in the query plan $Q$, which requires $O(|V(q)|^2)$ cost.

Finally, we use a left-deep join to assemble and refine candidate vertices (line 25). The worst-case time complexity is given by $O(\prod_{i=0}^{|V(q)|-1}|q_i.cand\_list|)$.

Therefore, the overall time complexity of Algorithm \ref{alg6} is given by $O(V|q|\cdot (|E(q)|+d^2)+|V(q)|\cdot l\cdot avg\_deg(q)\cdot d+\sum^{h}_{i=0}|Q|\cdot f^{h-i+1}\cdot (1-PP_{i})+|V(q)|^2+\prod_{i=0}^{|V(q)|-1}|q_i.cand\_list|)$.

\section{Experimental Evaluation}
\label{sec:expr}

\subsection{Experimental Settings}
To evaluate the effectiveness and efficiency of our GNN-PE approach, we conduct experiments on an Ubuntu server equipped with an Intel Core i9-12900K CPU, 128GB memory, and NVIDIA GeForce RTX 4090 GPU. The GNN training of our approach is implemented by PyTorch, where embedding vectors are offline computed on the GPU. The online subgraph matching is implemented in C++ with multi-threaded support on the CPU to enable parallel search on multiple subgraph partitions. 

For the GNN model (as mentioned in Section~\ref{subsec:node_embedding}), by default, we set the dimension of initial input node feature $F=1$, attention heads $K=3$, the dimension of hidden node feature $F'=32$, and the dimension of the output node embedding $d=2$.
During the training process, we use the Adam optimizer to update parameters and set the learning rate $\eta=0.001$. Due to different data sizes, we set different batch sizes for different data sets (from 128 to 1,024). Based on the statistics of real data graphs (e.g., only $7.24\%$ vertices have a degree greater than 10 in Youtube), we set the degree threshold $\theta$ to 10 by default.

Our source code and real/synthetic graph data sets are available at URL: \textit{\url{https://github.com/JamesWhiteSnow/GNN-PE}}.

\noindent{\bf Baseline Methods.} We compare the performance of our GNN-PE approach with that of eight representative subgraph matching baseline methods as follows: 

\begin{itemize}
    \item  {\bf GraphQL (GQL)} \cite{he2008graphs} is a ``graphs-at-a-time'' method that improves the query efficiency by retrieving multiple related patterns simultaneously.

    \item {\bf QuickSI (QSI)} \cite{shang2008taming} is a direct-enumeration method that filters out the unpromising vertices during the enumeration.

    \item {\bf RI} \cite{bonnici2013subgraph} is a state space representation-based model that prioritizes constraints based on pattern graph topology to generate the query plan.

    \item {\bf CFLMatch (CFL)} \cite{bi2016efficient} is a preprocessing-enumeration method that utilizes the concept of context-free language to express substructures of a graph.

    \item {\bf VF2++ (VF)} \cite{juttner2018vf2++} is a state space representation model that employs heuristic rules, pruning strategies, and preprocessing steps to enhance the performance.

    \item {\bf DP-iso (DP)} \cite{han2019efficient} is a preprocessing-enumeration method that combines dynamic programming, adaptive matching order, and failing set to improve the matching efficiency.

    \item {\bf CECI} \cite{bhattarai2019ceci} is a compact embedding cluster index method that enables efficient matching operations by leveraging compact encoding, clustering, and neighbor mapping techniques.

    \item {\bf Hybrid} \cite{sun2020memory} is a hybrid method where the algorithms of the candidate filtering, query plan generation, and enumeration use GQL, RI, and QSI, respectively.
\end{itemize}

We used the code of baseline methods from \cite{sun2020memory}, which is implemented in C++ by enabling multi-threaded CPU support for a fair comparison.

\begin{table}[t]\small
\begin{center}
\caption{Statistics of real-world graph data sets.}
\label{tab:datasets}
\begin{tabular}{|l||c|c|c|c|}
\hline
\textbf{\text{ }\text{ }Data Sets}&\textbf{$|V(G)|$}&\textbf{$|E(G)|$}&\textbf{$|\sum|$}&\textbf{$avg\_deg(G)$} \\
\hline\hline
    Yeast (ye) & 3,112 & 12,519 & 71 & 8.0\\\hline
    Human (hu) & 4,674 & 86,282 & 44 & 36.9\\\hline
    HPRD (hp) & 9,460 & 34,998 & 307 & 7.4\\\hline
    WordNet (wn) & 76,853 & 120,399 & 5 & 3.1\\\hline
    DBLP (db) & 317,080 & 1,049,866 & 15 & 6.6\\\hline
    Youtube (yt) & 1,134,890 & 2,987,624 & 25 & 5.3\\\hline
    US Patents (up) & 3,774,768 & 16,518,947 & 20 & 8.8\\\hline
\end{tabular}
\end{center}
\end{table}

\begin{table}[t]\small
\begin{center}
\caption{Parameter settings.}
\label{tab:parameters}
\begin{tabular}{|p{5cm}||p{3.2cm}|}
\hline
\textbf{Parameters}&\textbf{Values} \\
\hline\hline
    the path length $l$ & 1, {\bf 2}, 3\\\hline
    the dimension, $d$, of the node embedding vector  & {\bf 2}, 3, 4, 5\\\hline
    the number, $n$, of multi-GNNs & 0, 1, {\bf 2}, 3, 4\\\hline 
    the number, $b$, of GNNs with randomized initial weights & {\bf 1}, 2, 3, 4, 5, 6, 7, 8, 9, 10\\\hline
    the size, $|V(q)|$, of the query graph $q$ & 5, 6, {\bf 8}, 10, 12\\\hline
    the average degree, $avg\_deg(q)$, of the query graph $q$ & 2, {\bf 3}, 4\\\hline  
    the size, $|V(G)|/m$, of subgraph partitions & 5K, 6K, {\bf 10K}, 20K, 50K\\\hline
    the number, $|\sum|$, of distinct labels & 100, 200, {\bf 500}, 800, 1K\\\hline
    the average degree, $avg\_deg(G)$, of the data graph $G$ & 3, 4, {\bf 5}, 6, 7\\\hline     
    the size, $|V(G)|$, of the data graph $G$ & 10K, 30K, {\bf 50K}, 80K, 100K, 500K, 1M\\\hline 
\end{tabular}
\end{center}
\end{table}

\noindent{\bf Real/Synthetic Graph Data Sets.} We use both real and synthetic graphs to evaluate our GNN-PE approach, compared with baselines.

\underline{\textit{Real-world graphs.}} We used seven real-world graph data used by previous works \cite{he2008graphs, shang2008taming, zhao2010graph, lee2012depth, sun2012efficient, han2013turboiso, ren2015exploiting, bi2016efficient, katsarou2017subgraph, bhattarai2019ceci, han2019efficient, sun2020memory}, which can be classified into four categories: i) biology networks (Yeast, Human, and HPRD); ii) lexical networks (WordNet); iii) bibliographical/social networks (DBLP and Youtube); and iv) citation networks (US Patents). 
Based on the graph size, we divide Yeast, Human, and HPRD into 5 partitions, WordNet into 7 subgraphs, DBLP into 30 partitions, Youtube into 346 partitions, and US Patents into 1,000 partitions. Statistics of these real graphs are summarized in Table~\ref{tab:datasets}. 

\underline{\textit{Synthetic graphs.}} We generated synthetic graphs via NetworkX \cite{hagberg2020networkx} which produces small-world graphs following the Newman-Watts-Strogatz model \cite{watts1998collective}. Parameter settings of synthetic graphs are depicted in Table~\ref{tab:parameters}. For each vertex $v_i$, we generate its label $L(v_i)$ by randomly picking up an integer in the range $[1, |\sum|]$, following the Uniform, Gaussian, or Zipf distribution. Accordingly, we obtain three types of data graphs, denoted as $Syn\text{-}Uni$, $Syn\text{-}Gau$, and $Syn\text{-}Zipf$, respectively.

\begin{figure}[t]
\centering
\subfigure[][{Syn-Uni}]{                    
\scalebox{0.13}[0.13]{\includegraphics{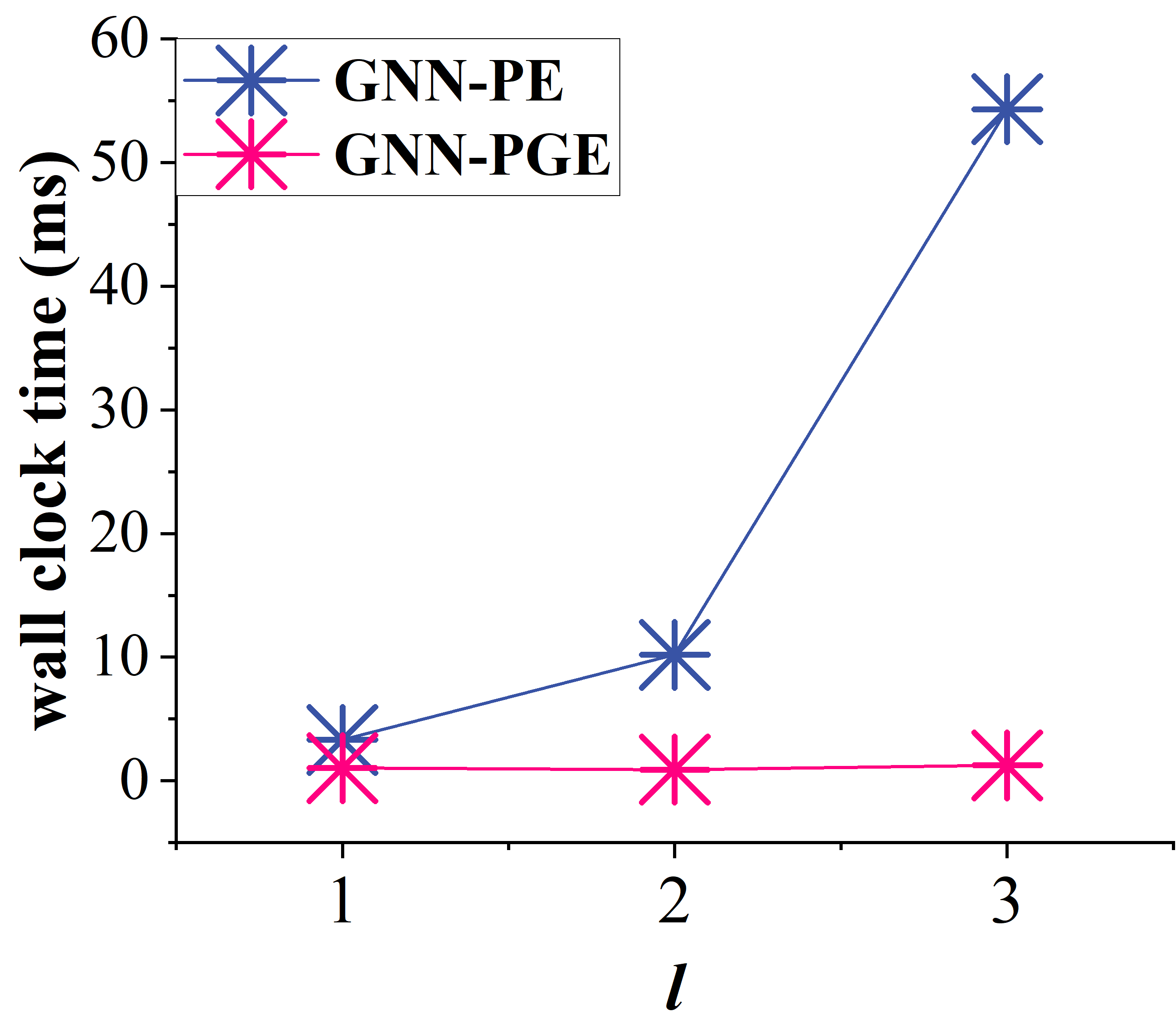}}\label{subfig:l}}
\subfigure[][{Syn-Gau}]{
\scalebox{0.13}[0.13]{\includegraphics{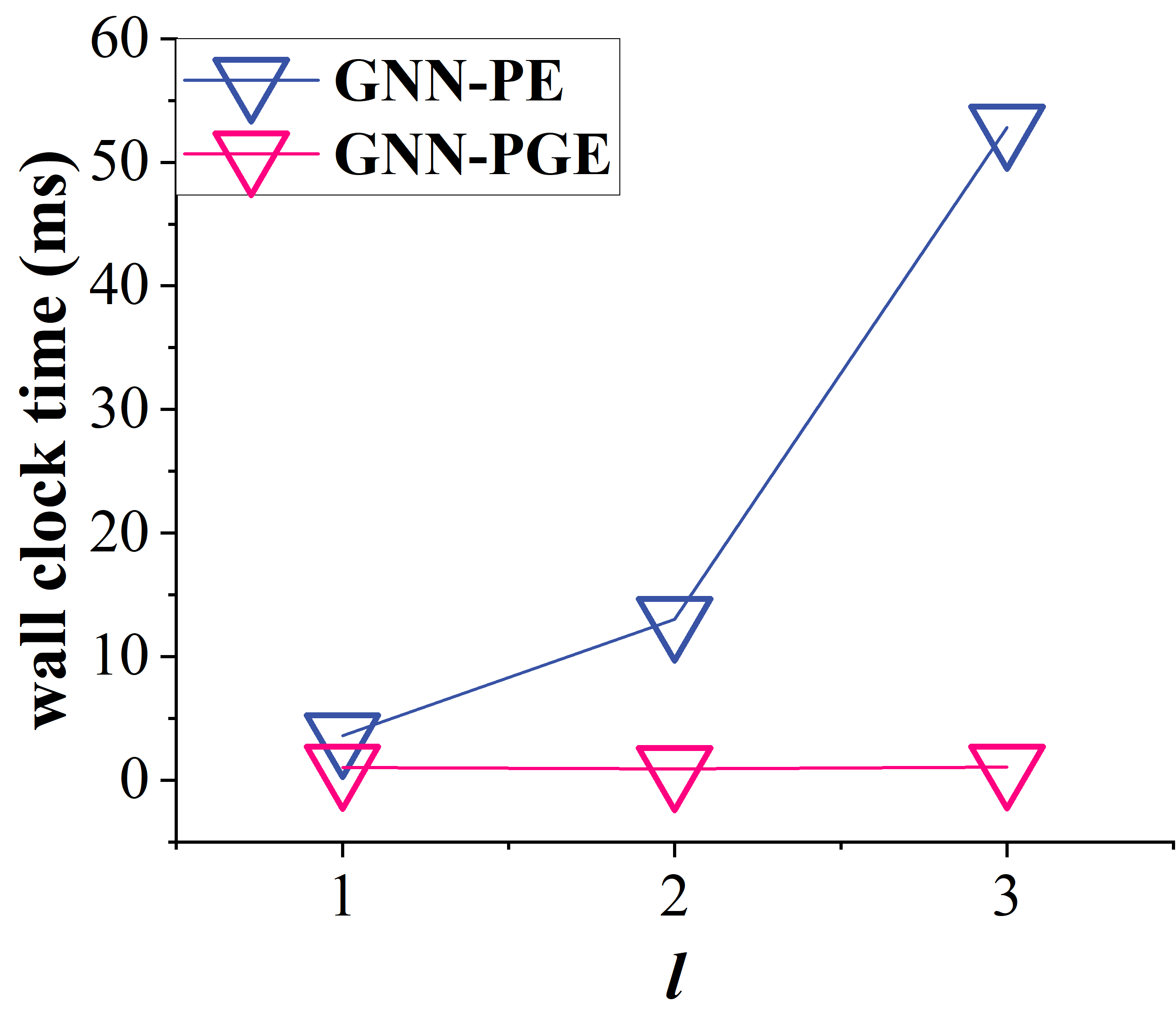}}\label{subfig:b}}
\subfigure[][{Syn-Zipf}]{
\scalebox{0.13}[0.13]{\includegraphics{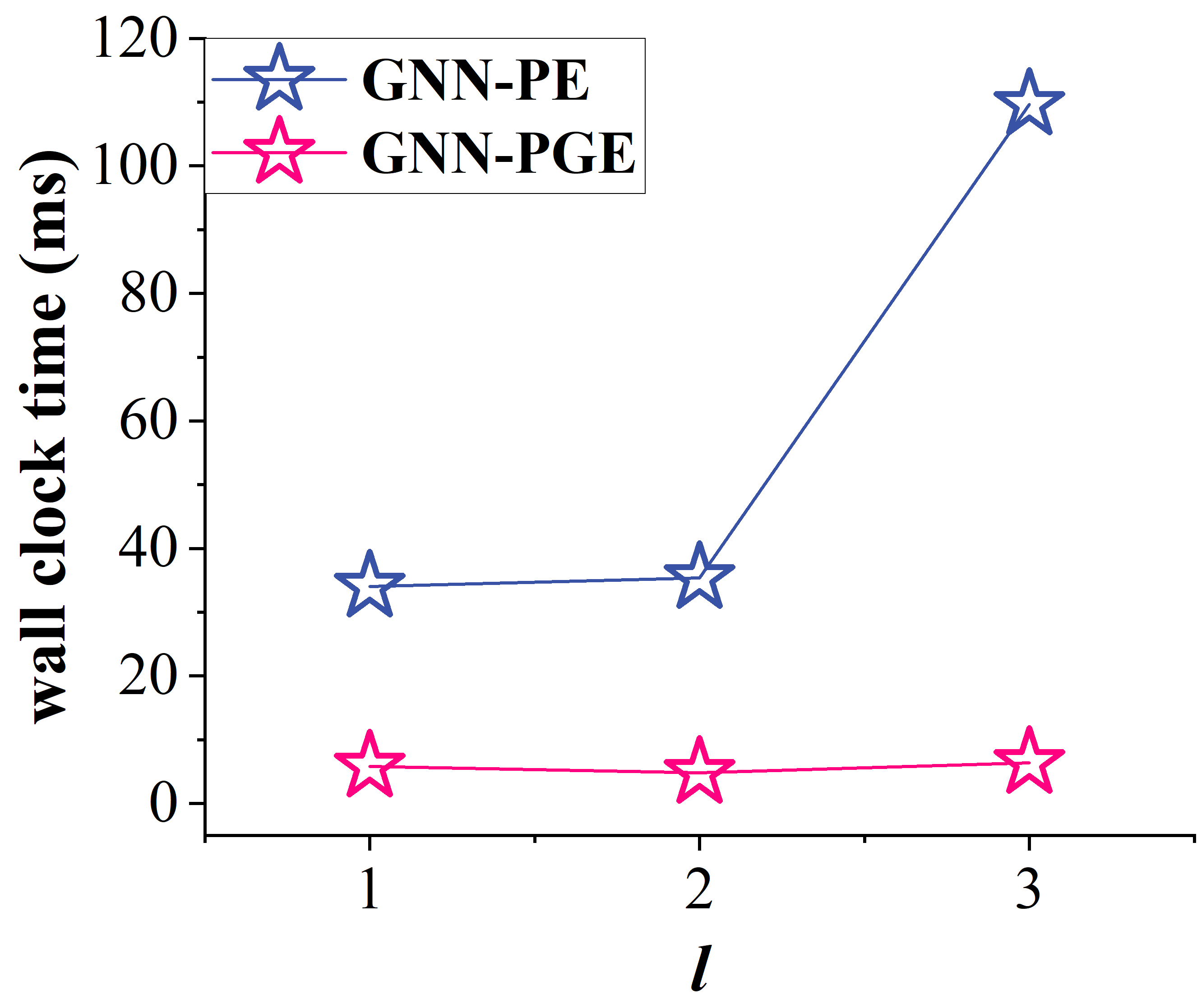}}\label{subfig:strategies}}
\caption{Efficiency evaluation w.r.t different path lengths $l$.}
\label{fig:path_length}
\end{figure}

\begin{figure}[ht]
\centering
\subfigure[][{embedding dimension $d$}]{
\scalebox{0.17}[0.17]{\includegraphics{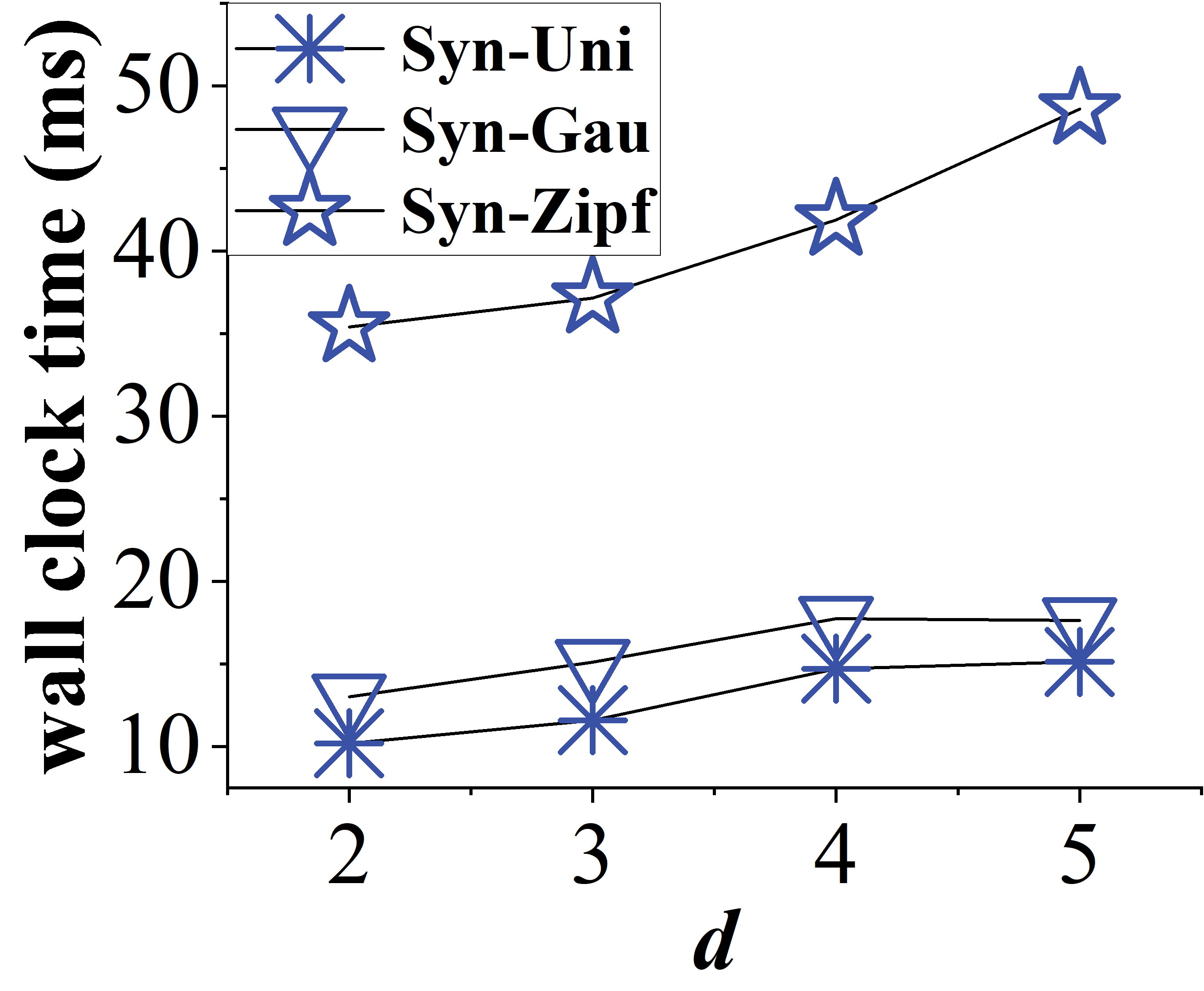}}\label{subfig:d}}
\subfigure[][{\# of multi-GNNs, $n$}]{                 
\scalebox{0.17}[0.17]{\includegraphics{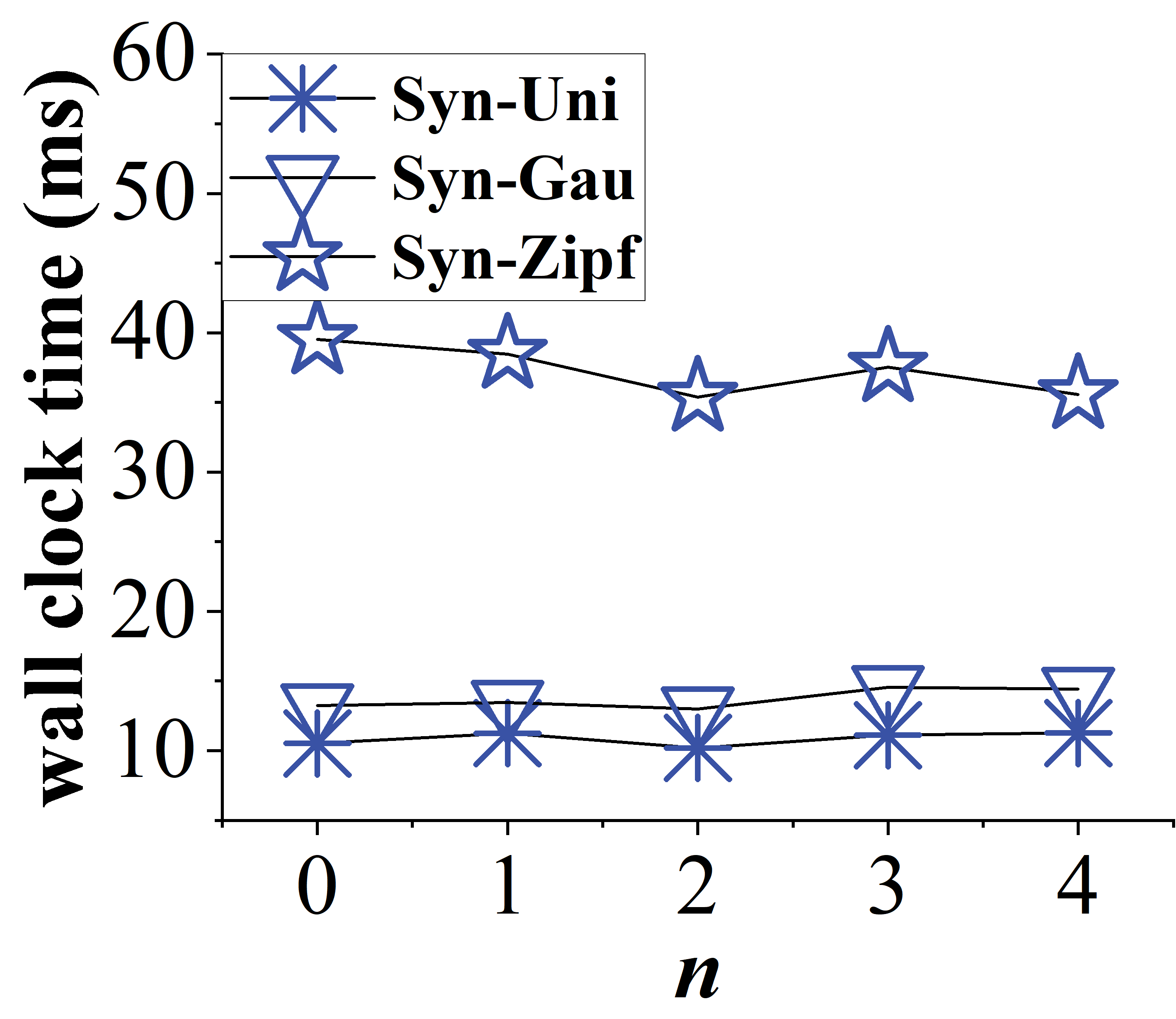}}\label{subfig:n}}
\\
\subfigure[][{\# of initial GNNs, $b$}]{
\scalebox{0.17}[0.17]{\includegraphics{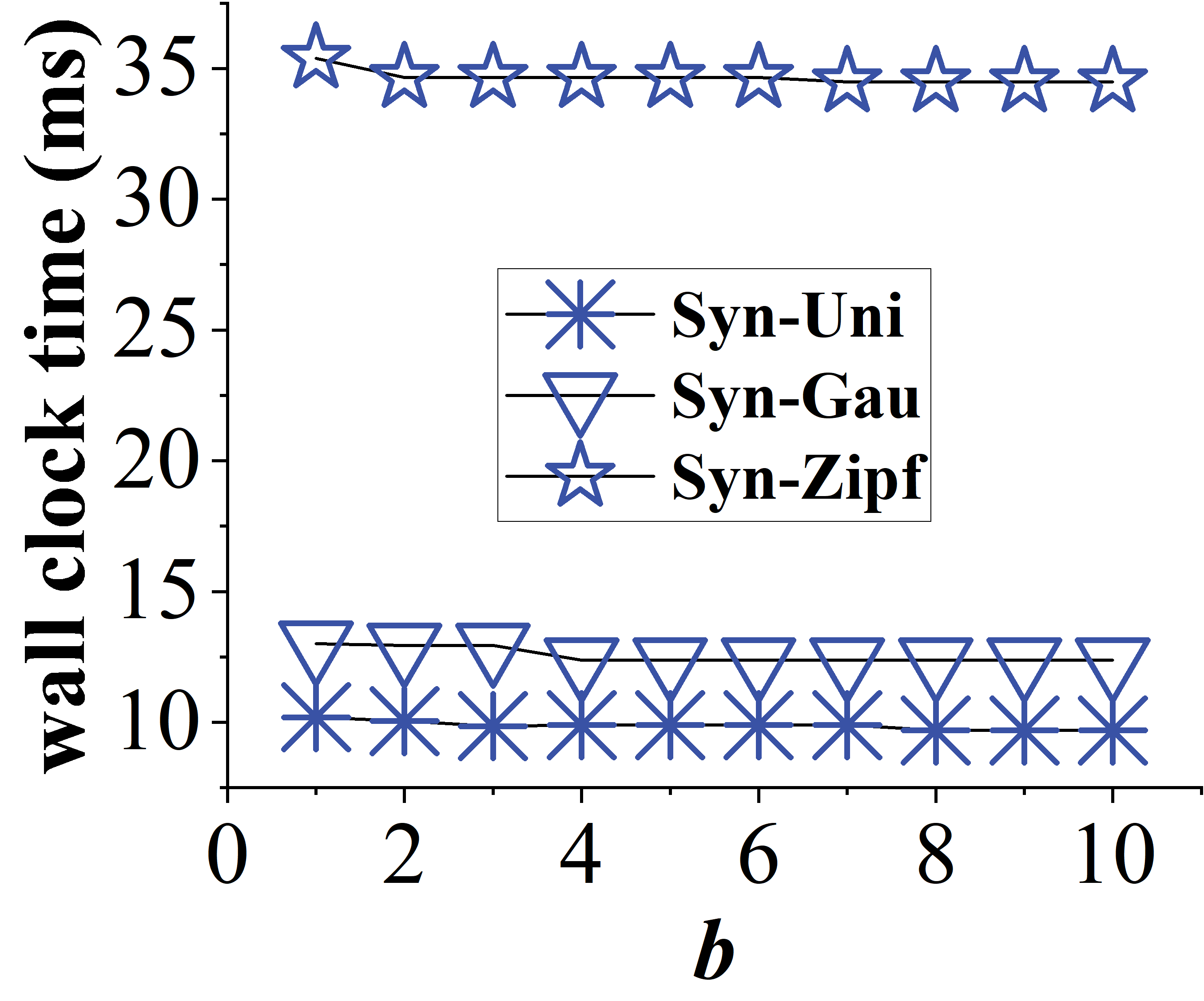}}\label{subfig:b}}
\subfigure[][{query plan selection}]{
\scalebox{0.17}[0.17]{\includegraphics{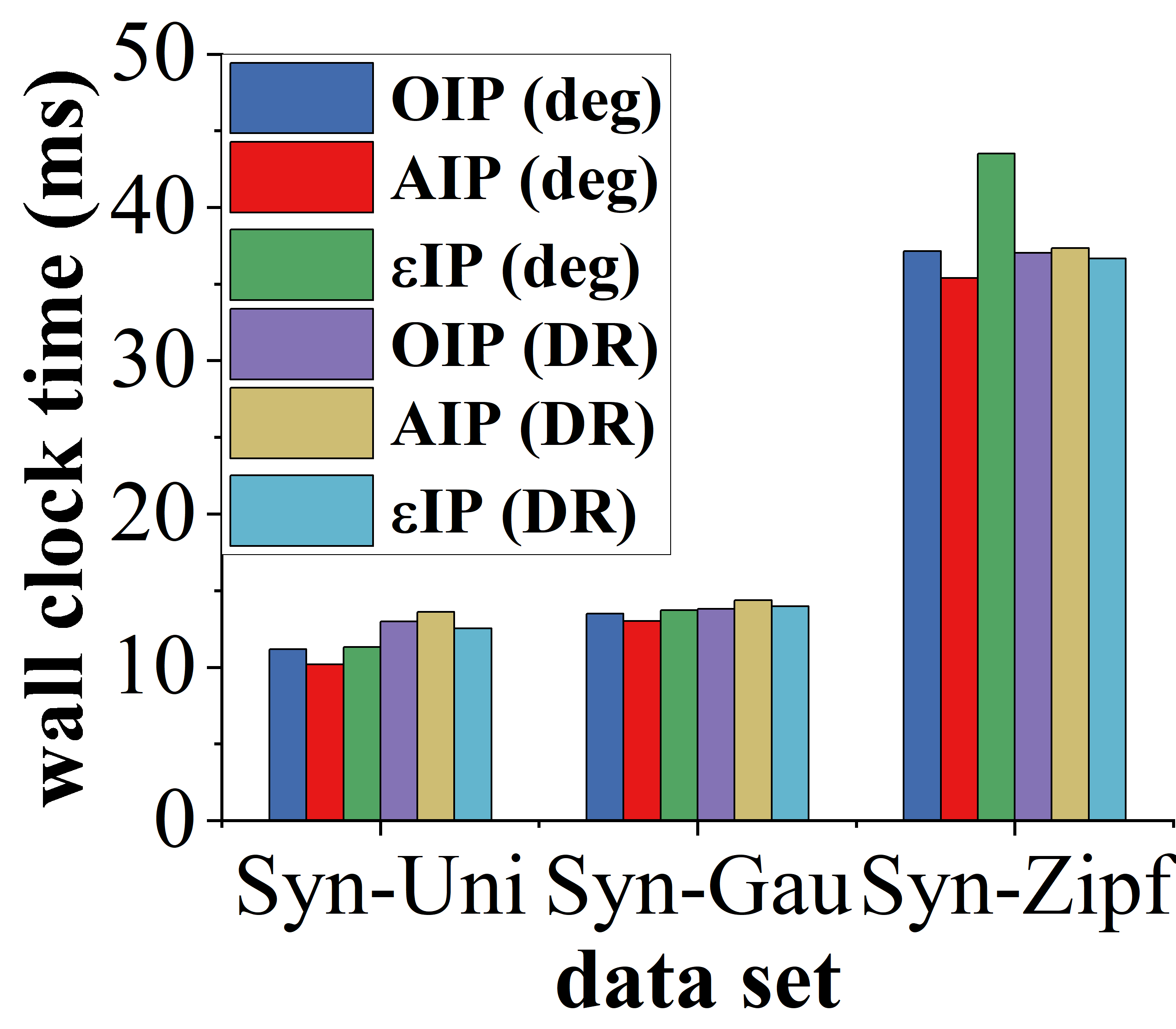}}\label{subfig:strategies}}
\caption{GNN-PE efficiency evaluation w.r.t different parameters $\bm{d}$, $\bm{n}$, $\bm{b}$, and query plan selection strategies.}
\label{fig:parameter_tuning}
\end{figure}

\noindent{\bf Query Graphs.} Similar to previous works \cite{sun2012efficient, han2013turboiso, ren2015exploiting, bi2016efficient, katsarou2017subgraph, archibald2019sequential, bhattarai2019ceci, han2019efficient}, for each graph data set $G$, we randomly extract/sample 100 connected subgraphs from $G$ as query graphs, where parameters of query graphs $q$ (e.g., $|V(q)|$ and $avg\_deg(q)$)  are depicted in Table~\ref{tab:parameters}. 
Specifically, to generate a query graph $q$, we first perform a random walk in the data graph $G$ until obtaining $|V(q)|$ vertices, and then check whether or not the average degree of the induced subgraph is larger than or equal to $avg\_deg(q)$. If yes, we randomly delete edges from the subgraph, until the average degree becomes $avg\_deg(q)$; otherwise, we start from a new vertex to perform the random walk.

\noindent{\bf Evaluation Metrics.}  
In our experiments, we report the efficiency of our GNN-PE approach and baseline methods, in terms of the \textit{wall clock time} (including both filtering and refinement time costs). We also evaluate the \textit{pruning power} of our pruning strategies (as mentioned in Sections \ref{subsec:pruning} and \ref{subsec:pruning_group}), which is the percentage of candidate vertices that can be ruled out by our pruning methods for GNN-PE and GNN-PGE, respectively. For all the experiments, we take an average of each metric over 100 runs (w.r.t. 100 query graphs, respectively). We also test offline pre-computation costs of our GNN-PE and GNN-PGE approaches, including the GNN training time, path embedding time, and index construction time. 

Table~\ref{tab:parameters} depicts parameter settings in our experiments, where default parameter values are in bold. For each set of subsequent experiments, we vary the value of one parameter while setting other parameters to their default values.

\subsection{Parameter Tuning}
In this subsection, we first tune parameters and query plan selection strategies for our GNN-PE approach over synthetic graphs.

\noindent {\bf The GNN-PE/GNN-PGE Efficiency Evaluation w.r.t. Path Length $\bm{l}$.}
Figure \ref{fig:path_length} illustrates the performance of GNN-PE and GNN-PGE, by varying the path length $l$ from 1 to 3, where other parameters are by default. When $l$ increases, more vertex labels and dominance embeddings on each path are used for pruning, which incurs higher pruning power. On the other hand, however, longer path length $l$ will result in more data paths from graph $G$ and higher dimensionality of the index, which may lead to higher costs to process more candidate paths. Thus, the GNN-PE efficiency is affected by the two factors above. From the figure, when $l=1, 2$, the wall clock times are comparable for all three synthetic graphs; when $l=3$, the time cost suddenly increases due to much more candidate paths to process and the ``dimensionality curse'' \cite{BerchtoldKK96}.
On the other hand, due to the path grouping, GNN-PGE is not very sensitive to changes in the length of a single path.
Nonetheless, for all $l$ values, the wall clock time remains low (i.e., 0.0033 $sec$ $\sim$ 0.1096 $sec$ for GNN-PE and 0.0009 $sec$ $\sim$ 0.0063 $sec$ for GNN-PGE).

\noindent {\bf The GNN-PE Efficiency Evaluation w.r.t. Embedding Dimension $\bm{d}$.}
Figure \ref{subfig:d} varies the embedding dimension $d$ via a GNN from 2 to 5,  where other parameters are set to their default values. With the increase of the embedding dimension $d$, the wall clock time also increases. This is mainly due to the ``dimensionality curse'' \cite{BerchtoldKK96} when traversing the aR-tree index. Nonetheless, for different $d$ values, the query cost remains low (i.e., less than 0.02 $sec$ for $Syn\text{-}Uni$ and $Syn\text{-}Gau$, and 0.05 $sec$ for $Syn\text{-}Zipf$).

\noindent {\bf The GNN-PE Efficiency Evaluation w.r.t. Number, $\bm{n}$, of Multi-GNNs.}
Figure \ref{subfig:n} shows the performance of our GNN-PE approach, by varying the number, $n$, of multi-GNNs from 0 to 4, where other parameters are set by default. When $n=0$ (i.e., no multi-GNNs), $Syn\text{-}Zipf$ takes a higher time cost than that for $n>0$ (due to the lower pruning power without multi-GNNs). Since more GNNs for embeddings may lead to higher pruning power, but higher processing cost as well, the total GNN-PE time cost is affected by these two factors and relatively stable over $Syn\text{-}Uni$ and $Syn\text{-}Gau$ for $n>0$. Overall, for different $n$ values, the time cost remains low (i.e., 0.01 $sec$ $\sim$ 0.04 $sec$).

\noindent{\bf The GNN-PE Efficiency Evaluation w.r.t. Number, $\bm{b}$, of GNNs with Random Initial Weights.}
Figure~\ref{subfig:b} illustrates the performance of our GNN-PE approach, by varying the number, $b$, of the trained GNN models with random initial weights from 1 to 10, where other parameters are set by default. The experimental results show that the GNN-PE performance is not very sensitive to $b$. Since more GNNs trained will lead to higher training costs, in this paper, we set $b=1$ by default. Nonetheless, for different $b$ values, the query cost remains low (i.e., 0.01 $sec$ $\sim$ 0.04 $sec$).

\noindent {\bf The GNN-PE Efficiency Evaluation w.r.t. Query Plan Selection Strategies.}
Figure \ref{subfig:strategies} reports the performance of our GNN-PE approach with different query plan selection strategies, OIP, AIP, and $\boldsymbol{\varepsilon}$IP (as mentioned in Section \ref{sec:query_plan}), where the path weight $w(p_q)$ is estimated by vertex degrees or counts in dominating regions (denoted as (deg) and (DR), respectively), and default values are used for all parameters. From the figure, we can see that different strategies result in slightly different performances, and AIP (deg) consistently achieves the best performance. The experimental results on real-world graphs are similar and thus omitted here.


In subsequent experiments, we will set parameters $l=2$, $d=2$, $n=2$, and $b=1$, and use AIP (deg) as our default query plan selection strategy.

\begin{figure}[t]
\centering
\subfigure[][{real-world graphs}]{                    
\scalebox{0.13}[0.13]{\includegraphics{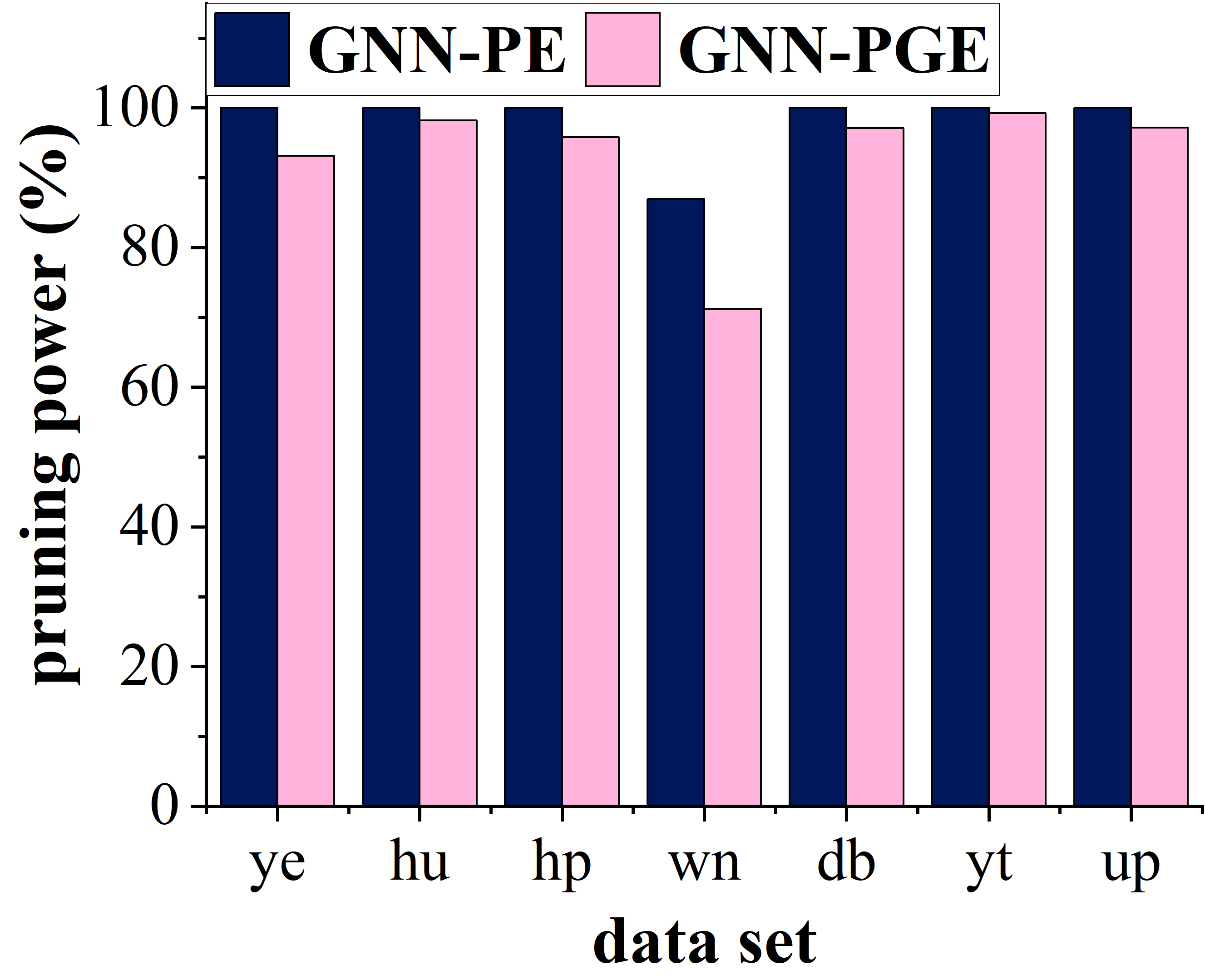}}\label{subfig:real_pp}}
\qquad\qquad
\subfigure[][{synthetic graphs}]{
\scalebox{0.13}[0.13]{\includegraphics{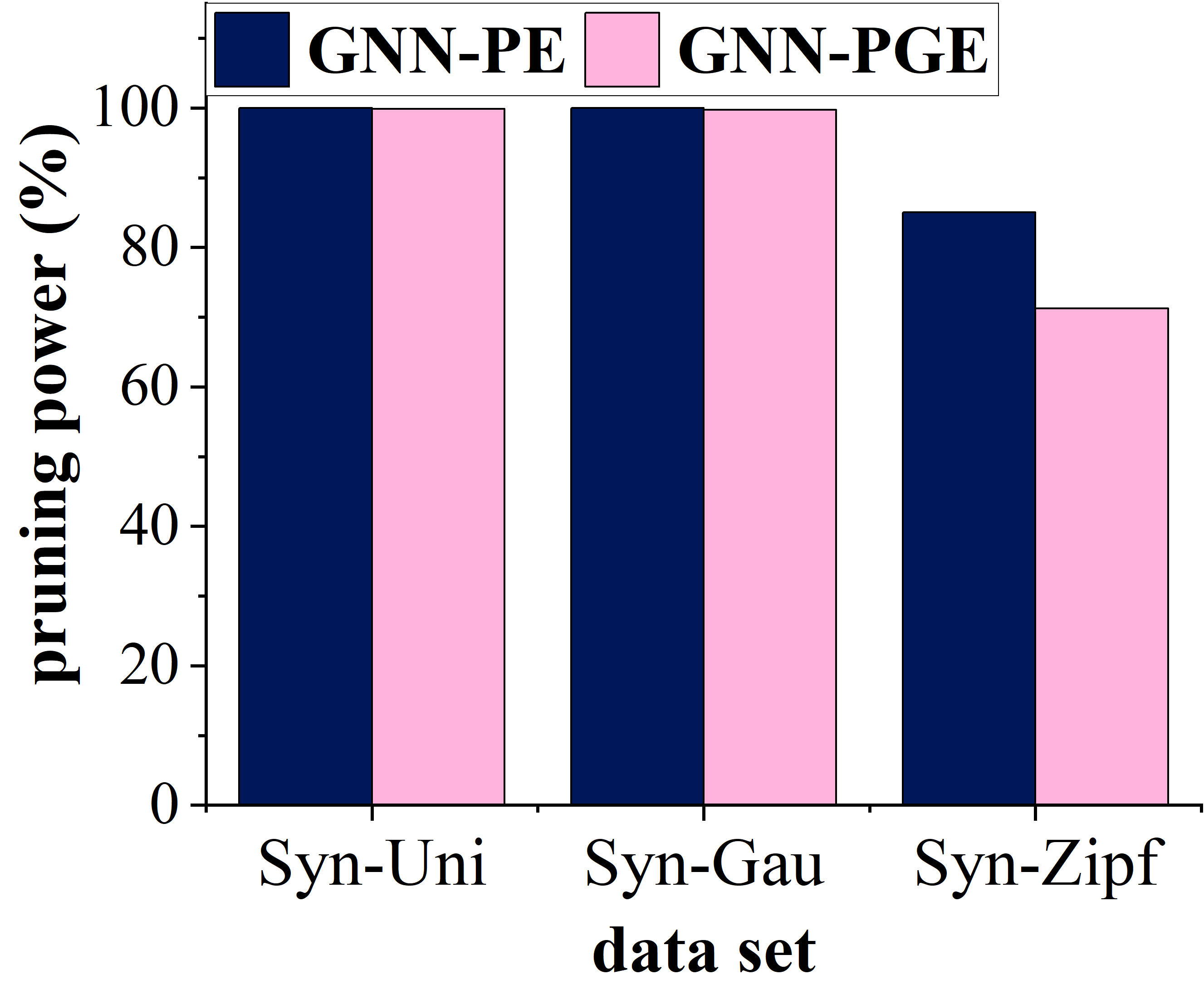}}\label{subfig:syn_pp}}
\caption{GNN-PE pruning power on real/synthetic graphs.} 
\label{fig:pruning_power}
\end{figure}

\subsection{Evaluation of the GNN-PE/GNN-PGE Effectiveness}
In this subsection, we report the pruning power of our pruning strategies for GNN-PE (Section \ref{subsec:pruning}) over real/synthetic graphs.

\noindent {\bf The GNN-PE/GNN-PGE Pruning Power on Real/Synthetic Graphs.}
Figure~\ref{fig:pruning_power} shows the \textit{pruning power} of our proposed pruning strategies (i.e., path label/dominance pruning in Section~\ref{subsec:pruning} and path group label/dominance pruning in Section \ref{subsec:pruning_group}) over both real and synthetic graphs, where the default values are used for all parameters. In subfigures, we can see that for all real/synthetic graphs, the strongest pruning power can reach as high as 99.58\% for GNN-PE and 99.23\% for GNN-PGE (i.e., filtering out 99.58\% or 99.23\% of candidate vertices), which confirms the effectiveness of our pruning strategies and the efficiency of our proposed GNN-PE and GNN-PGE approaches.

\begin{figure}[t]
\subfigcapskip=-0.cm
\centering
\subfigure[][{real-world graphs}]{
\scalebox{0.16}[0.16]{\includegraphics{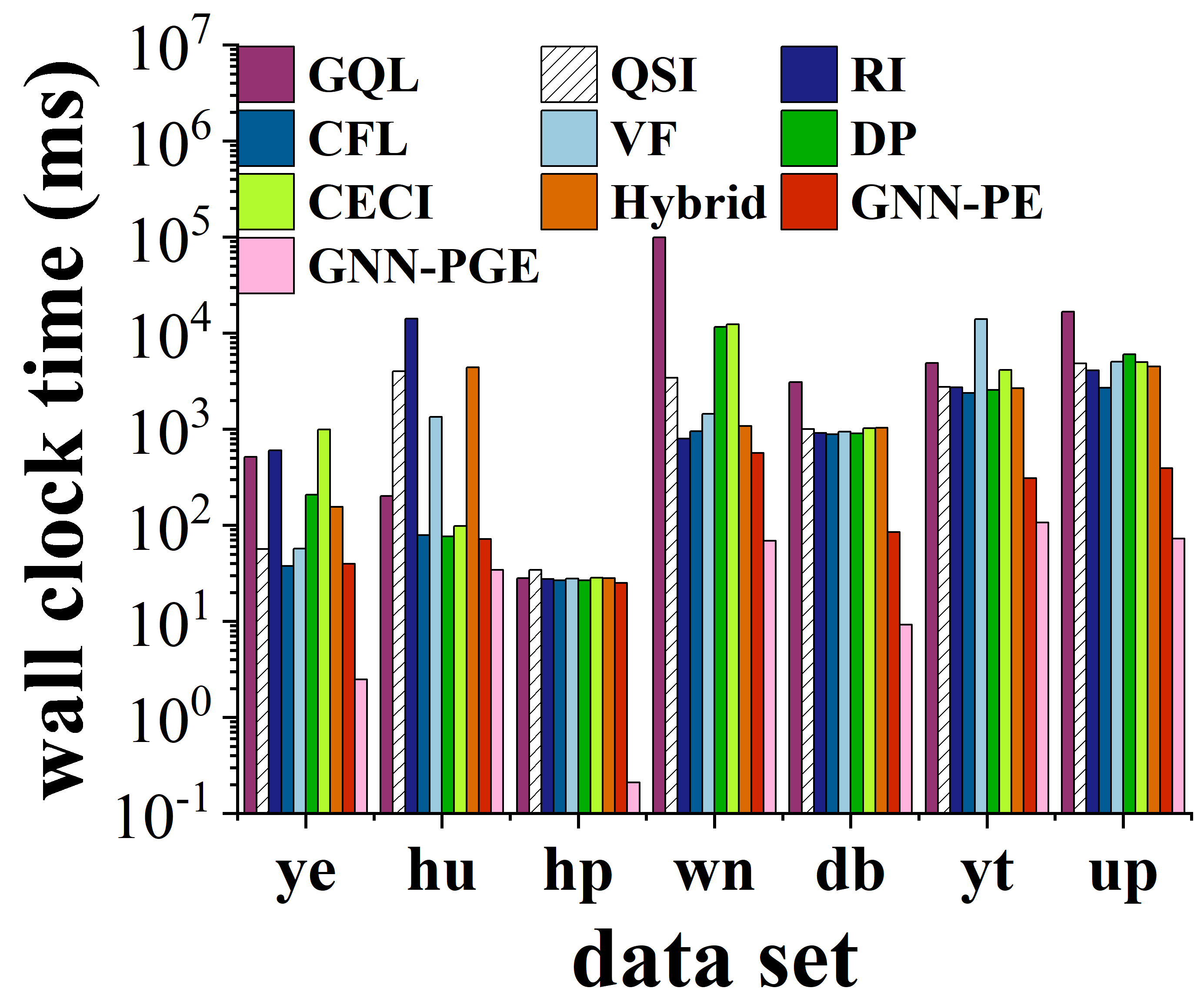}}\label{subfig:real_data}}
\quad
\subfigure[][{synthetic graphs}]{
\scalebox{0.16}[0.16]{\includegraphics{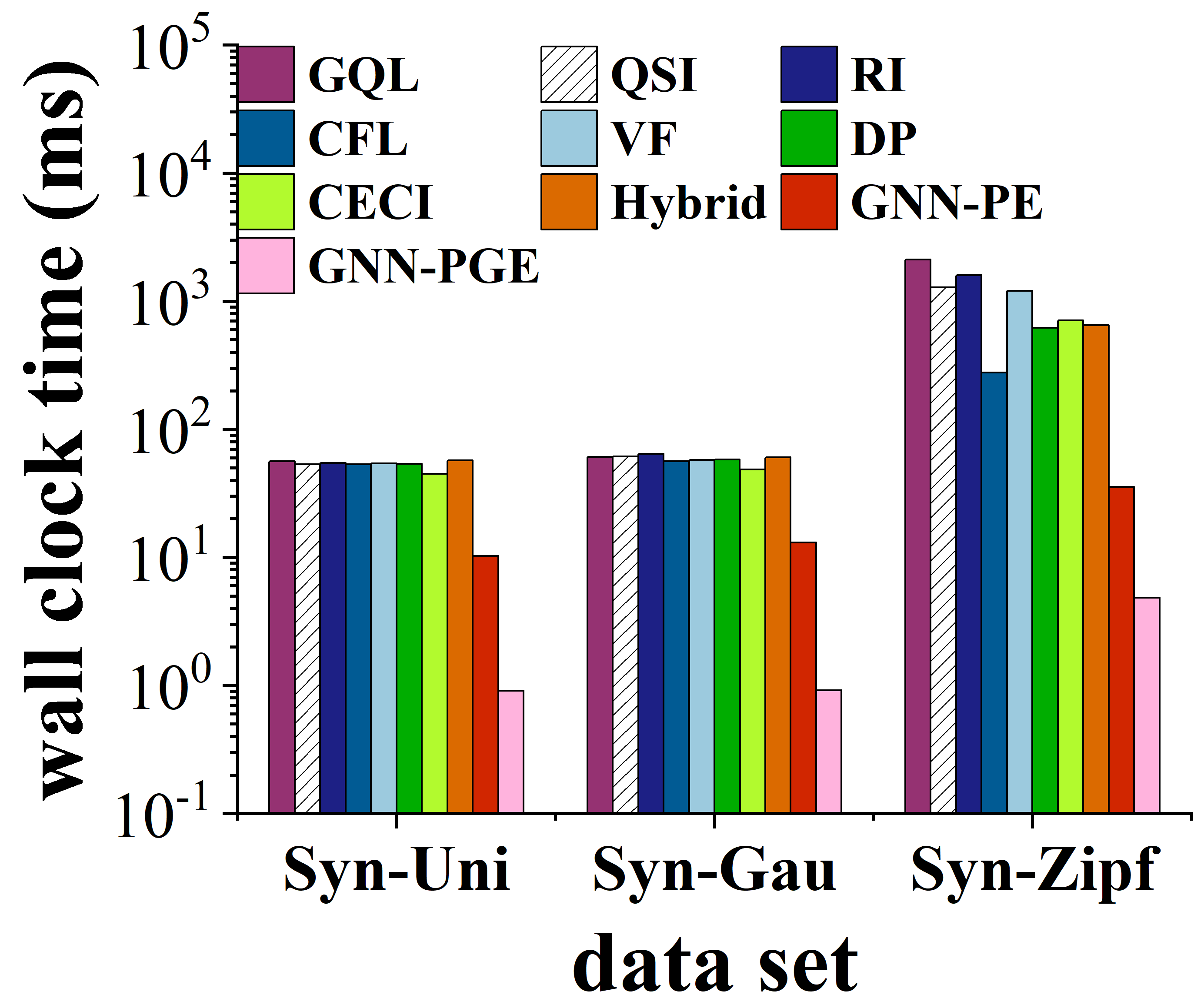}}\label{subfig:syn_data}}
\caption{GNN-PE efficiency on real/synthetic graphs, compared with baseline methods.}
\label{fig:efficiency_datasets}
\end{figure}

\begin{figure*}[htp]
\centering
\subfigure[][{chain}]{                    
\scalebox{0.2}[0.2]{\includegraphics{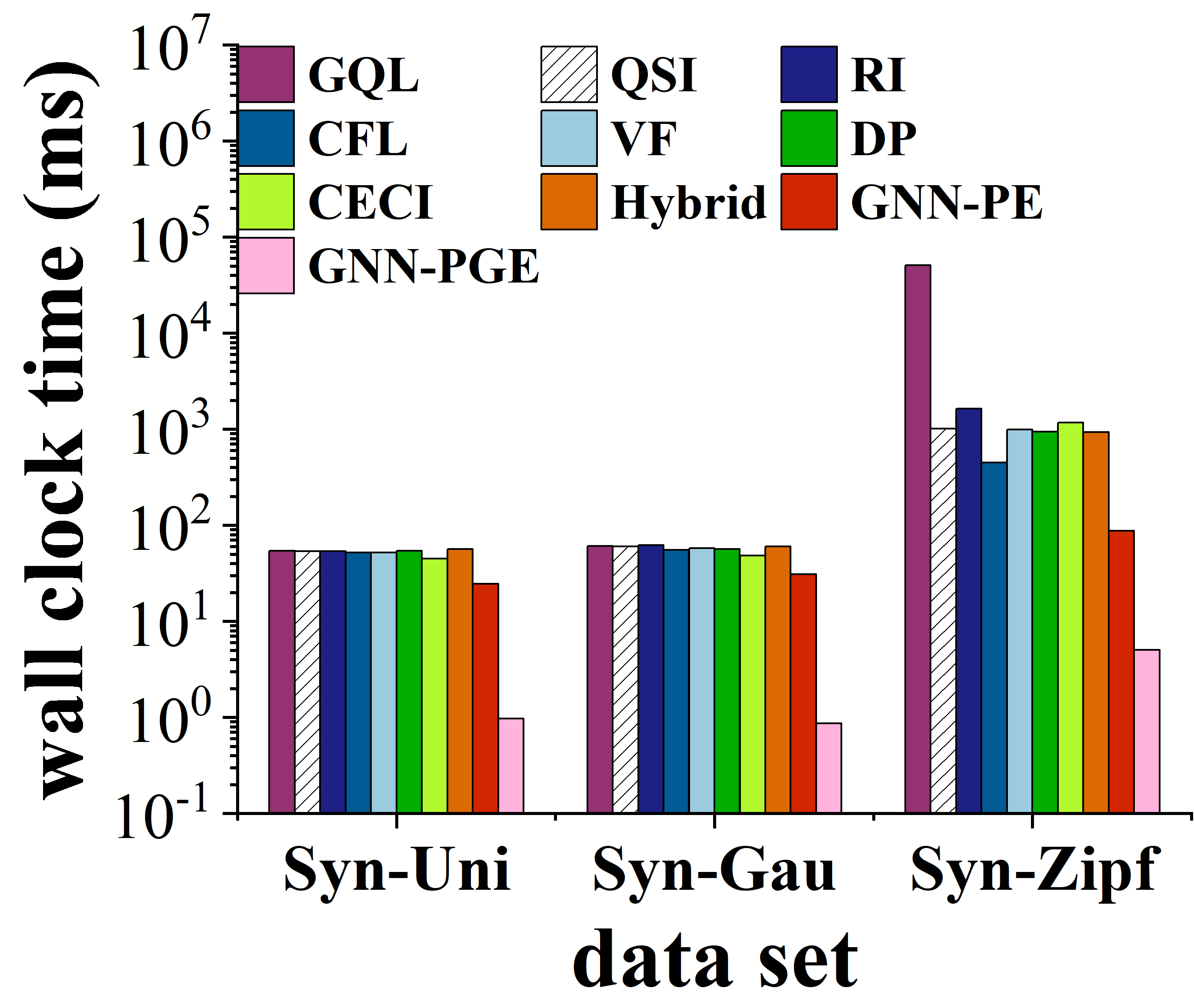}}\label{subfig:chain_syn}}\qquad\qquad
\subfigure[][{star}]{
\scalebox{0.2}[0.2]{\includegraphics{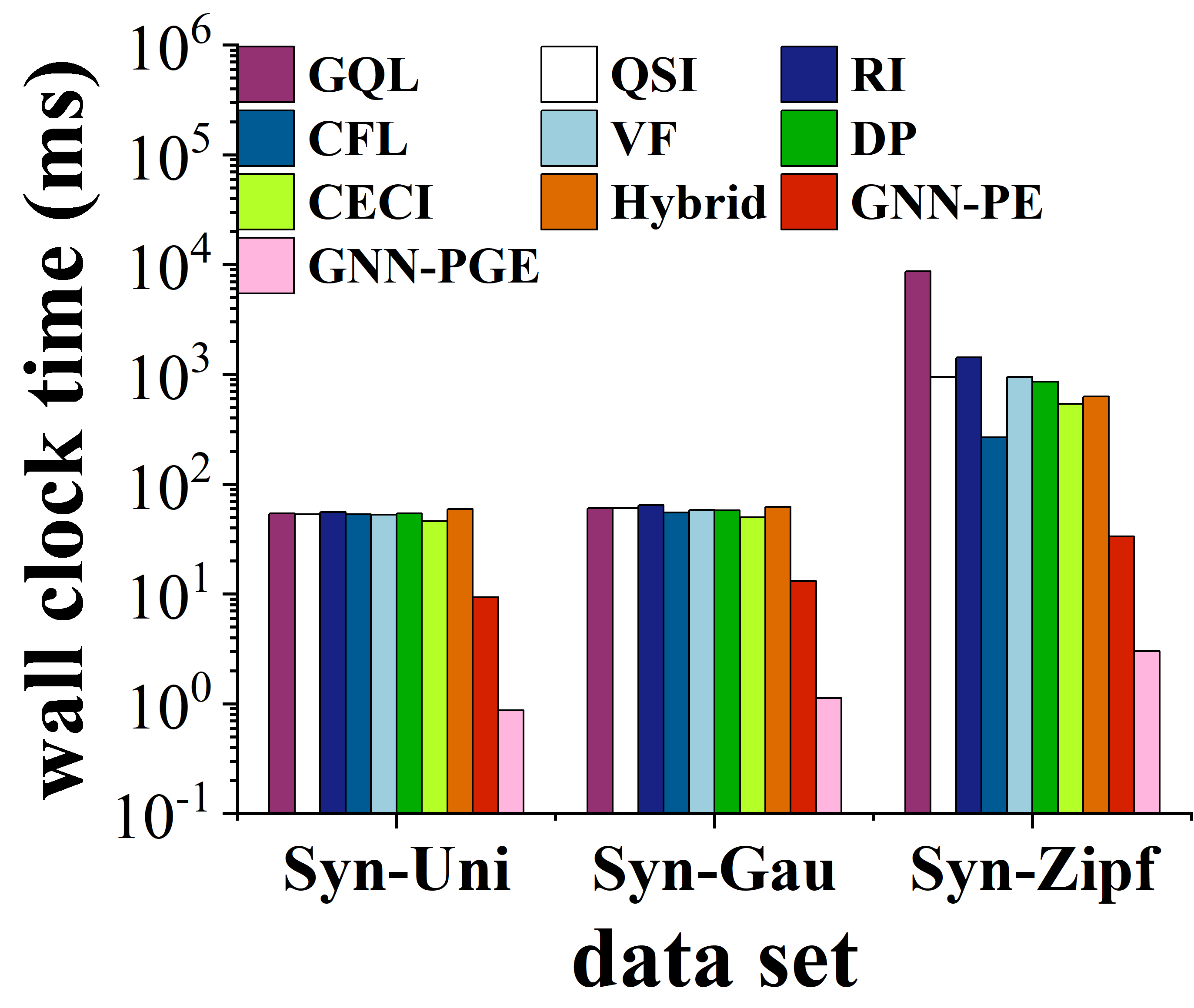}}\label{subfig:star_syn}}\qquad\qquad
\subfigure[][{tree}]{
\scalebox{0.2}[0.2]{\includegraphics{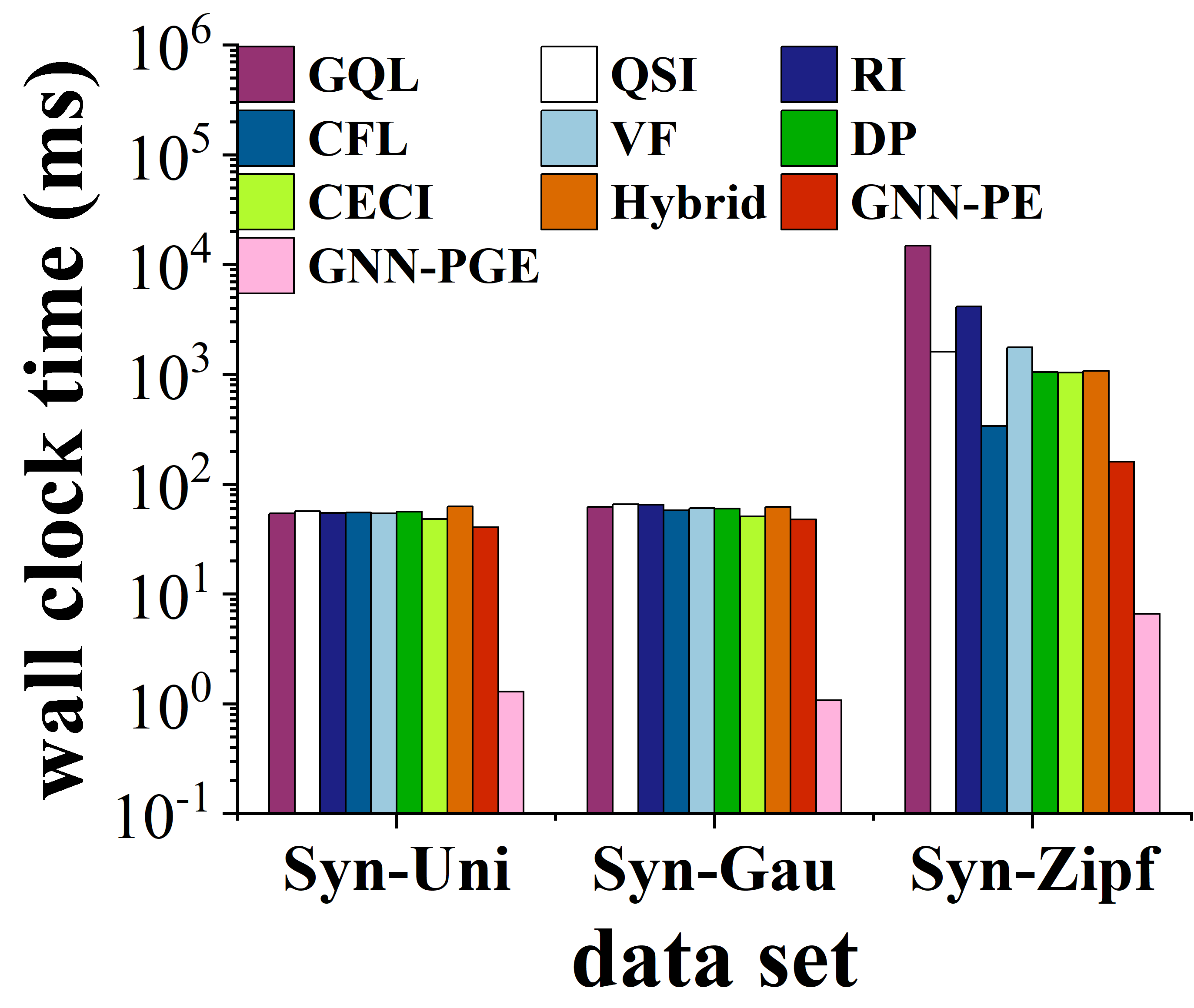}}\label{subfig:tree_syn}}
\caption{The GNN-PE efficiency w.r.t query graph patterns.}
\label{fig:query_patterns}
\end{figure*}

\subsection{Evaluation of the GNN-PE/GNN-PGE Efficiency}
\noindent {\bf The GNN-PE/GNN-PGE Efficiency on Real/Synthetic Graph Data Sets.} 
Figure \ref{fig:efficiency_datasets} compares the efficiency of our GNN-PE and GNN-PGE approaches with 8 baseline methods over both real-world and synthetic graphs, where all parameters are set to default values. From the subfigures, we can see that our GNN-PE and GNN-PGE approaches always outperform baseline methods. Especially, for large-scale real (e.g., $yt$ and $up$) and synthetic graphs ($Syn\text{-}Uni$, $Syn\text{-}Gau$, and $Syn\text{-}Zipf$), our methods can achieve better performance than baselines by 1-2 orders of magnitude. For all real/synthetic graphs (even for $up$ with 3.77M vertices), the time cost of our approaches remains low (i.e., $<$0.56 $sec$ for GNN-PE and $<$0.11 $sec$ for GNN-PGE).

\begin{figure}[t]
\centering
\subfigure[][{\small real-world graphs}]{                    
\scalebox{0.17}[0.17]{\includegraphics{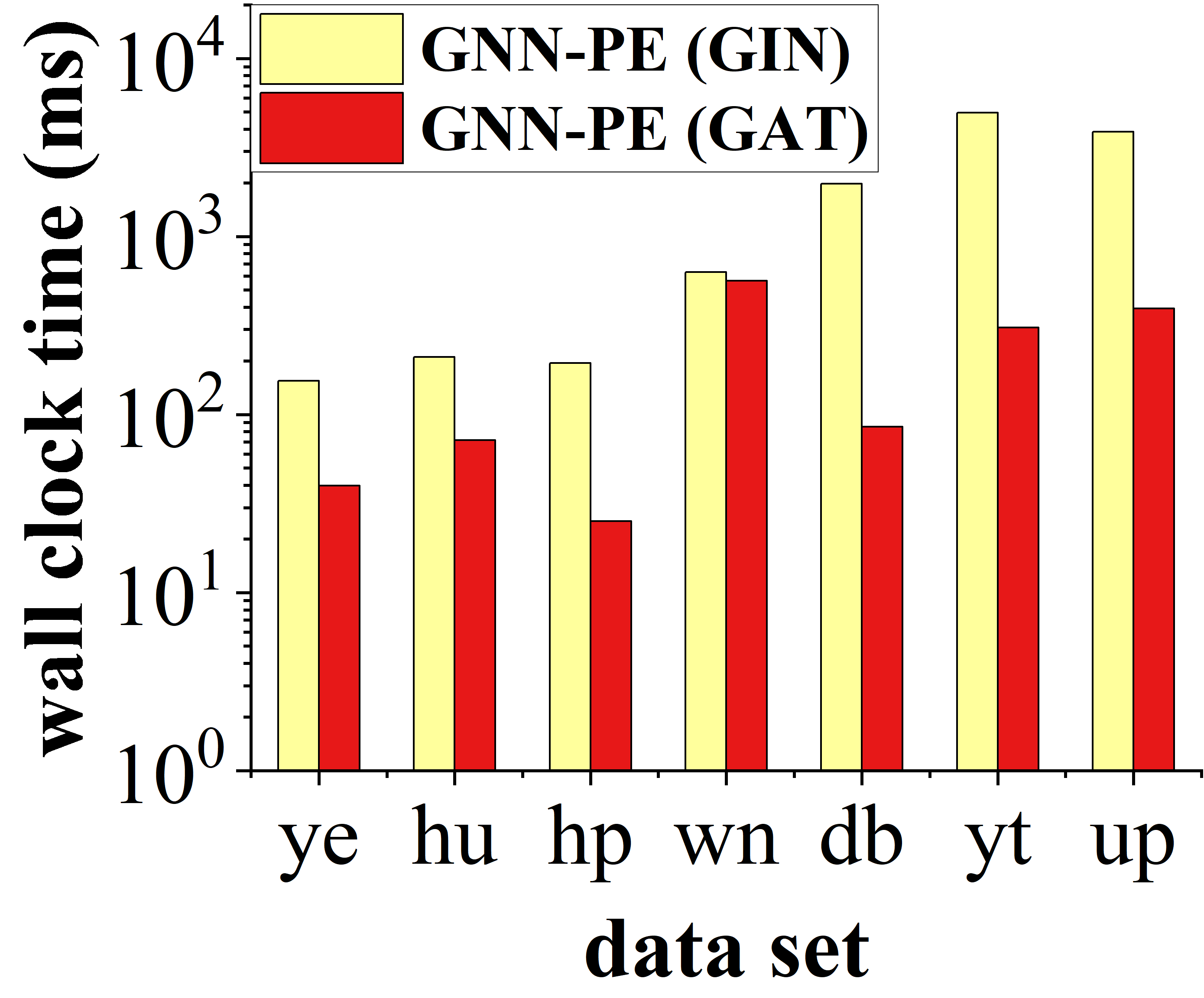}}\label{subfig:gin_real}}
\subfigure[][{\small synthetic graphs}]{
\scalebox{0.17}[0.17]{\includegraphics{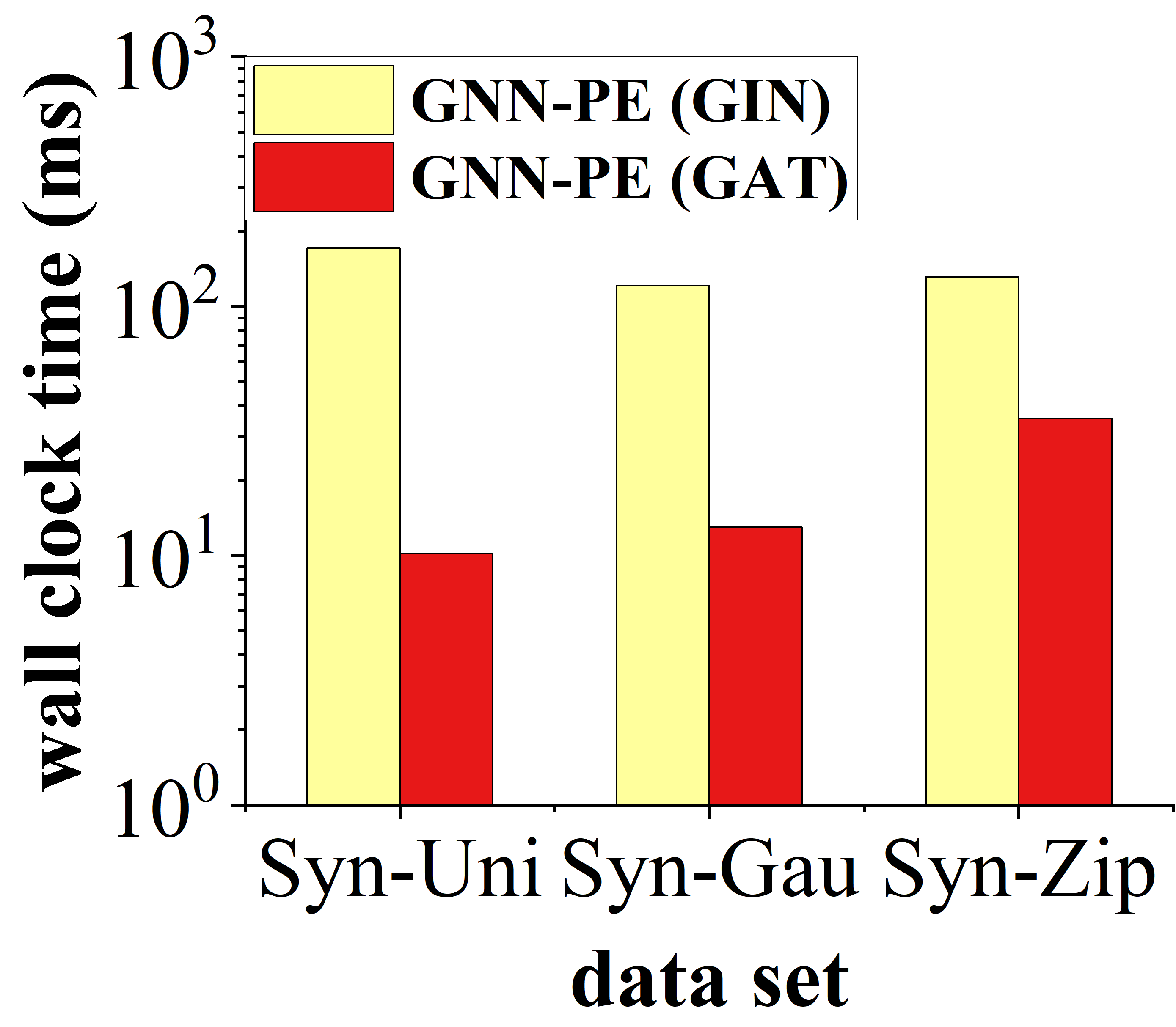}}\label{subfig:gin_syn}}
\caption{GNN-PE efficiency evaluation w.r.t. GNN layer types.}
\label{fig:gin_layer}
\end{figure}

To evaluate our GNN-PE/GNN-PGE query efficiency, in subsequent experiments, we vary the values of different parameters on synthetic graphs (e.g., query graph patterns, types of GNN layers, $|V(q)|$, $avg\_deg(q)$, $|V(G)|/m$, $|\sum|$, $avg\_deg(G)$, $|V(G)|$, average degree of partitions, and \# of edge cuts between partitions). To better illustrate the trends of curves, we omit the results of baseline methods below.

\noindent{\bf The GNN-PE/GNN-PGE Efficiency w.r.t Query Graph Patterns.} Figure~\ref{fig:query_patterns} compares the performance of our GNN-PE and GNN-PGE approaches with that of 8 baselines, for different query graph patterns, including chain, star, and tree, where the length of chain is 6, the degree of star is 6, the depth and fanout of tree are 3, respectively, and default values are used for other parameters.
From the figure, we can see that, our methods can achieve better performance than baselines by 1-2 orders of magnitude for all query graph patterns.
For all synthetic graphs, the time cost of our approaches remains low (i.e., $<$0.16 $sec$ for GNN-PE and $<$0.007 $sec$ for GNN-PGE).

\noindent{\bf The GNN-PE Efficiency  w.r.t. Types of GNN Layers.} Figure~\ref{fig:gin_layer} examines the performance of our GNN-PE approach with different GNN model types, by comparing  GAT \cite{velivckovic2018graph} with GIN \cite{xu2019powerful} used in the first layer of our GNN model, where parameters are set to their default values. 
In the GIN layer, we use a three-layer multilayer perceptron with the ReLU activation function and set the dimension of the hidden features to 32 for a fair comparison.
From the figure, we can see that GNN-PE with GIN has poorer performance than GNN-PE with GAT. 
This is because the sum aggregation operation of the GIN layer tends to generate node embeddings based on vertex degrees, which incurs low pruning power for path embeddings. We also conduct experiments with other types of GNN layers, e.g., GCN \cite{kipf2017semi} and GraphSAGE \cite{hamilton2017inductive}. We do not report similar experimental results here.

\noindent {\bf The GNN-PE/GNN-PGE Efficiency w.r.t. Query Graph Size $\bm{|V(q)|}$.}
Figure \ref{fig:query_graph_size} illustrates the performance of our GNN-PE and GNN-PGE approaches by varying the query graph size, $|V(q)|$, from 5 to 12, where default values are used for other parameters. When the number, $|V(q)|$, of vertices in query graph $q$ increases, more query vertices and paths from $q$ are expected, which results in higher query costs for index traversal and refinement. Thus, larger $|V(q)|$ incurs higher wall clock time. For different query graph sizes $|V(q)|$, our methods can achieve low time costs (i.e., 0.01 $sec$ $\sim$ 0.73 $sec$ for GNN-PE and 0.0001$sec$ $\sim$ 0.35 $sec$ for GNN-PGE).

\begin{figure*}[t]
\centering
\subfigure[][{\small Yeast}]{                    
\scalebox{0.14}[0.15]{\includegraphics{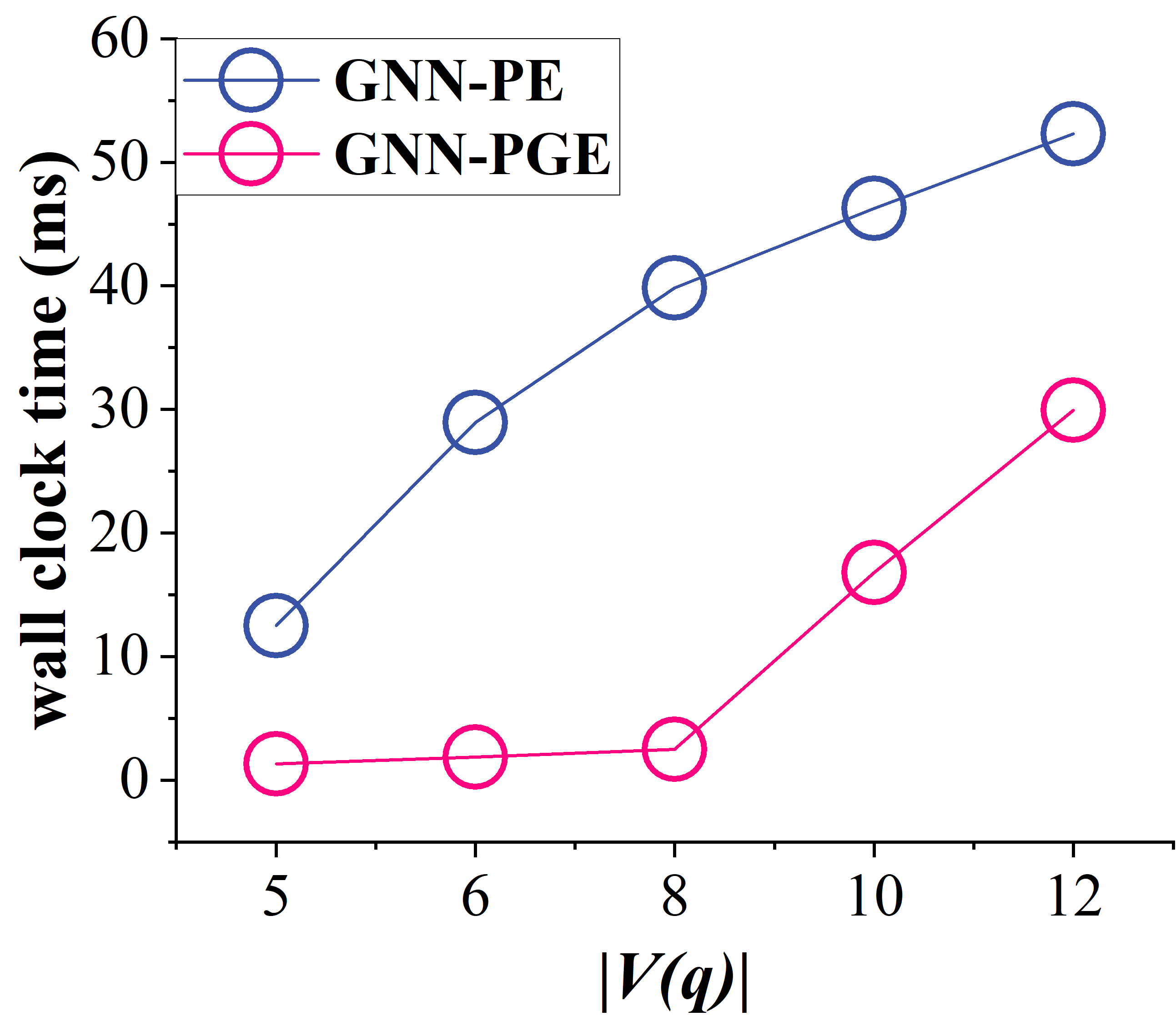}}\label{subfig:qsize_ye}}
\subfigure[][{\small Human}]{                    
\scalebox{0.14}[0.15]{\includegraphics{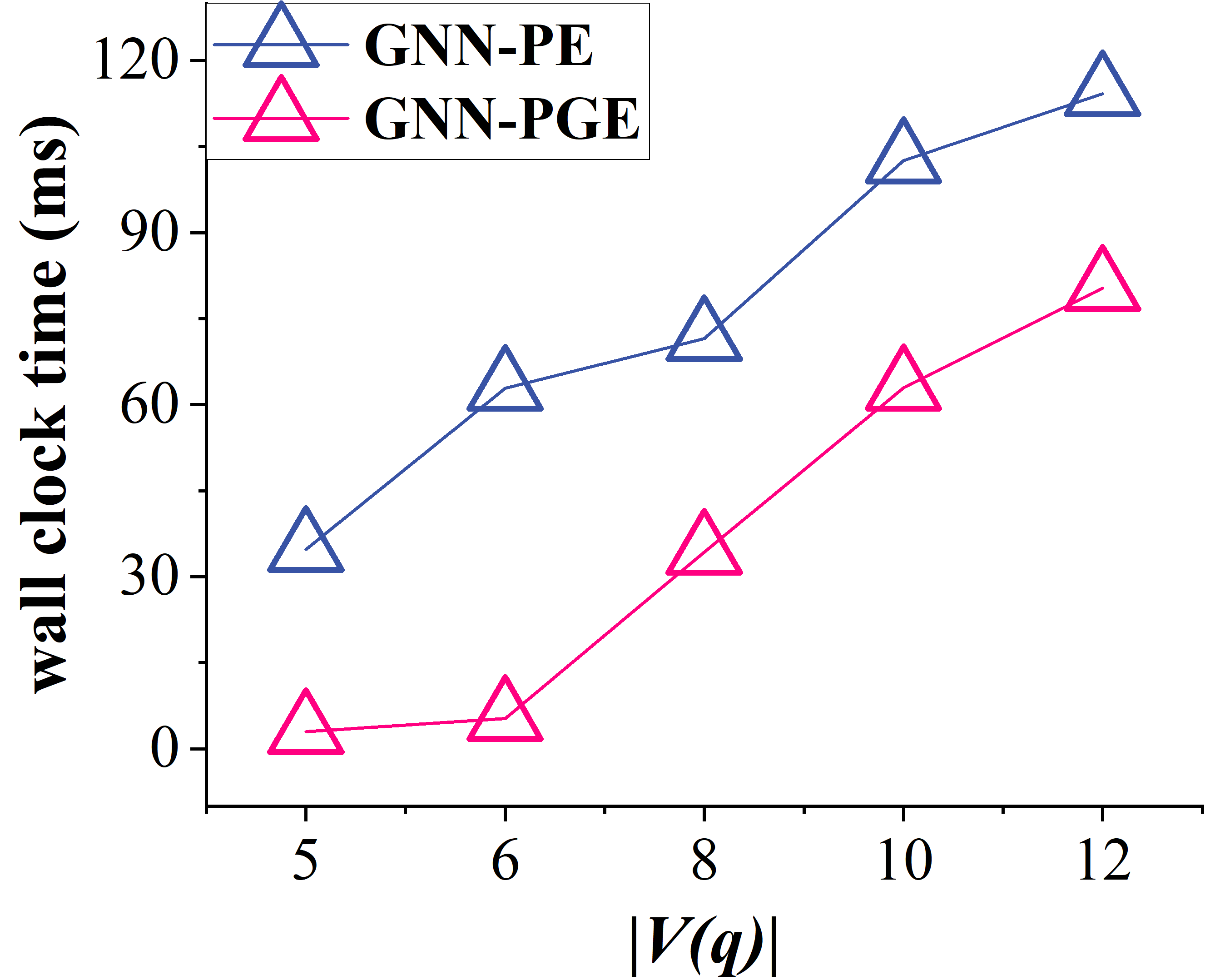}}\label{subfig:qsize_hu}}
\subfigure[][{\small HPRD}]{                    
\scalebox{0.14}[0.15]{\includegraphics{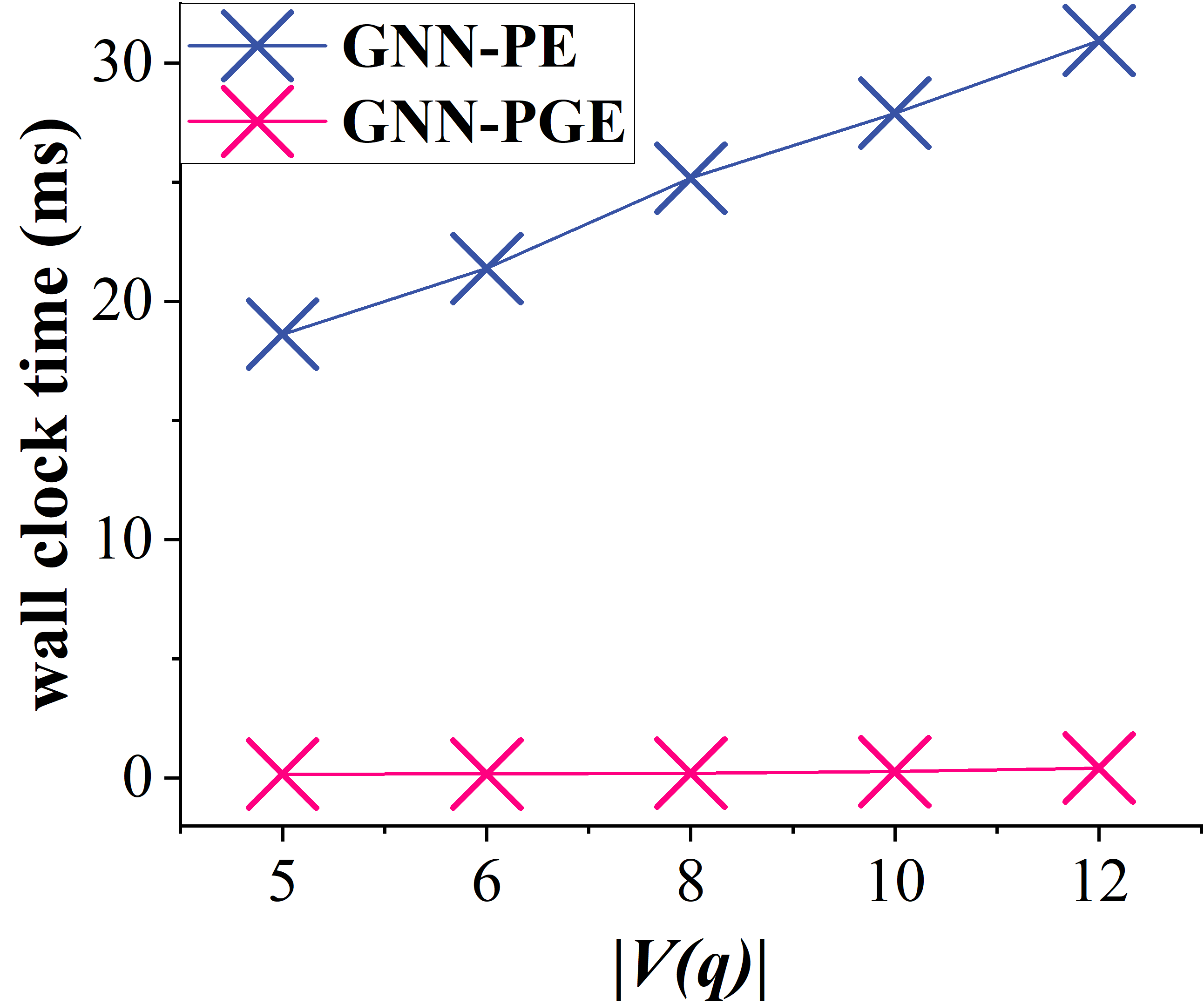}}\label{subfig:qsize_hp}}
\subfigure[][{\small WordNet}]{                    
\scalebox{0.14}[0.15]{\includegraphics{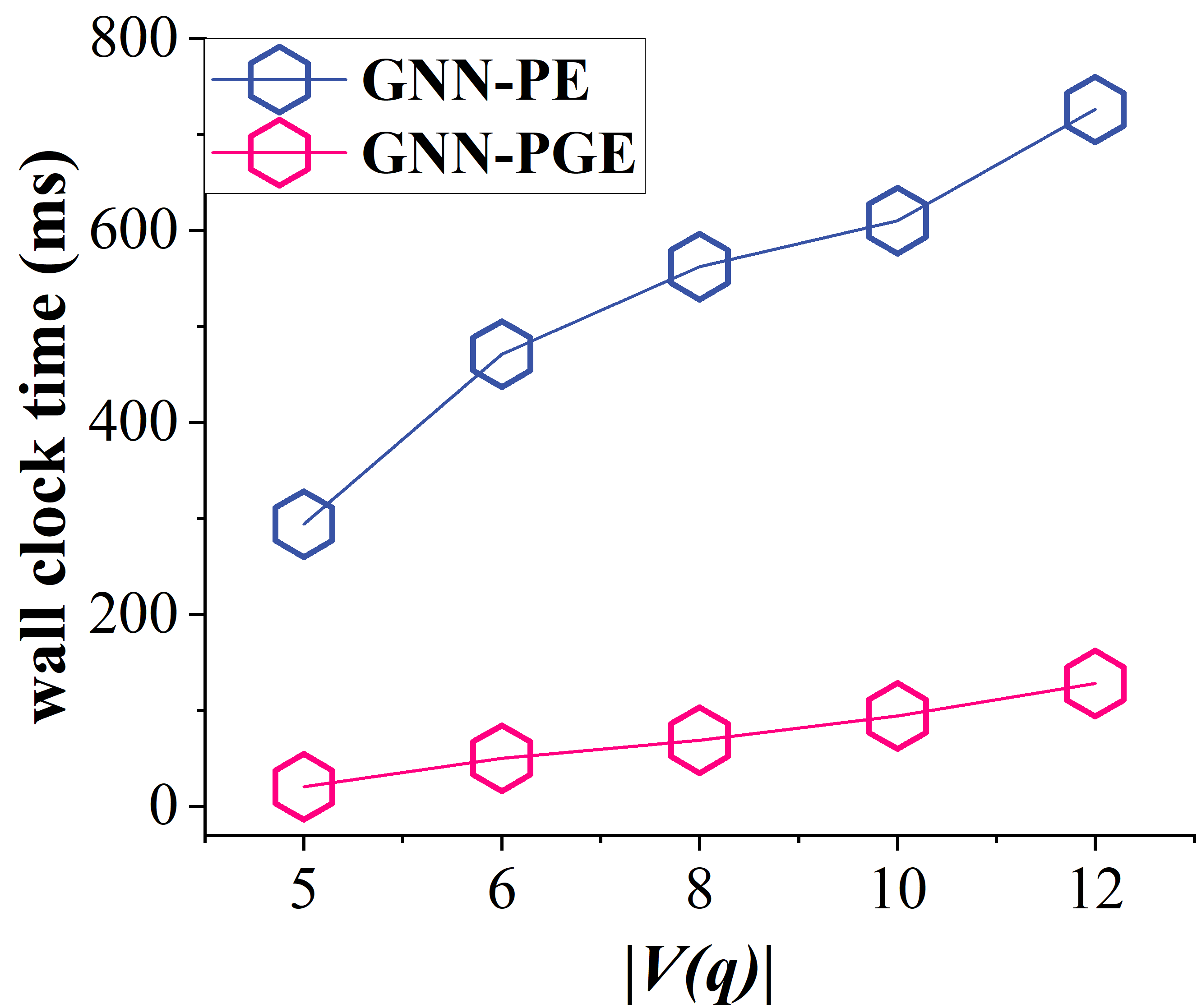}}\label{subfig:qsize_wn}}
\subfigure[][{\small DBLP}]{                    
\scalebox{0.14}[0.15]{\includegraphics{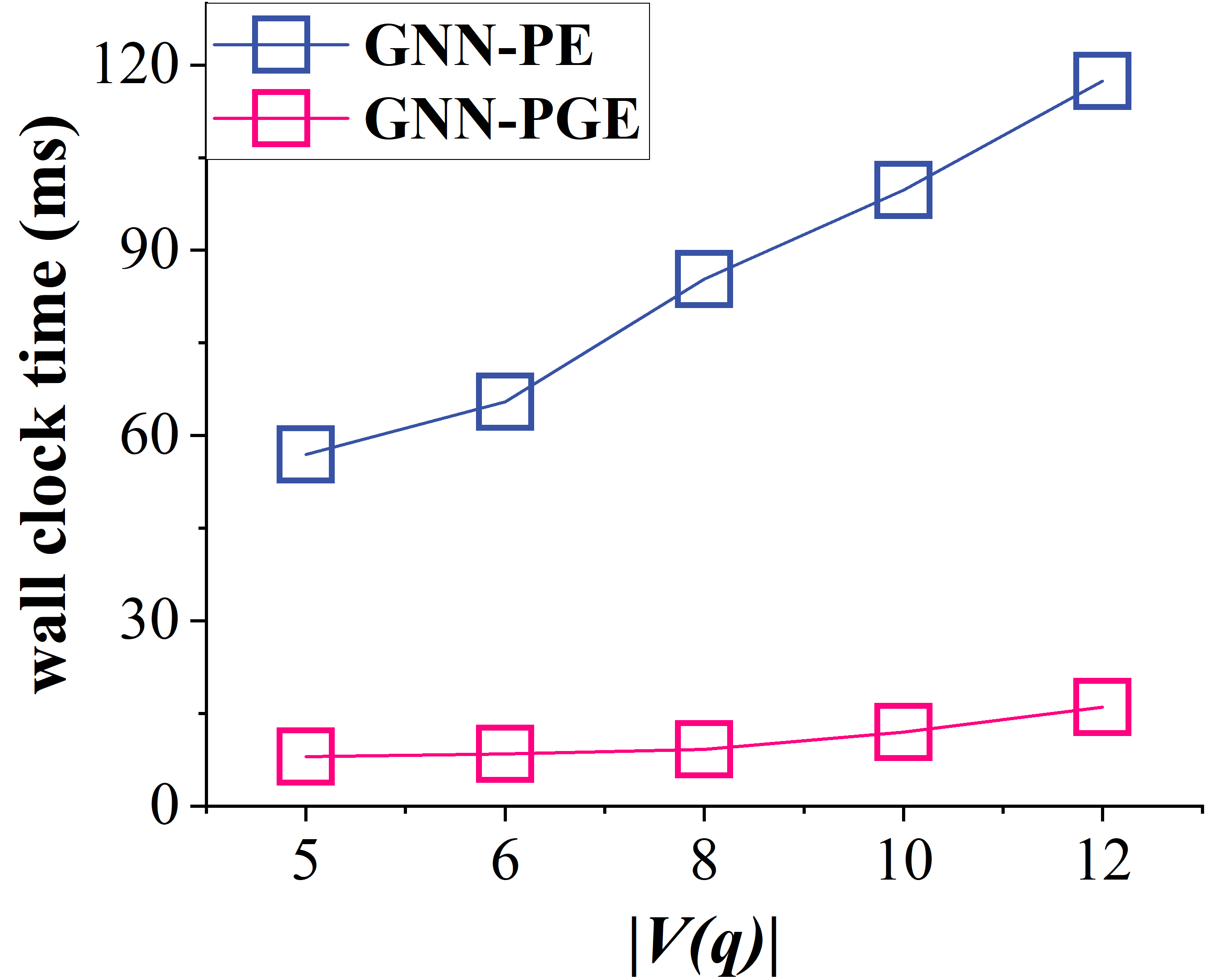}}\label{subfig:qsize_db}}
\\
\subfigure[][{\small YouTube}]{
\scalebox{0.14}[0.15]{\includegraphics{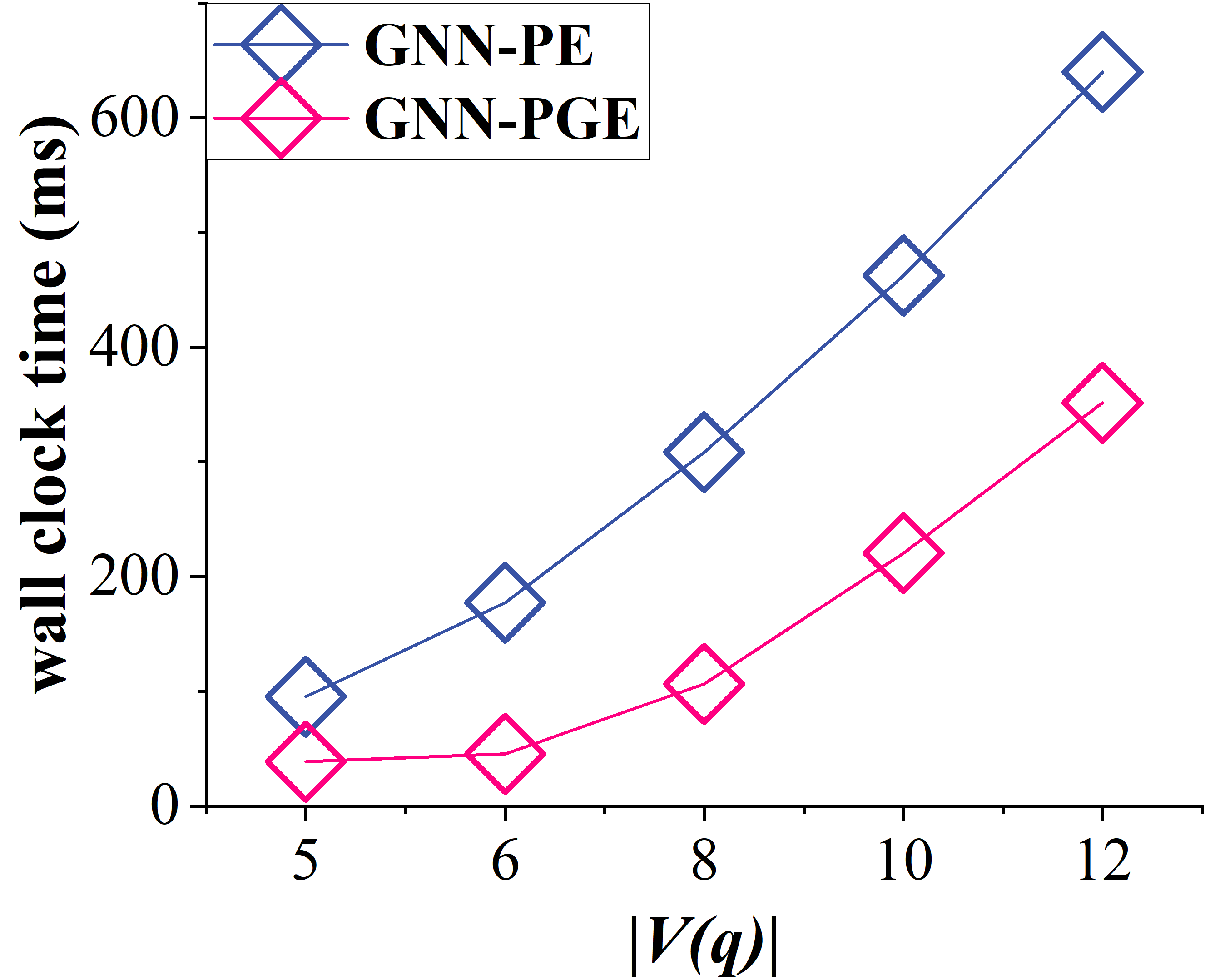}}\label{subfig:qsize_yt}}
\subfigure[][{\small USPatents}]{                    
\scalebox{0.14}[0.15]{\includegraphics{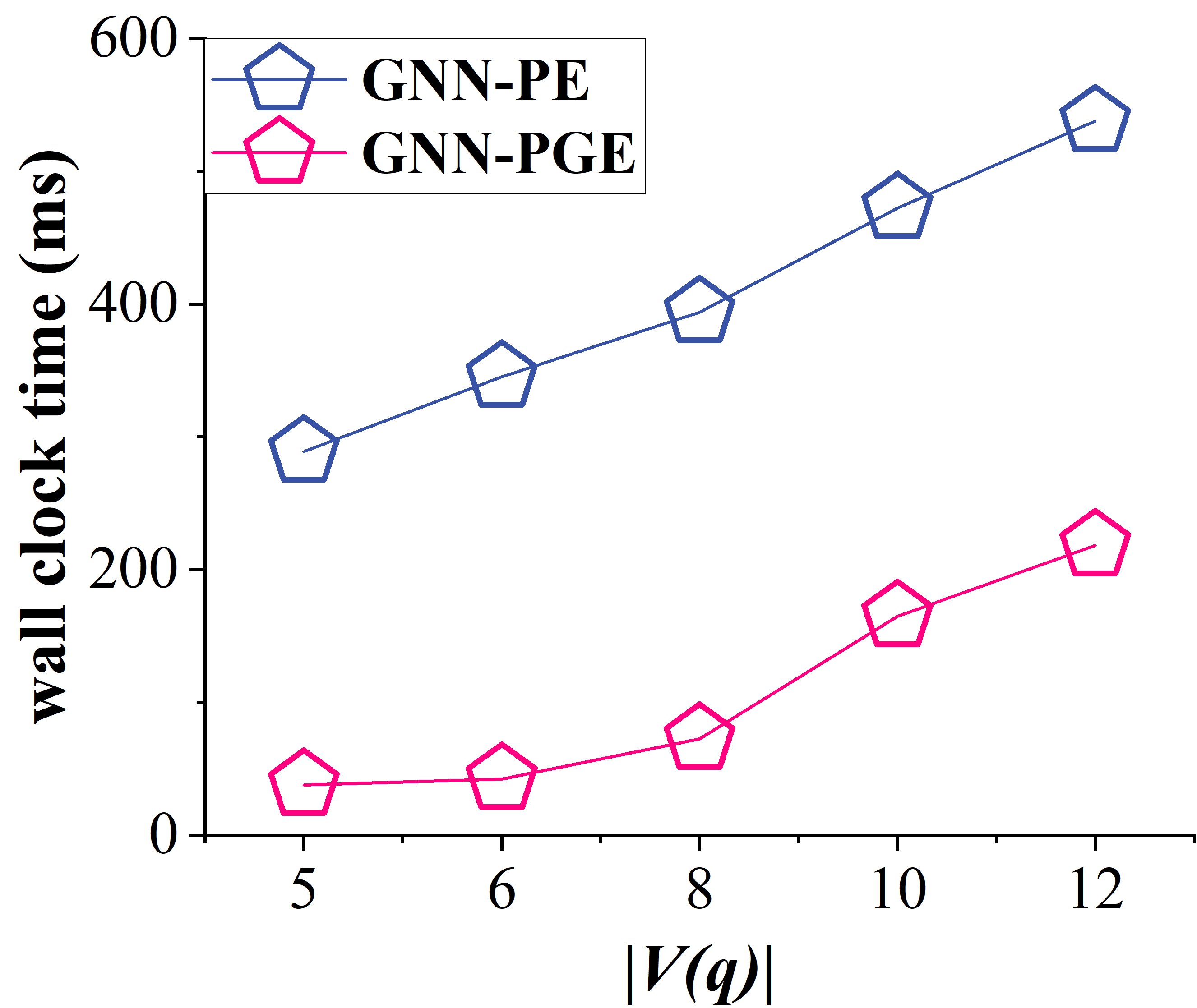}}\label{subfig:qsize_up}}
\subfigure[][{\small Syn-Uni}]{                    
\scalebox{0.14}[0.15]{\includegraphics{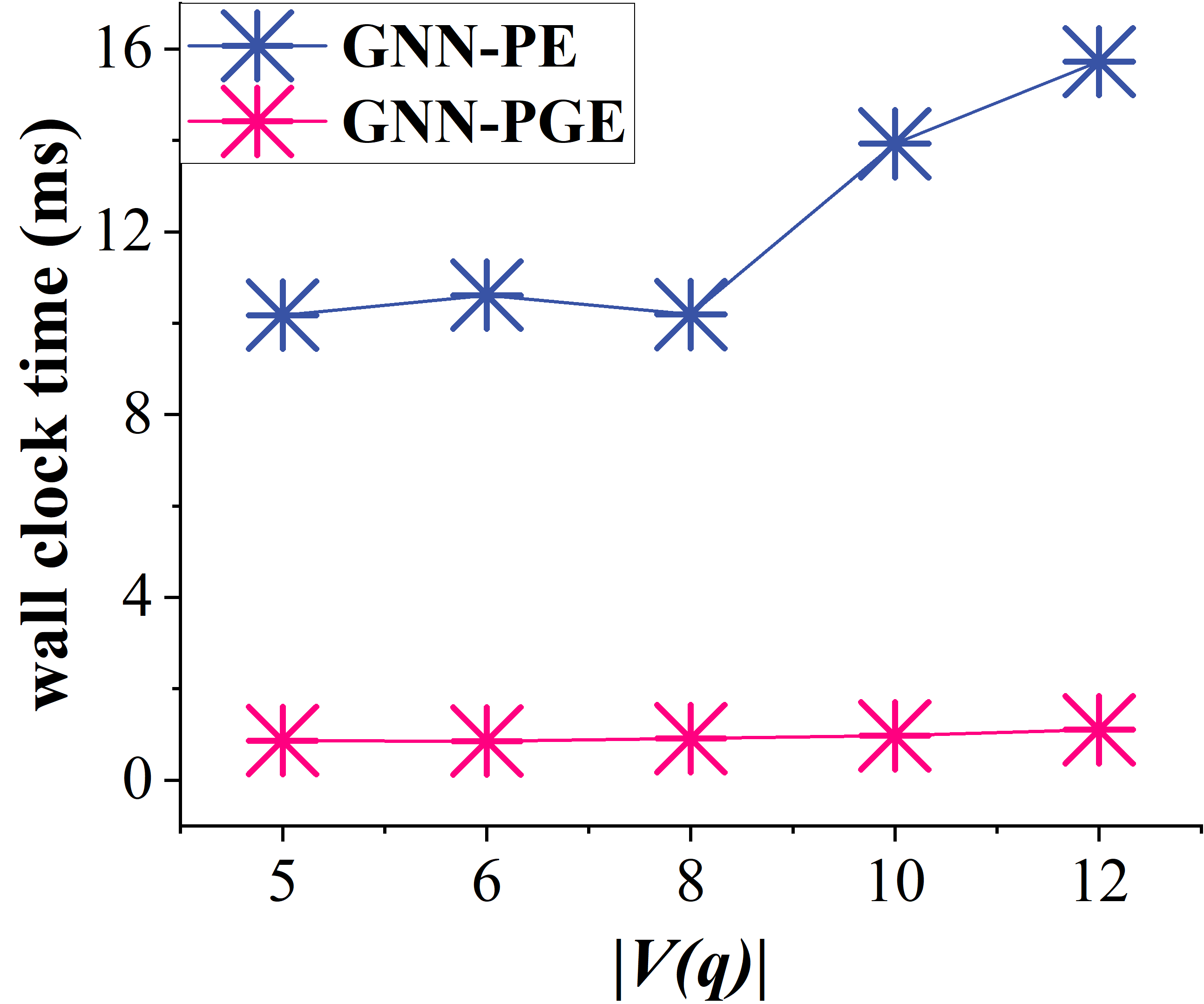}}\label{subfig:qsize_uni}}
\subfigure[][{\small Syn-Gau}]{                    
\scalebox{0.14}[0.15]{\includegraphics{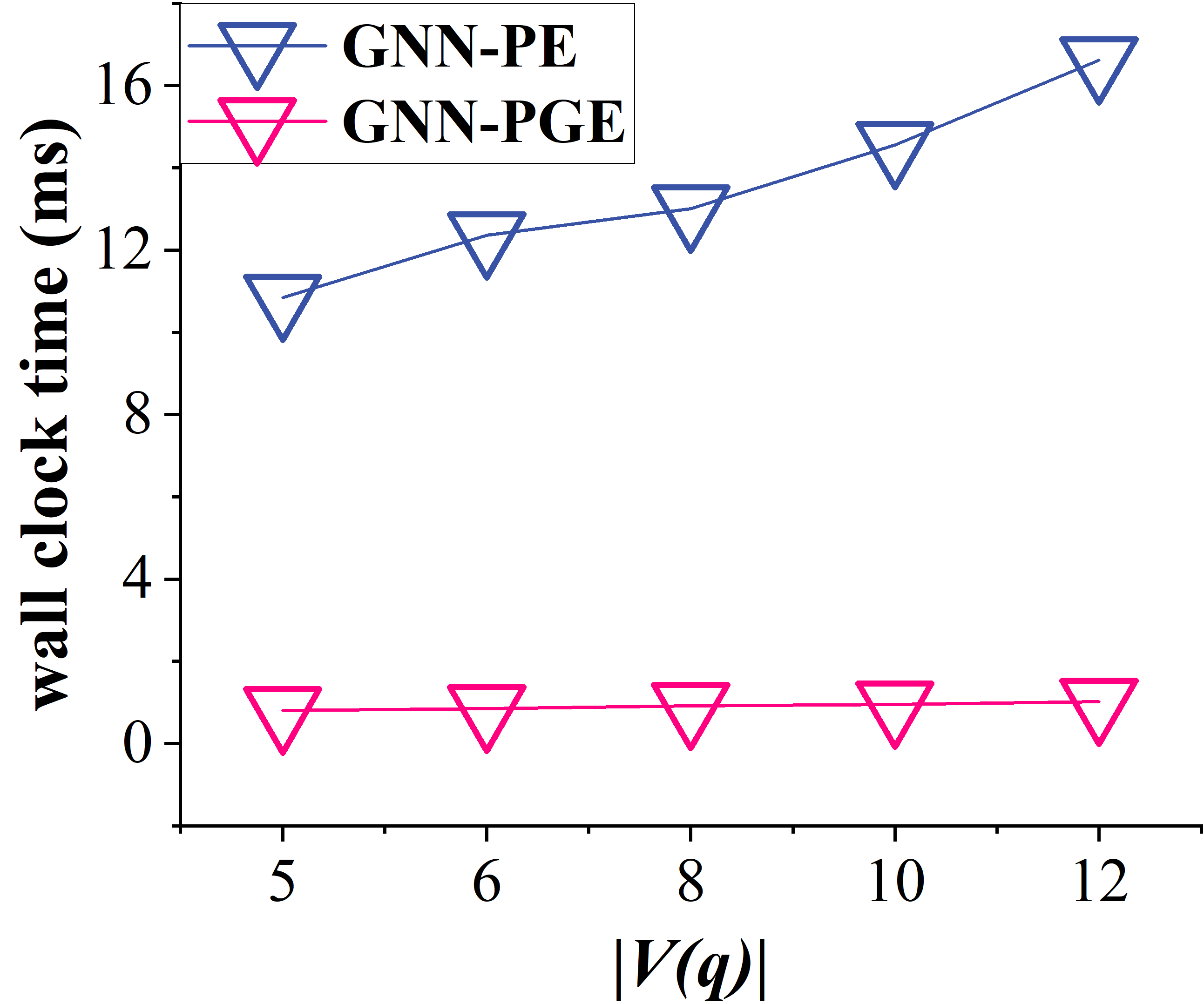}}\label{subfig:qsize_gau}}
\subfigure[][{\small Syn-Zipf}]{                    
\scalebox{0.14}[0.15]{\includegraphics{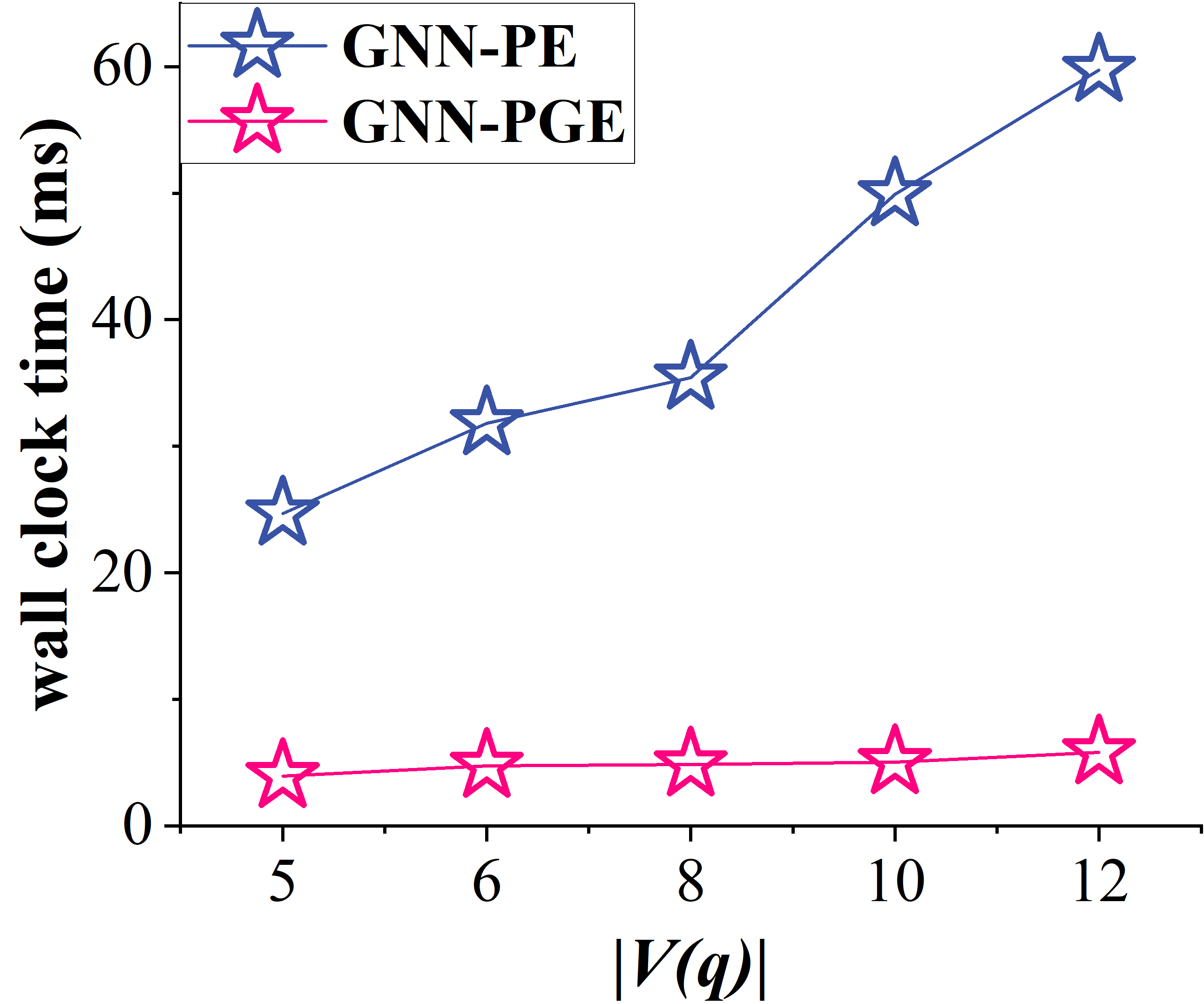}}\label{subfig:qsize_zip}}
\caption{Efficiency evaluation w.r.t. query graph size $\bm{|V(q)|}$.}
\label{fig:query_graph_size}
\end{figure*}

\begin{figure}[t]
\centering
\subfigure[][{\small real-world graphs}]{                    
\scalebox{0.19}[0.19]{\includegraphics{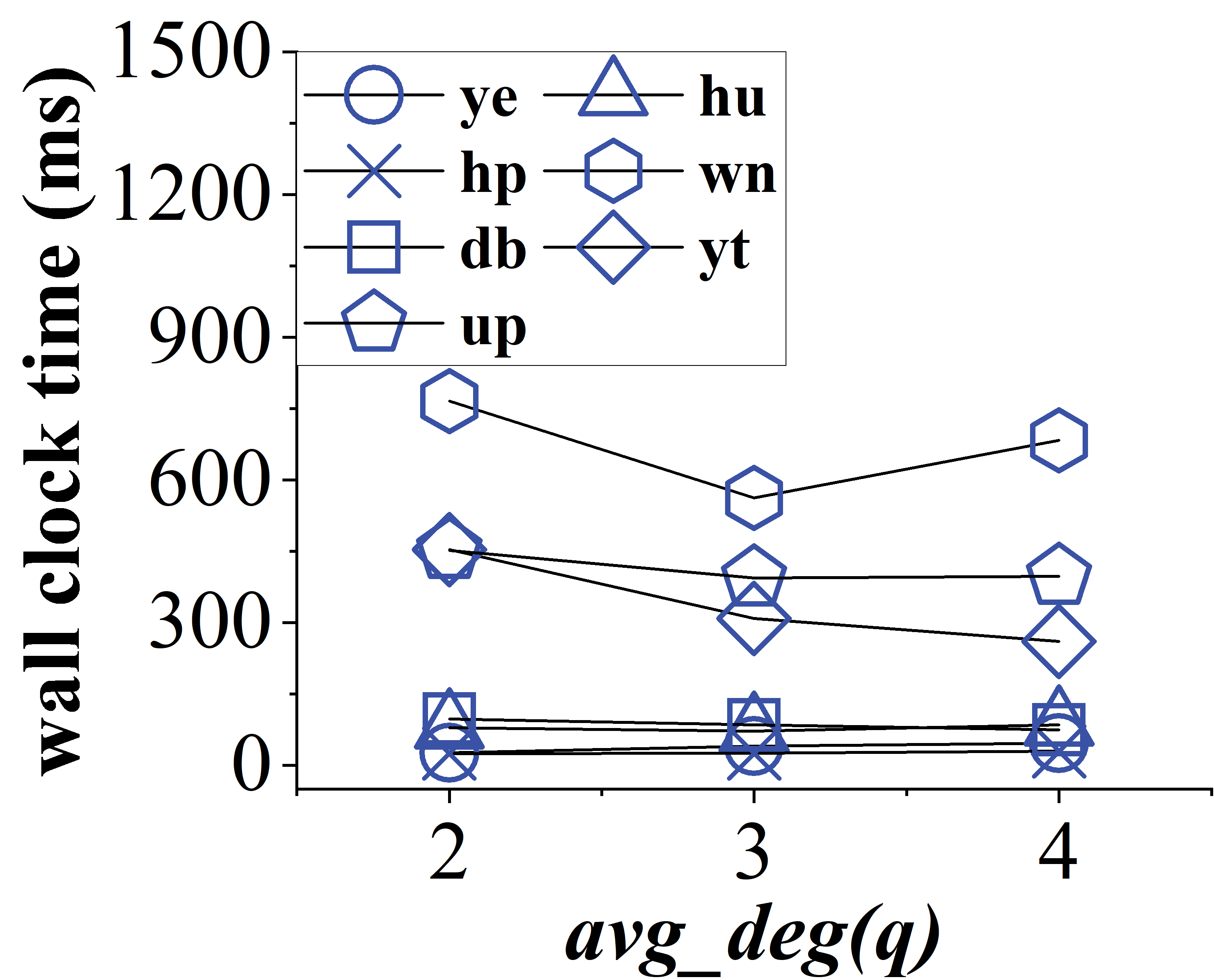}}\label{subfig:real_querydegree}}%
\subfigure[][{\small synthetic graphs}]{
\scalebox{0.19}[0.19]{\includegraphics{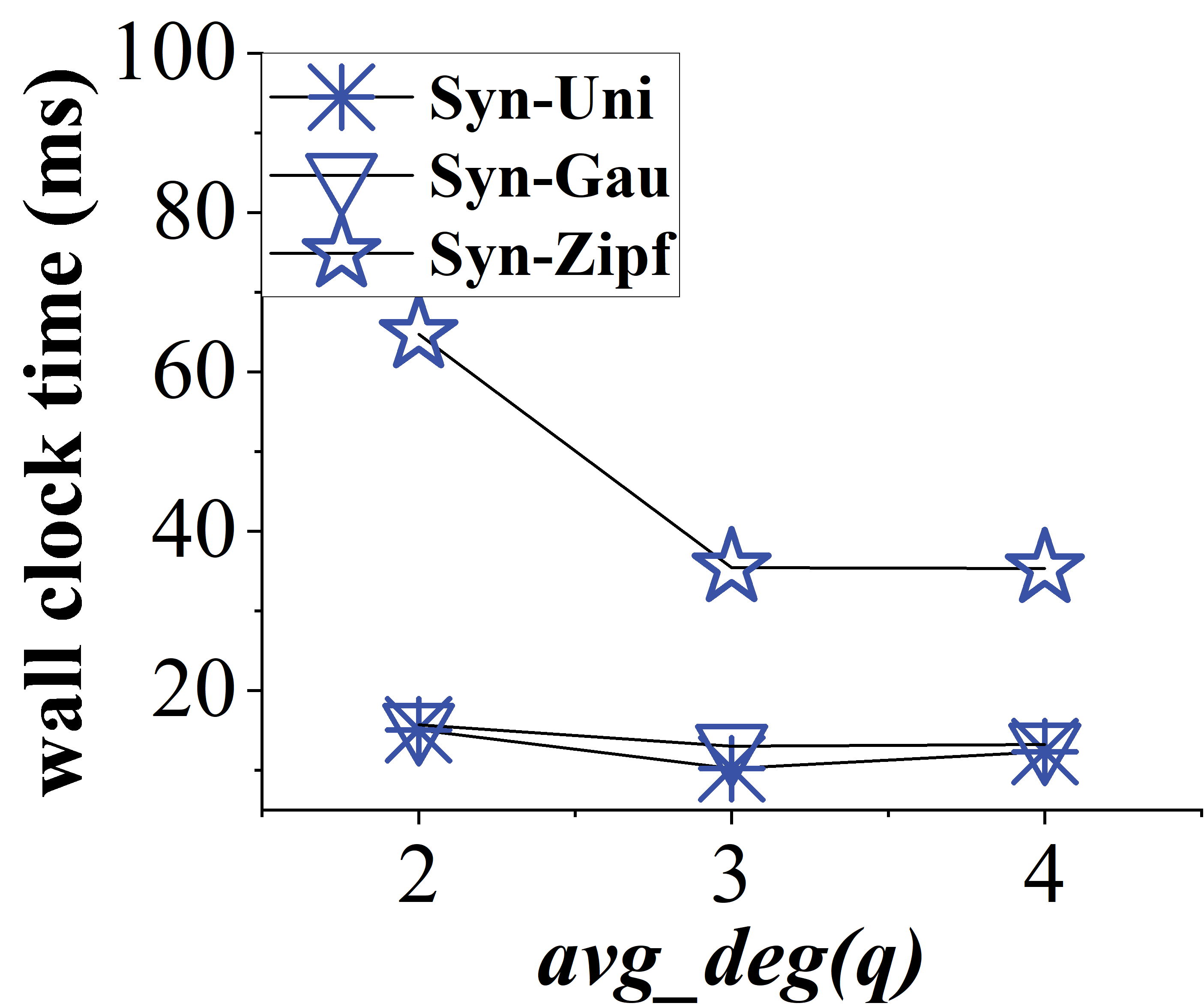}}\label{subfig:syn_querydegree}}
\caption{GNN-PE efficiency evaluation w.r.t. different average degrees, $\bm{avg\_deg(q)}$, of the query graph $\bm{q}$.}
\label{fig:querydegree}
\end{figure}

\begin{figure}
\centering
\subfigure[][{\small partition size, $|V(G)|/m$}]{
\scalebox{0.17}[0.17]{\includegraphics{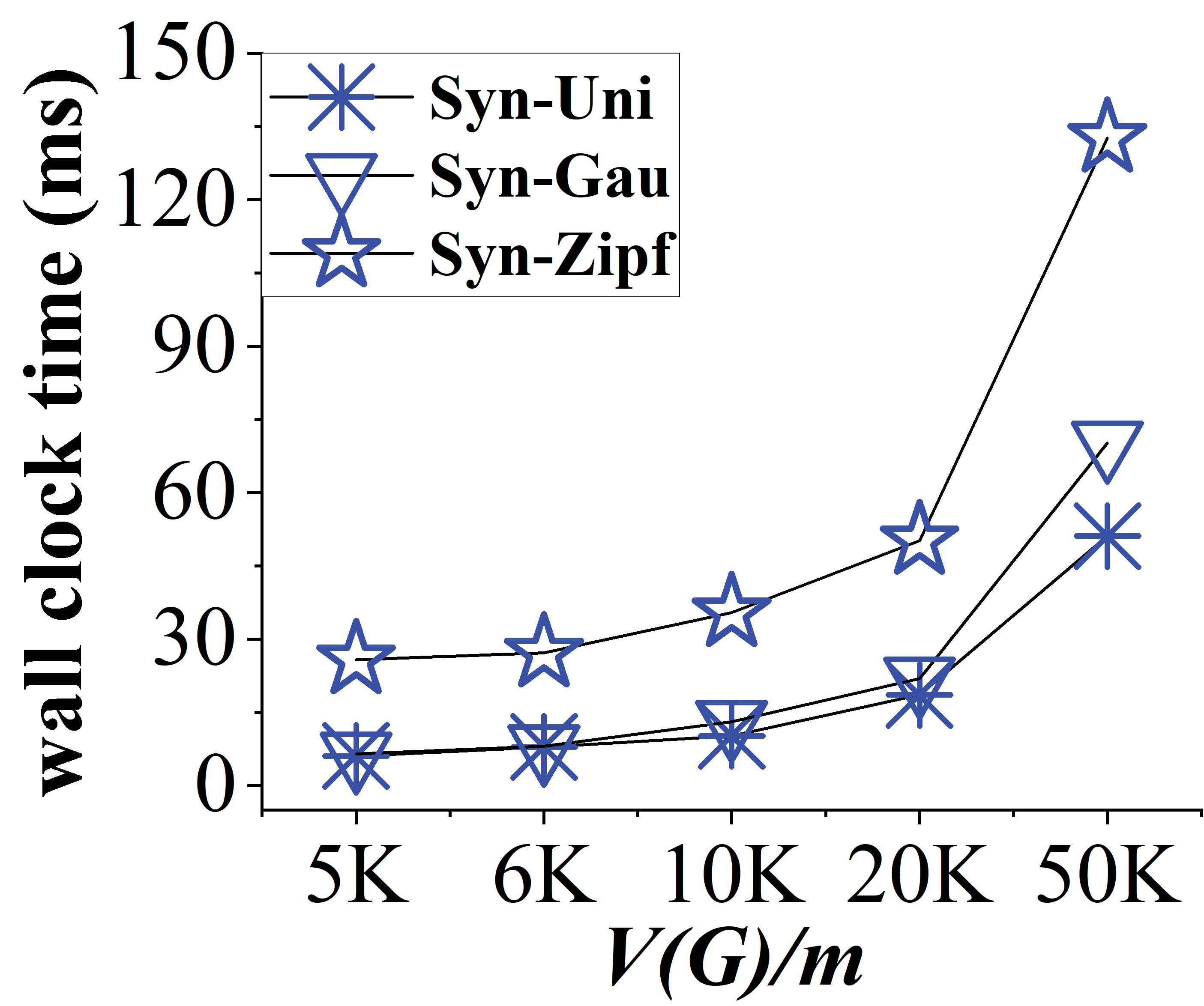}}\label{subfig:syn_par_size}}%
\subfigure[][{\small \# of distinct vertex labels, $|\sum|$}]{                    
\scalebox{0.17}[0.17]{\includegraphics{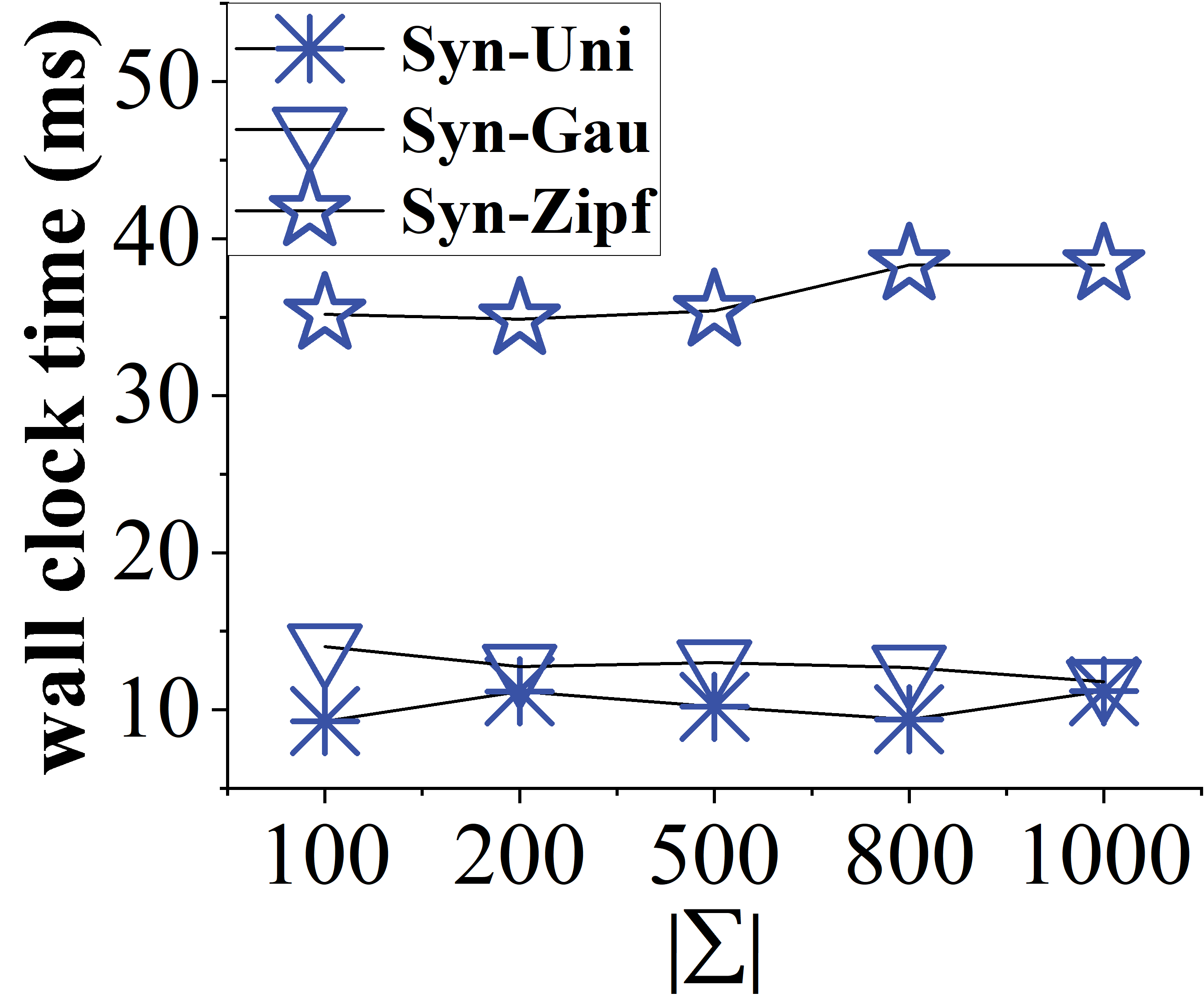}}\label{subfig:labelnumber}}\\
\subfigure[][{\small average degree of partitions}]{     
\scalebox{0.17}[0.17]{\includegraphics{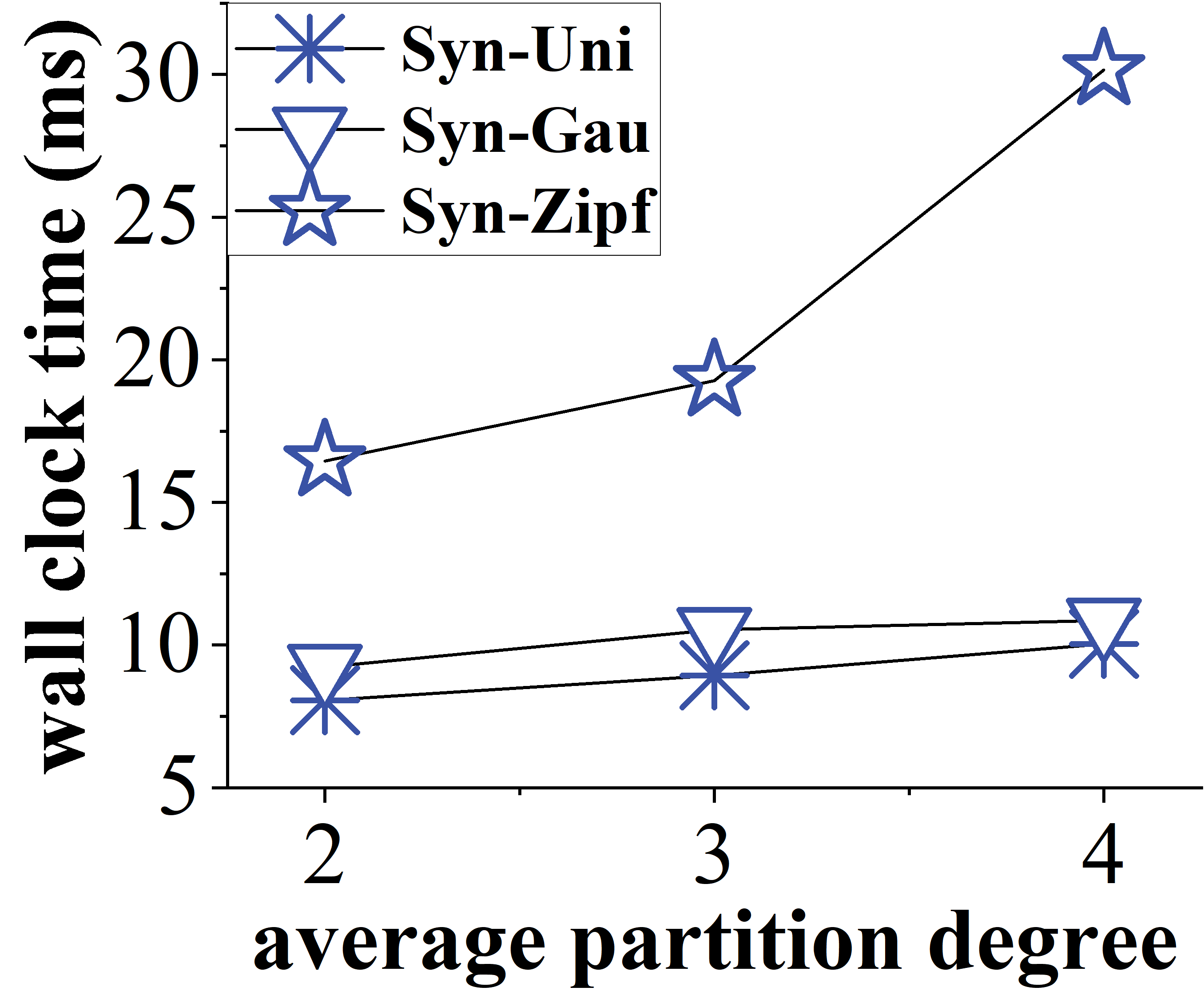}}\label{subfig:part_degree}}
\subfigure[][{\small $\#$ of edge cuts between partitions}]{                    
\scalebox{0.17}[0.17]{\includegraphics{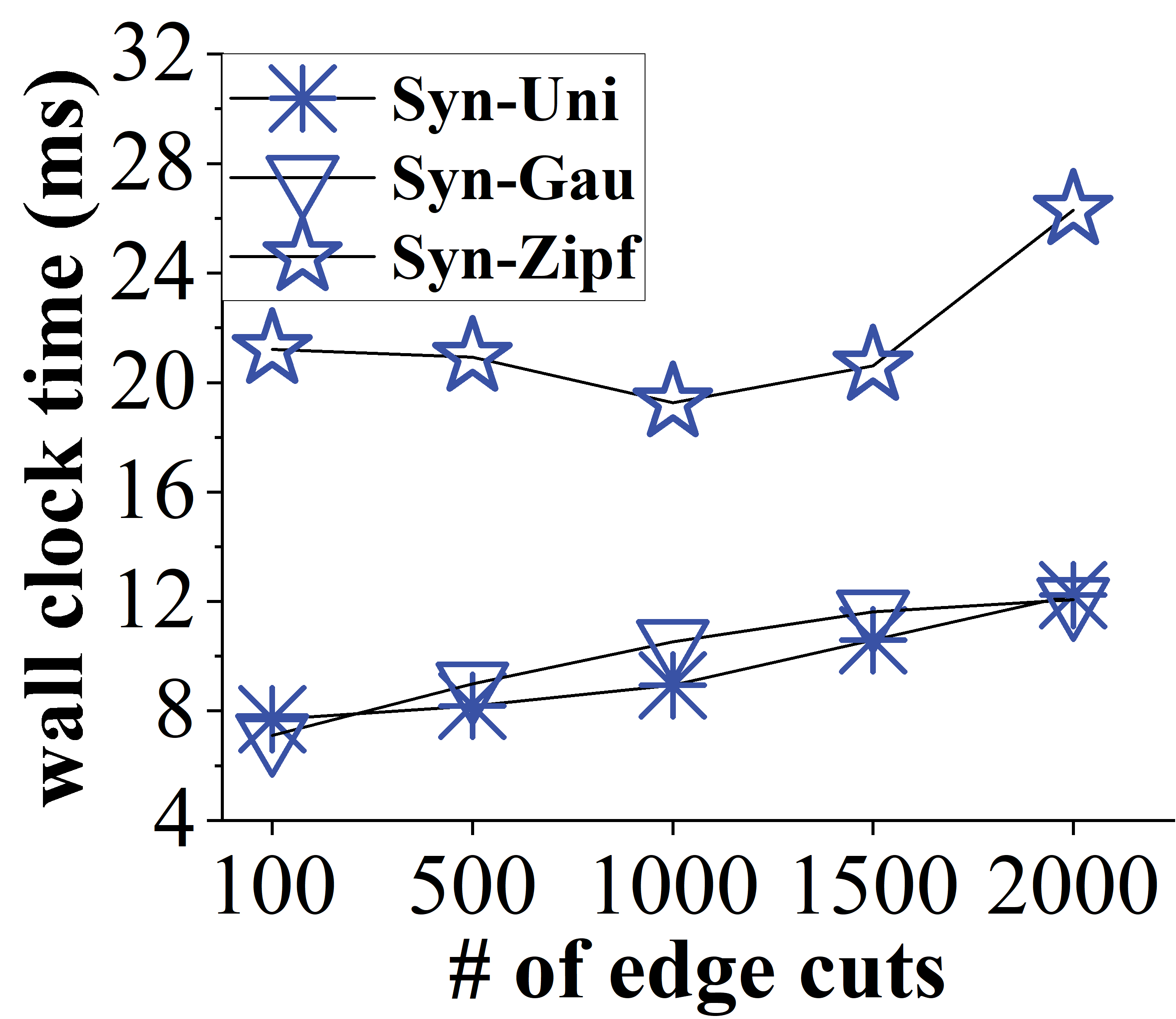}}\label{subfig:part_cuts}}
\caption{GNN-PE efficiency evaluation over synthetic graphs.}
\label{fig:efficiency}
\end{figure}

\noindent {\bf The GNN-PE Efficiency Evaluation w.r.t. Average Degree, $\bm{avg\_deg(q)}$, of the Query Graph $\bm{q}$.}
Figure \ref{fig:querydegree} examines the GNN-PE performance by varying the average degree, $avg\_deg(q)$, of the query graph $q$ from $2$ to $4$, where other parameters are set to default values. Higher degree $avg\_deg(q)$ of $q$ may produce more query paths, but meanwhile incur higher pruning power of query paths. Therefore, for most real/synthetic graphs, when $avg\_deg(q)$ increases, the wall clock time decreases. For real graph $wn$, due to more query paths from $q$, the time cost increases for $avg\_deg(q)=4$. With different $avg\_deg(q)$ values, the GNN-PE query costs remain low (i.e., 0.01 $sec$ $\sim$ 0.77 $sec$).

\noindent {\bf The GNN-PE Efficiency Evaluation w.r.t. Subgraph Partition Size, $\bm{|V(G)|/m}$.}
Figure \ref{subfig:syn_par_size} reports the GNN-PE efficiency for different subgraph partition sizes $|V(G)|/m$ from 5K to 50K, where we use default values for other parameters. With the increase of $|V(G)|/m$, the index traversal on partitions of a larger size incurs higher time. Nonetheless, for different $|V(G)|/m$ values, the time costs remain low (i.e., 0.006 $sec$ $\sim$ 0.133 $sec$) on all synthetic graphs.

\noindent {\bf The GNN-PE Efficiency Evaluation w.r.t. Number, $\bm{|\sum|}$, of Distinct Vertex Labels.}
Figure~\ref{subfig:labelnumber} shows the wall clock time of our GNN-PE approach, where $|\sum|$ varies from $100$ to $1,000$ and other parameters are set to their default values. From the figure, we can see that the GNN-PE performance is not very sensitive to $|\sum|$. Nonetheless, the query cost remains low (i.e., 0.009 $sec$ $\sim$ 0.038 $sec$) with different $|\sum|$ values, which indicates the efficiency of our GNN-PE approach.

\begin{figure}[t]
\centering
\subfigure[][{\small Syn-Uni}]{                    
\scalebox{0.12}[0.12]{\includegraphics{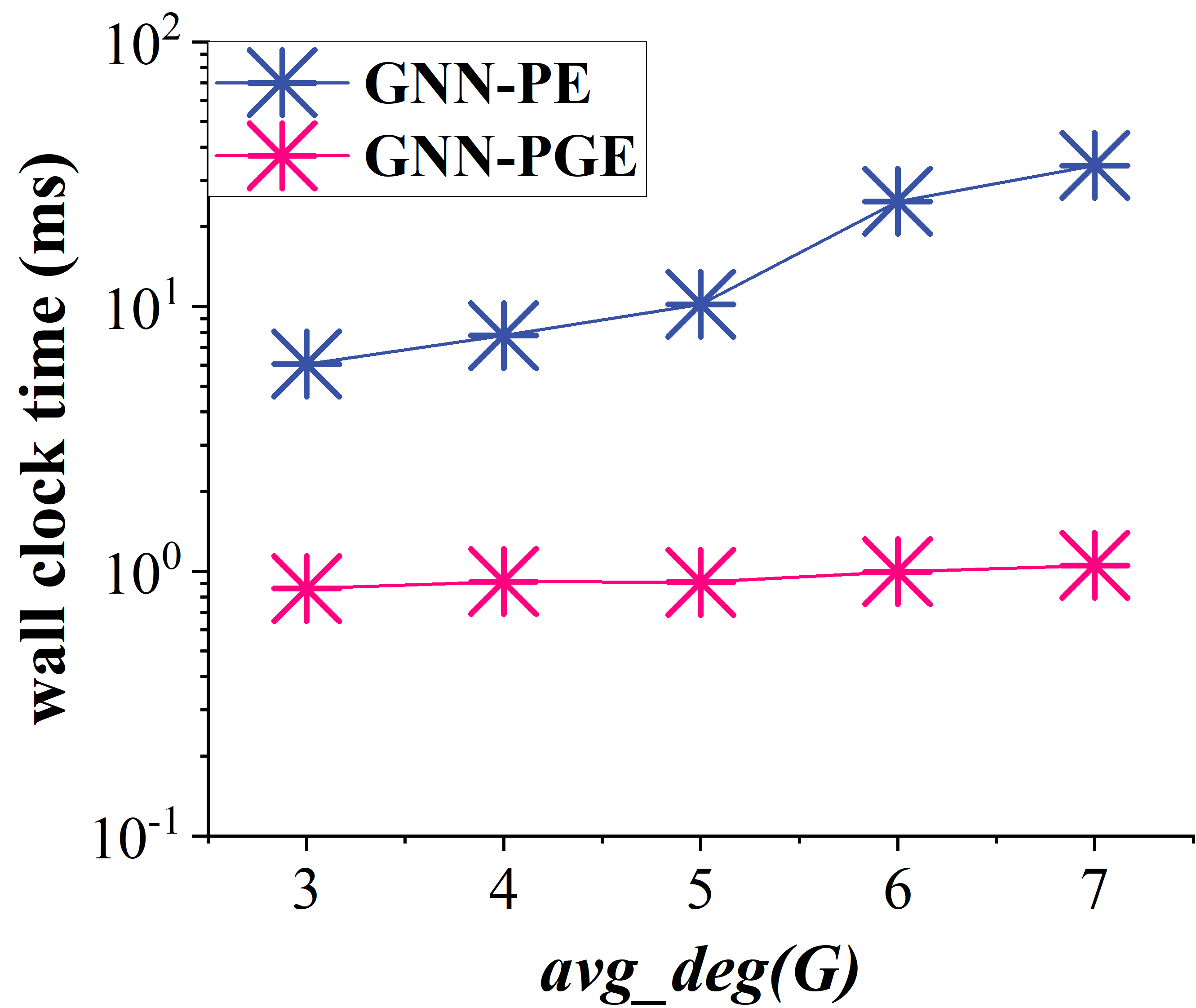}}\label{subfig:l}}
\subfigure[][{\small Syn-Gau}]{
\scalebox{0.12}[0.12]{\includegraphics{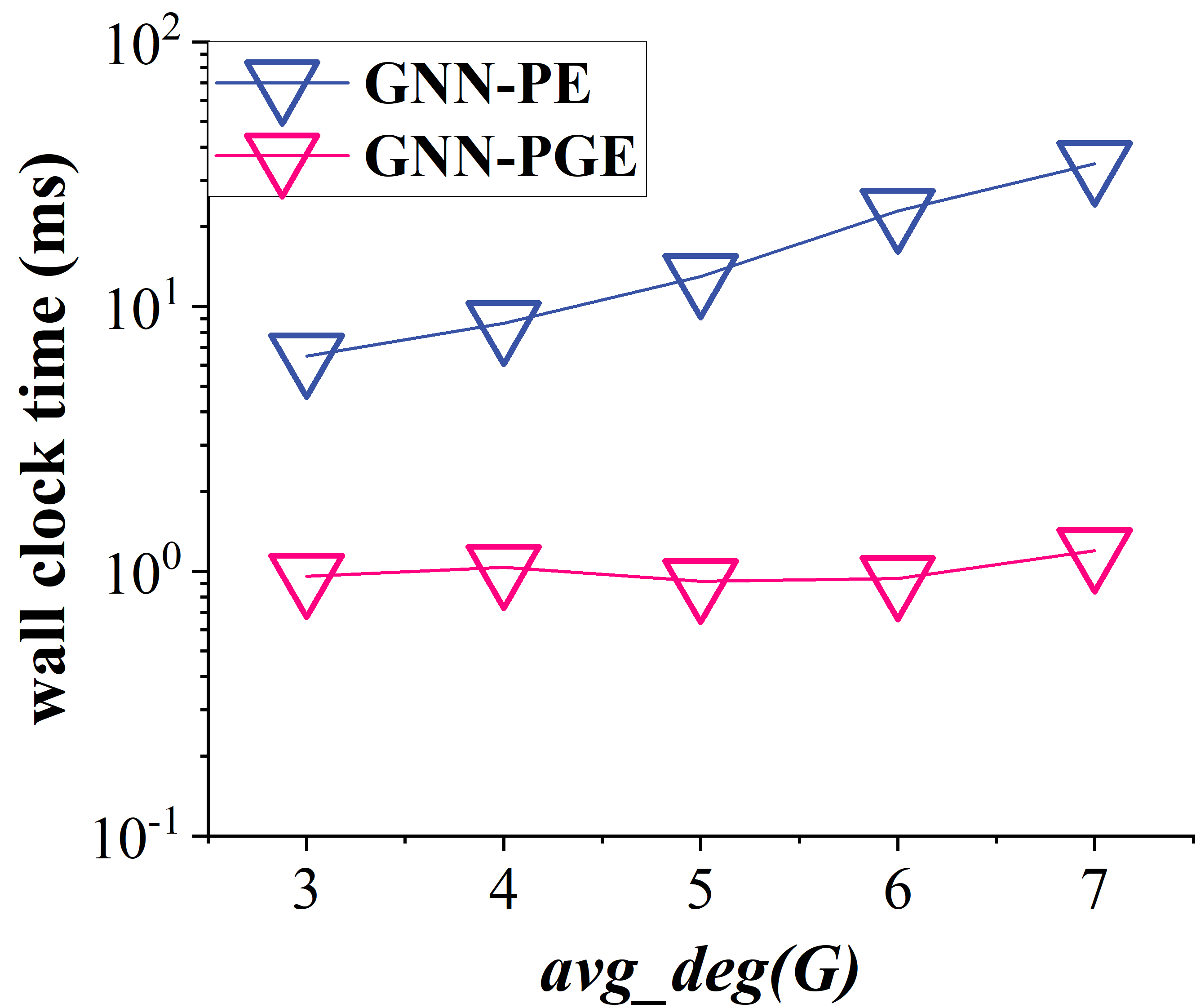}}\label{subfig:b}}
\subfigure[][{\small Syn-Zipf}]{
\scalebox{0.12}[0.12]{\includegraphics{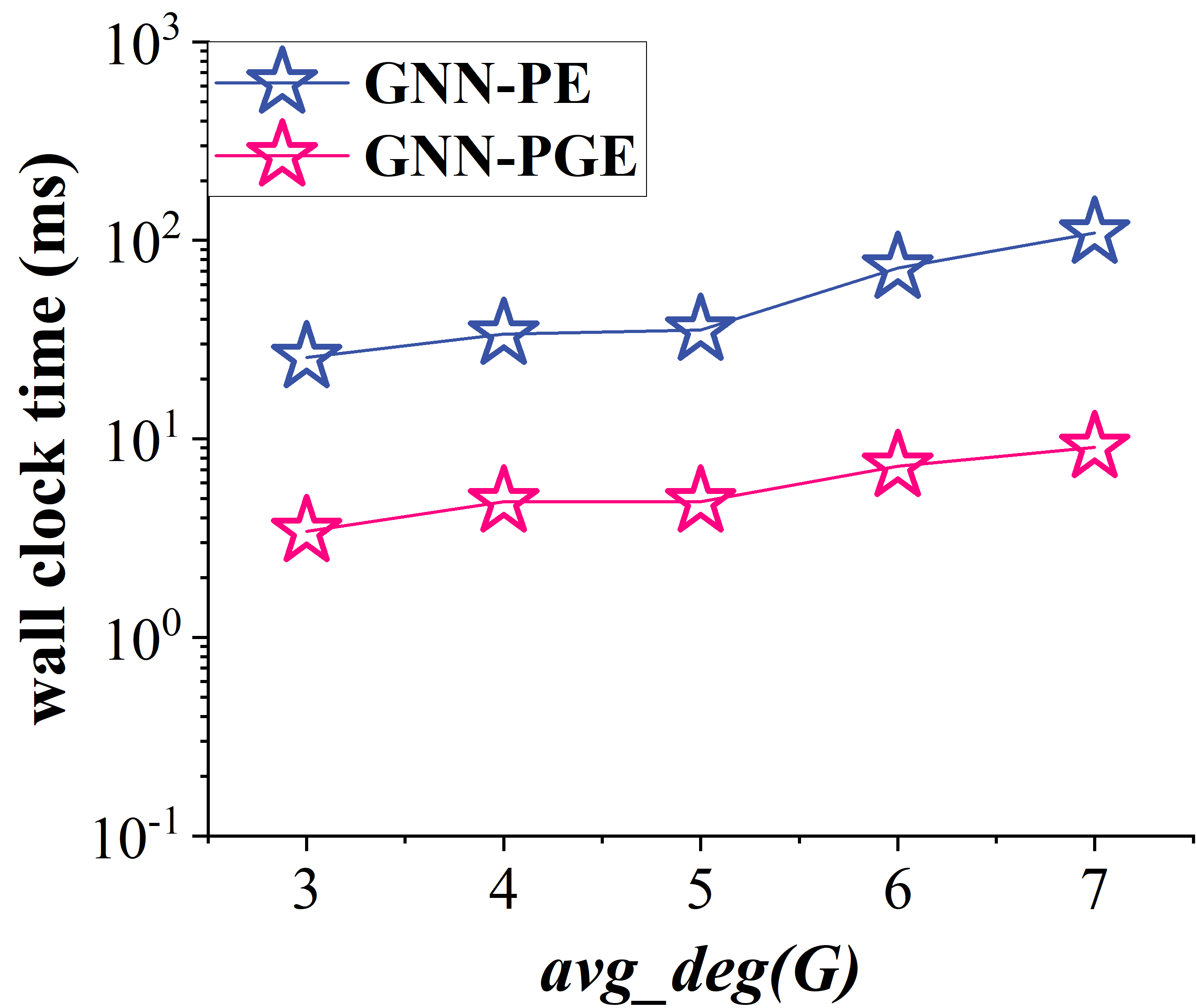}}\label{subfig:strategies}}
\caption{Efficiency evaluation w.r.t. average degree, $\bm{avg\_deg(G)}$, of the data graph $\bm{G}$}
\label{fig:data_avg_deg}
\end{figure}

\begin{figure}[t]
\centering
\subfigure[][{\small Syn-Uni}]{                    
\scalebox{0.12}[0.12]{\includegraphics{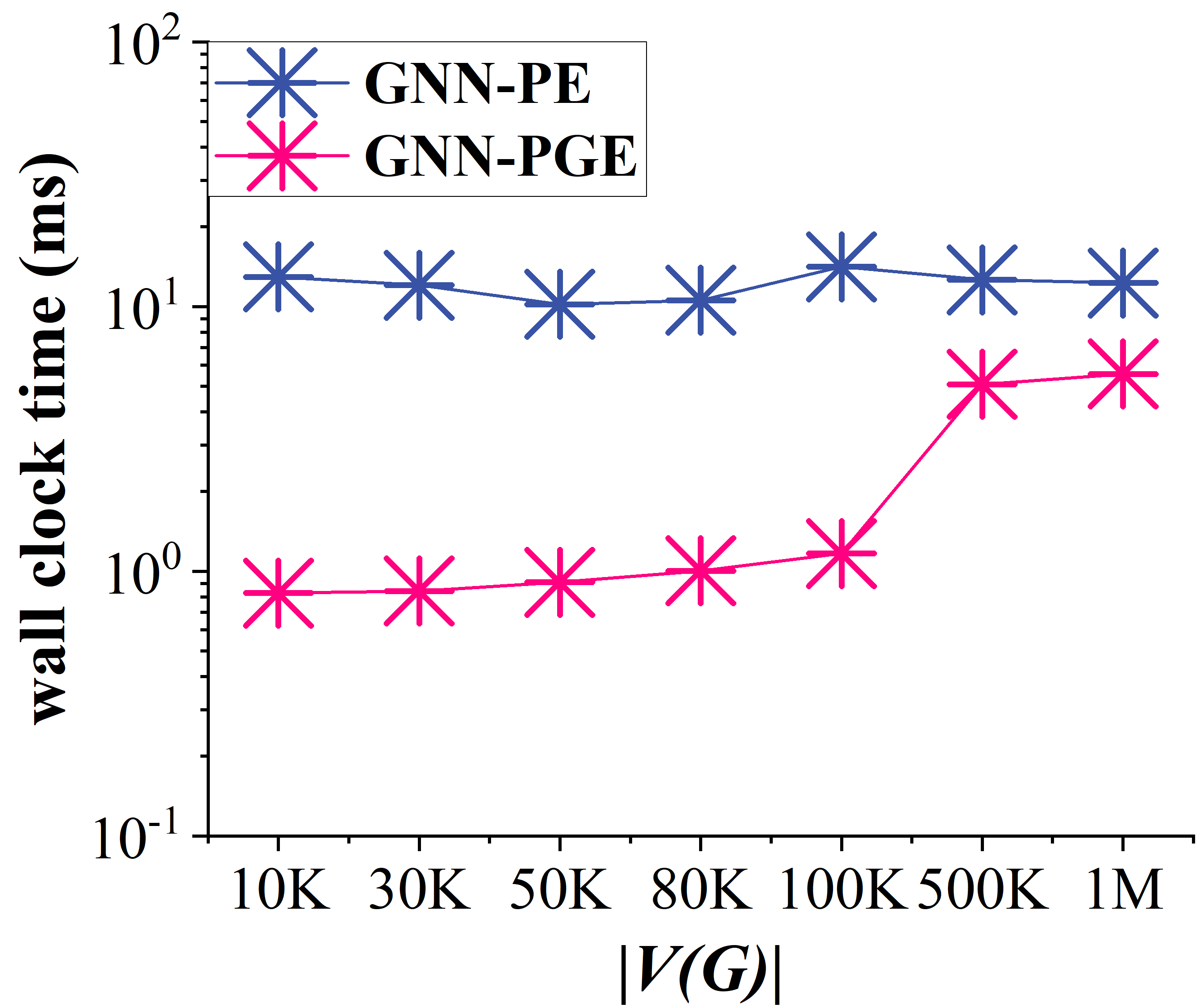}}\label{subfig:l}}
\subfigure[][{\small Syn-Gau}]{
\scalebox{0.12}[0.12]{\includegraphics{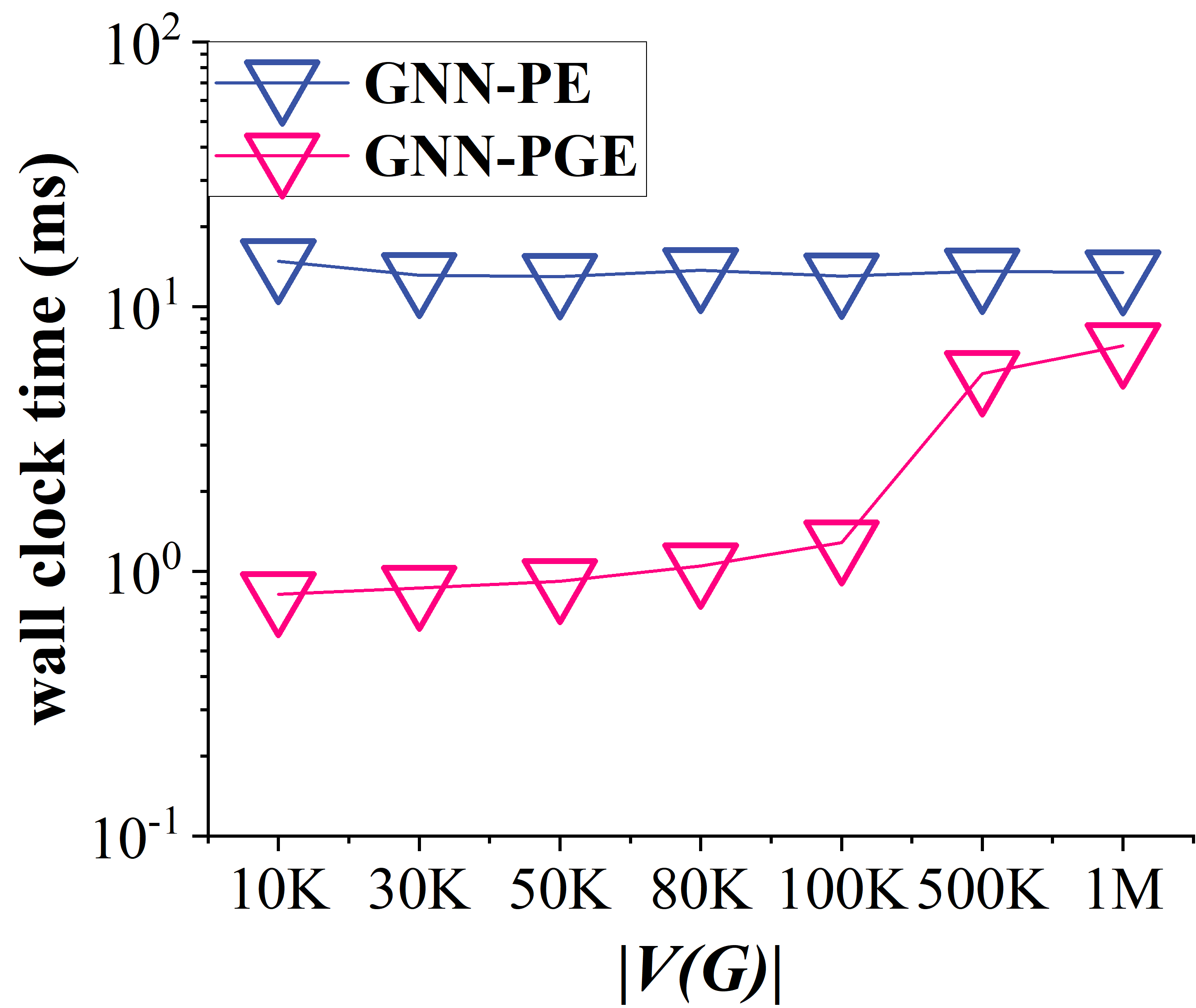}}\label{subfig:b}}
\subfigure[][{\small Syn-Zipf}]{
\scalebox{0.12}[0.12]{\includegraphics{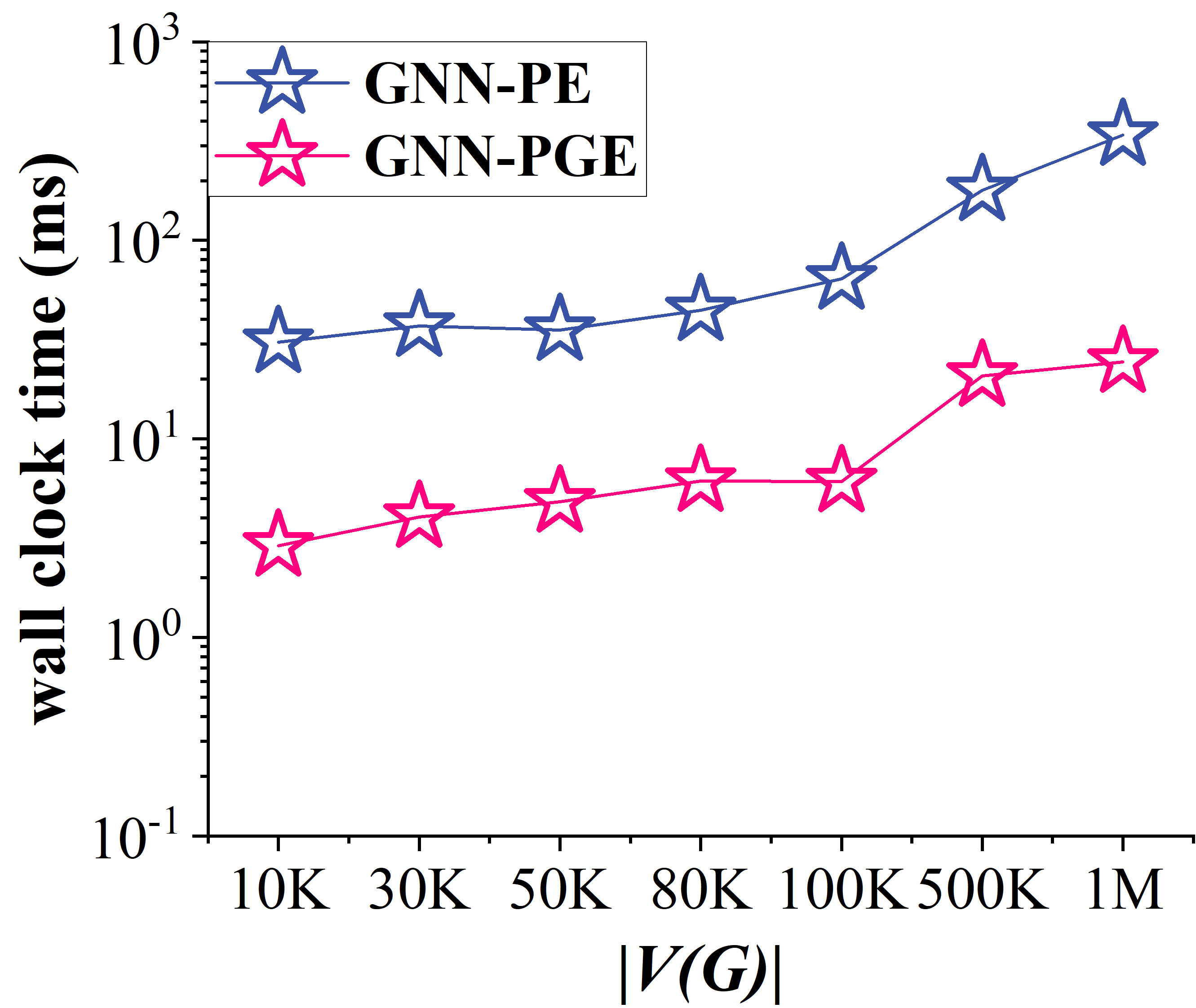}}\label{subfig:strategies}}
\caption{Scalability test w.r.t. data graph size $\bm{|V(G)|}$}
\label{fig:datasize}
\end{figure}

\noindent {\bf The GNN-PE/GNN-PGE Efficiency w.r.t. Average Degree, $\bm{avg\_deg(G)}$, of the Data Graph $\bm{G}$.}
Figure~\ref{fig:data_avg_deg} presents the performance of our GNN-PE and GNN-PGE approaches with different average degrees, $avg\_deg(G)$, of the data graph $G$, where $avg\_deg(G) = 3$, $4$, ..., and $7$, and default values are used for other parameters. Intuitively, higher degree $avg\_deg(G)$ in data graph $G$ incurs lower pruning power and more candidate paths. Thus, when $avg\_deg(G)$ becomes higher, the wall clock time also increases. Nevertheless, the wall clock time remains small (i.e., for GNN-PE, less than 0.035 $sec$ on $Syn\text{-}Uni$ and $Syn\text{-}Gau$, and 0.109 $sec$ on $Syn\text{-}Zipf$; for GNN-PGE, less than 0.001 $sec$ on $Syn\text{-}Uni$ and $Syn\text{-}Gau$, and 0.009 $sec$ on $Syn\text{-}Zipf$) for different $avg\_deg(G)$ values.

\noindent {\bf The GNN-PE/GNN-PGE Scalability Test w.r.t. Data Graph Size $\bm{|V(G)|}$.}
Figure~\ref{fig:datasize} tests the scalability of our GNN-PE and GNN-PGE approaches with different data graph sizes, $|V(G)|$, from 10K to 1M, where default values are assigned to other parameters. Since graphs are divided into partitions of similar sizes and processed with multiple threads in parallel, the GNN-PE performance over $Syn\text{-}Uni$ or $Syn\text{-}Gau$ is not very sensitive to $|V(G)|$. Moreover, for $Syn\text{-}Zipf$, due to the skewed keyword distributions, the refinement step generates more intermediate results, which are more costly to join for larger $|V(G)|$. 
On the other hand, since GNN-PGE utilizes the vertex as the query unit, its query cost increases as the graph size increases.
Nevertheless, for graph sizes from 10K to 1M, the wall clock time remains low (i.e., for GNN-PE, 0.010 $sec$ $\sim$ 0.015 $sec$ on $Syn\text{-}Uni$ and $Syn\text{-}Gau$, and 0.031 $sec$ $\sim$ 0.34 $sec$ on $Syn\text{-}Zipf$; for GNN-PGE, 0.0008 $sec$ $\sim$ 0.0071 $sec$ on $Syn\text{-}Uni$ and $Syn\text{-}Gau$, and 0.0029 $sec$ $\sim$ 0.0244 $sec$ on $Syn\text{-}Zipf$), which confirms the scalability of our GNN-PE approach for large graph sizes.

\noindent{\bf The GNN-PE Efficiency Evaluation w.r.t. Average Degree of Partitions.} Figure~\ref{subfig:part_degree} reports the performance of our GNN-PE approach by varying the average degree of partitions from 2 to 4, where $\#$ of edge cuts between any two partitions is 1,000 on average, and default values are used for other parameters. 
Higher degrees of partitions lead to more edge cuts and cross-partition paths, which incurs higher filtering/refinement costs. Thus, the time cost increases for a higher degree of partitions but remains small (i.e., $<$0.03 $sec$).

\noindent{\bf The GNN-PE Efficiency Evaluation w.r.t. $\#$ of Edge Cuts Between Partitions.} Figure~\ref{subfig:part_cuts} shows the GNN-PE performance by varying the average number of edge cuts between any two partitions from $100$ to $2,000$, where the partition degree is 3, and default parameter values are used. In general, more edge cuts between partitions produce more candidate paths, which results in higher retrieval/refinement costs. There are some exceptions over $Syn\text{-}Zipf$ (i.e., high query costs for fewer cross-partition edge cuts), which is due to the skewed distribution of vertex labels in $Syn\text{-}Zipf$.
Nevertheless, the query time remains small (i.e., $<$0.03 $sec$) for different numbers of cross-partition edge cuts.

\begin{figure}[t]
\centering
\subfigure[][{\small real-world graphs}]{                    
\scalebox{0.17}[0.17]{\includegraphics{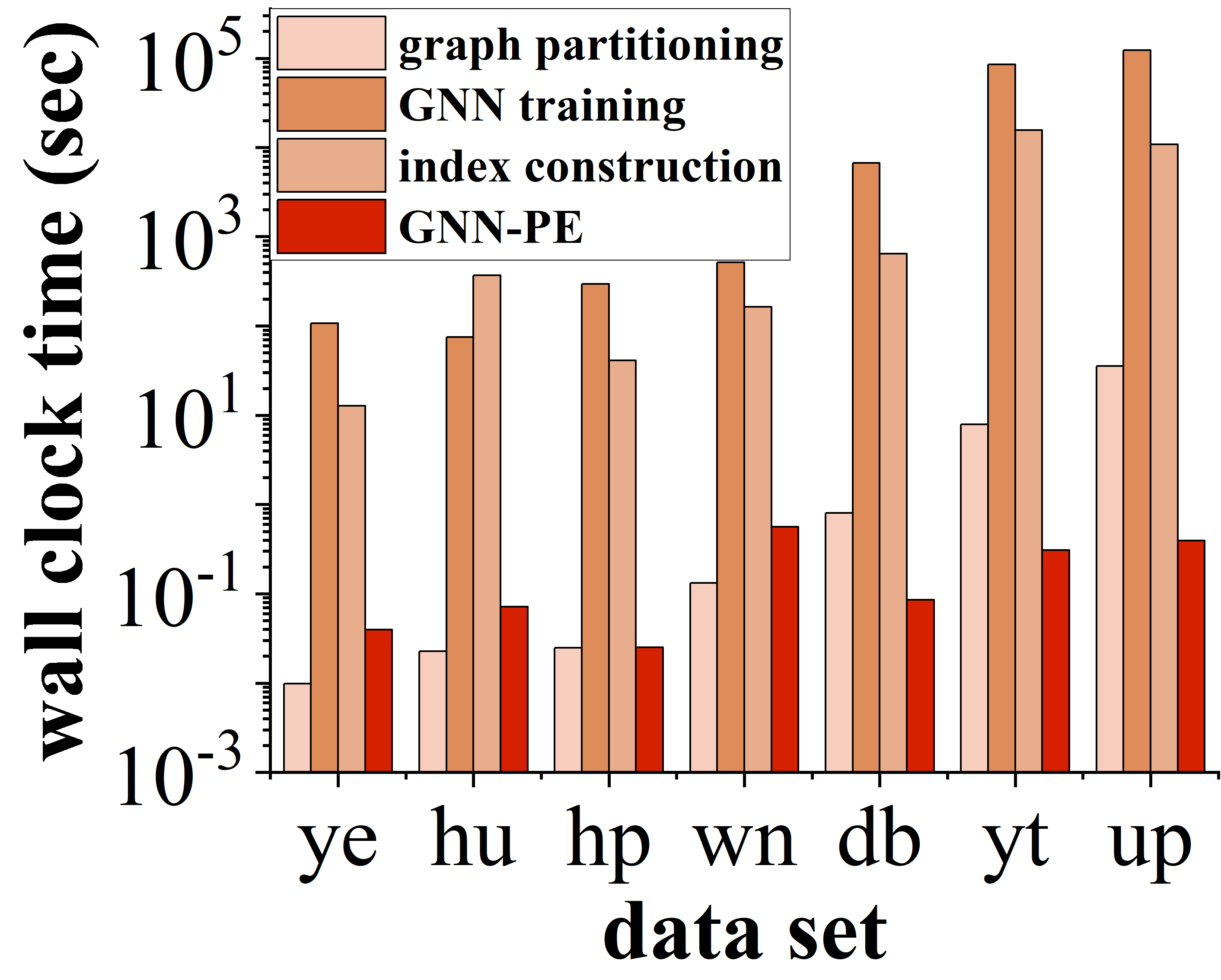}}\label{subfig:real_offline}}%
\subfigure[][{\small synthetic graphs}]{
\scalebox{0.17}[0.17]{\includegraphics{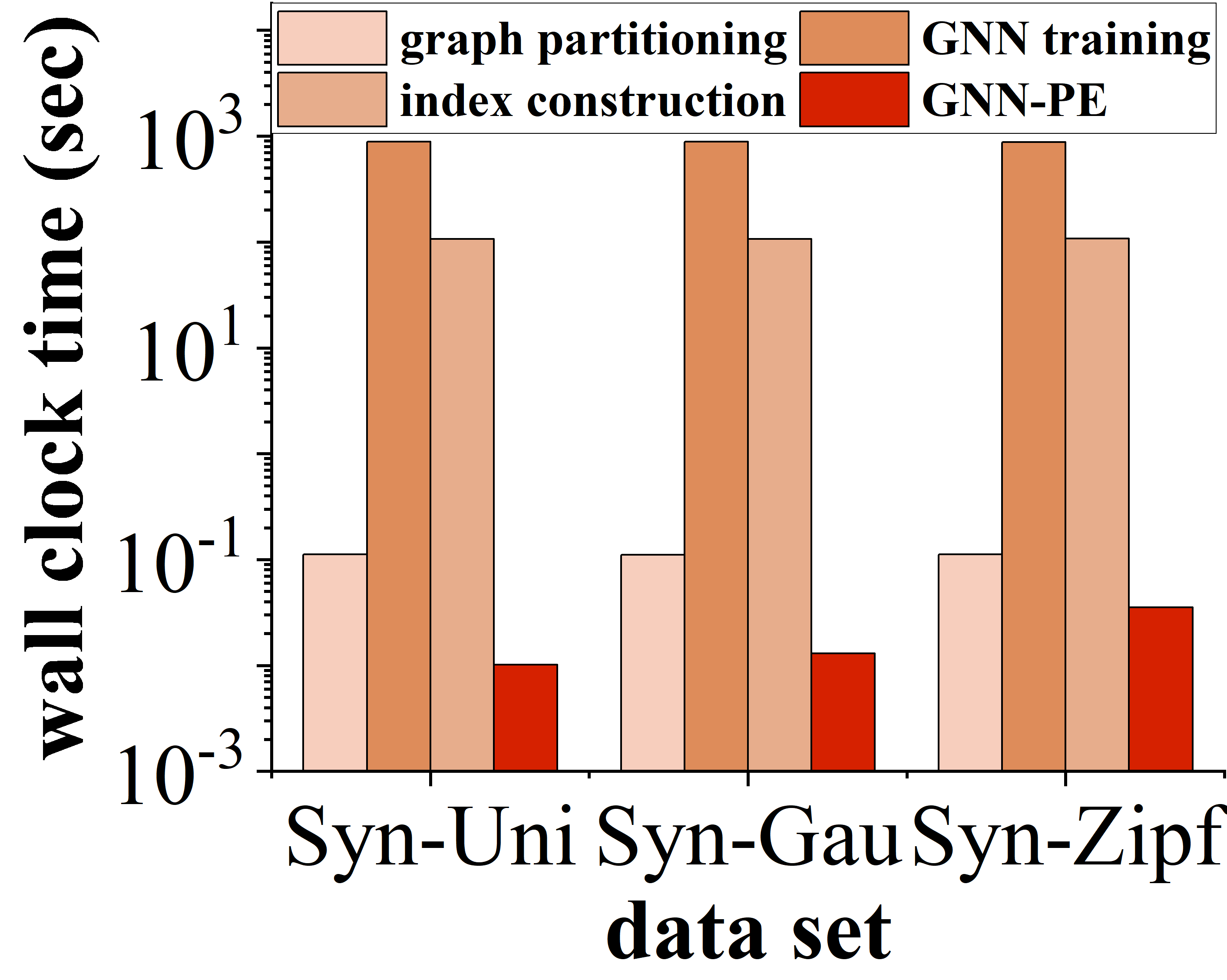}}\label{subfig:syn_offline}}
\caption{{The comparison analysis of the GNN-PE offline pre-computation and online query costs on real/synthetic graphs.}}
\label{fig:offlinecost}
\end{figure}

\begin{figure}[t]
\subfigcapskip=-0.cm
\centering
\subfigure[][{\small real-world graphs}]{
\scalebox{0.16}[0.16]{\includegraphics{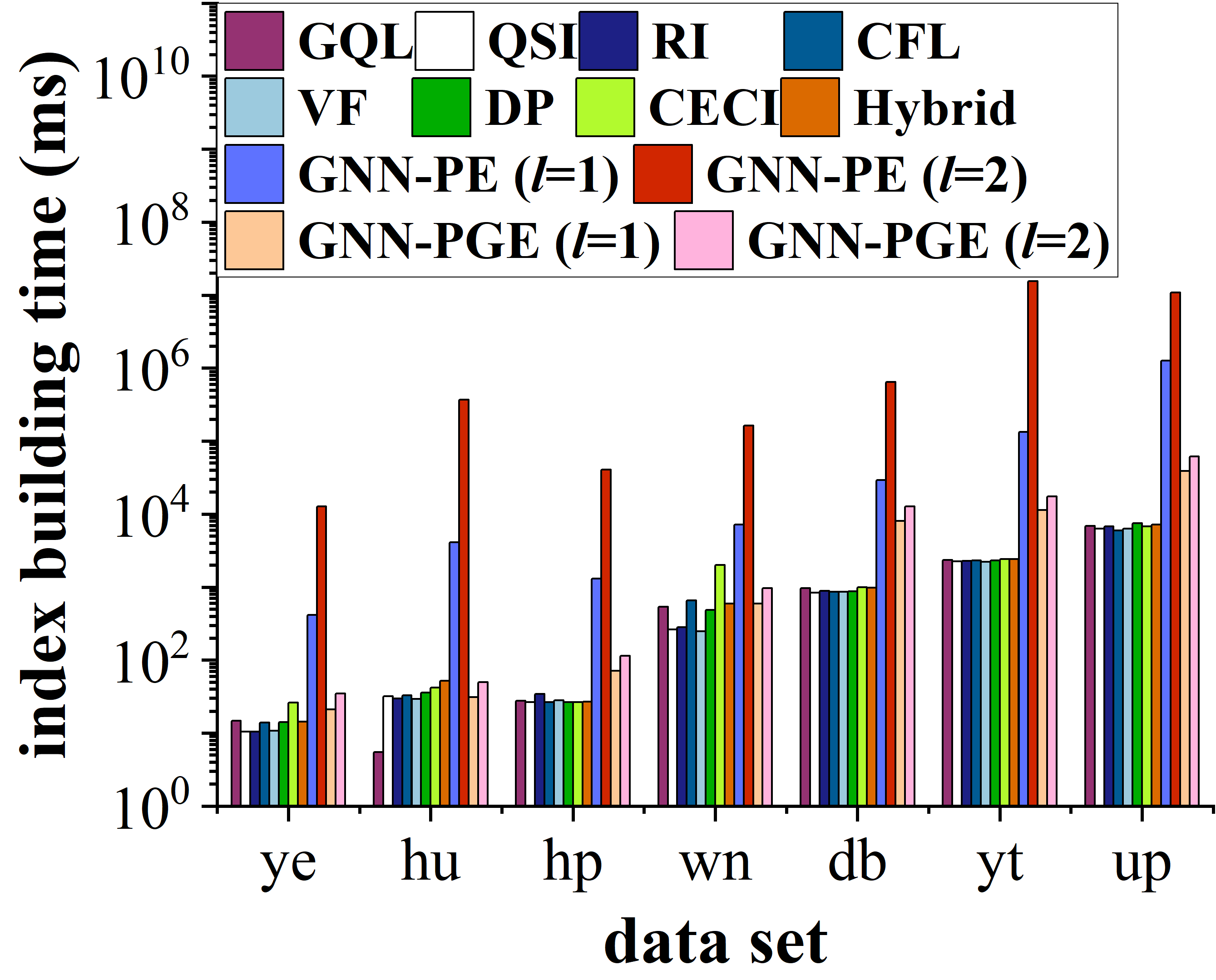}}\label{subfig:time_real}}
\quad
\subfigure[][{\small synthetic graphs}]{
\scalebox{0.16}[0.16]{\includegraphics{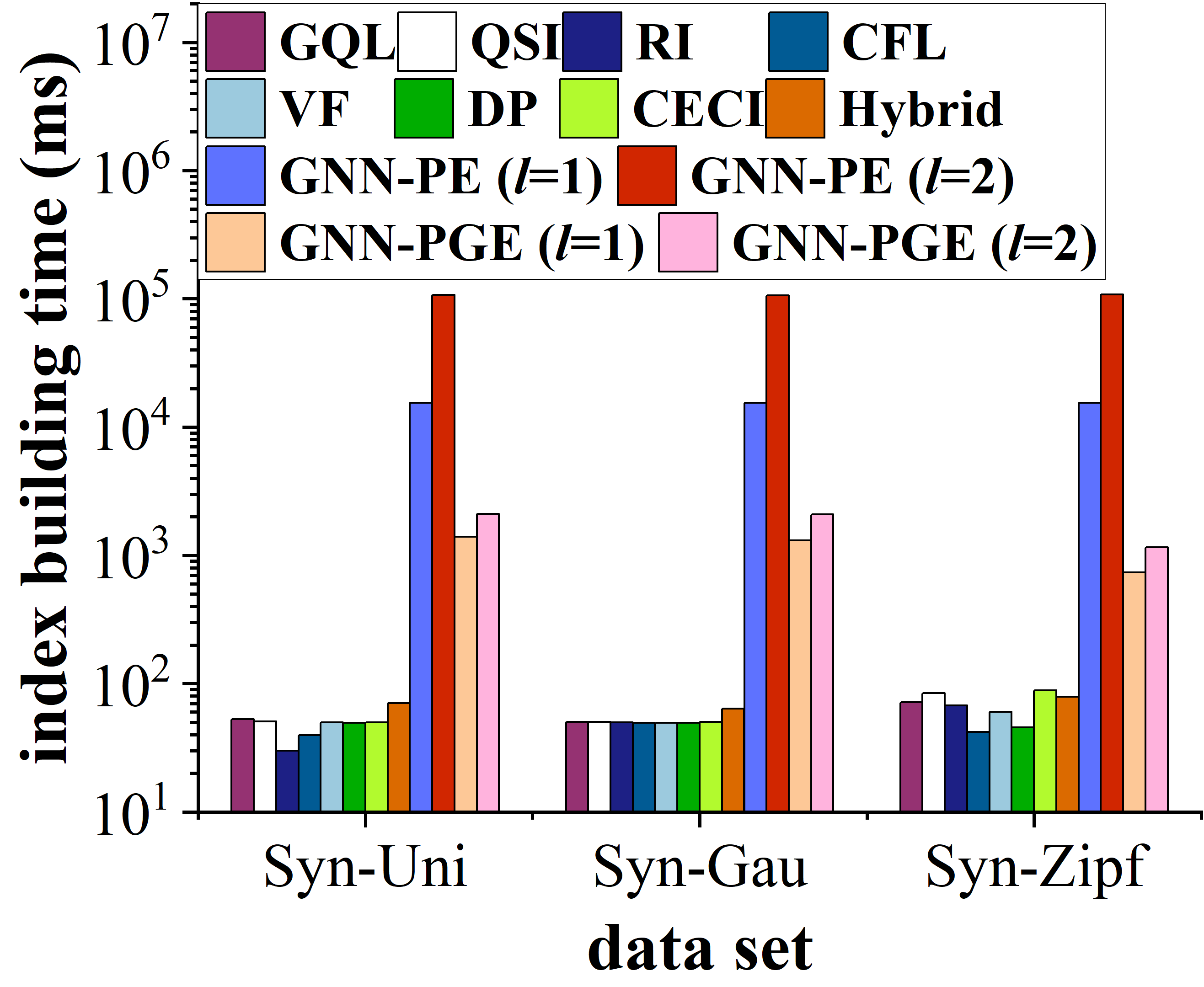}}\label{subfig:time_syn}}
\caption{Index building time on real/synthetic graphs, compared with baseline methods.}
\label{fig:index_time}
\end{figure}

\begin{figure}[t]
\subfigcapskip=-0.cm
\centering
\subfigure[][{\small real-world graphs}]{
\scalebox{0.16}[0.16]{\includegraphics{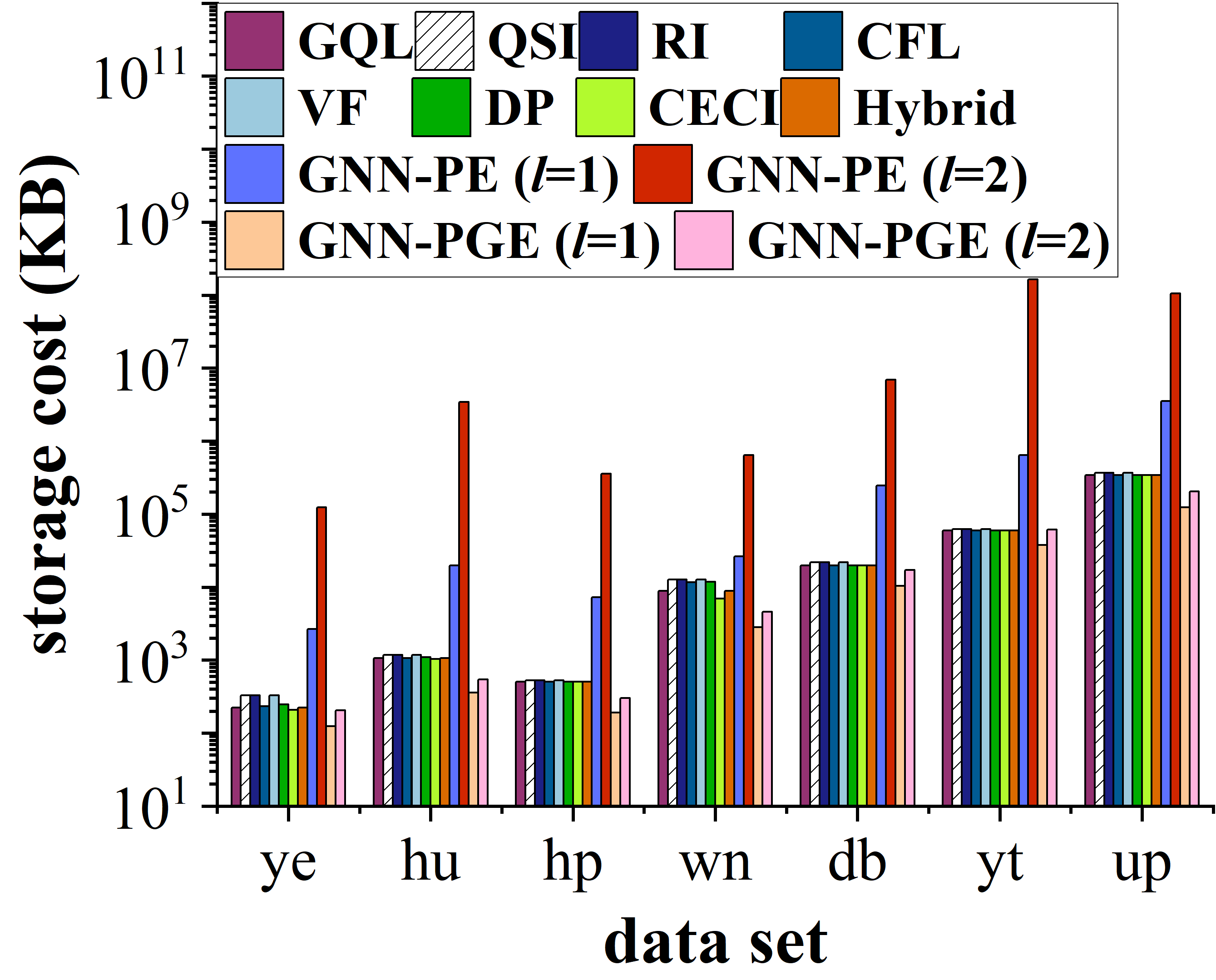}}\label{subfig:storage_real}}
\quad
\subfigure[][{\small synthetic graphs}]{
\scalebox{0.16}[0.16]{\includegraphics{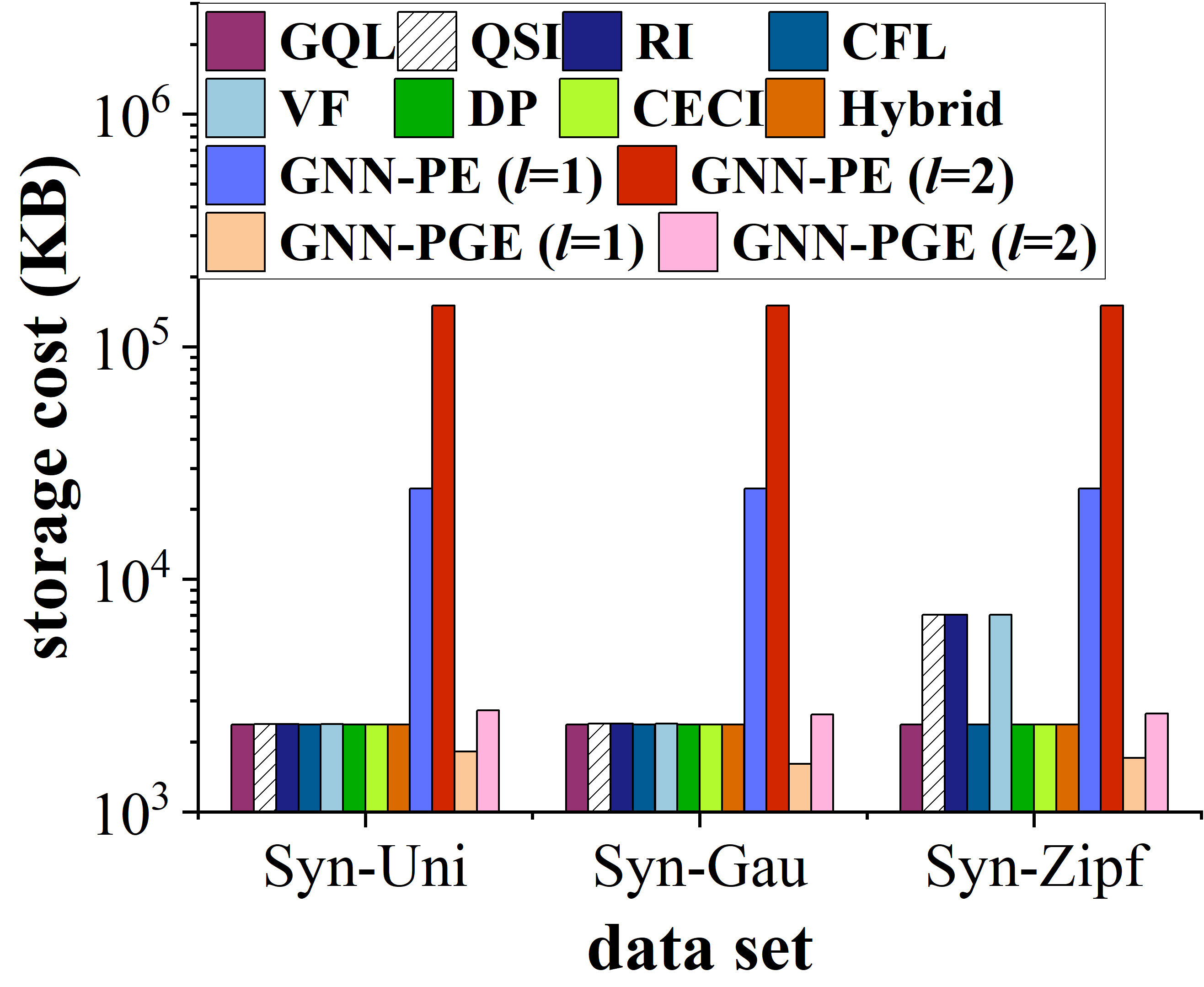}}\label{subfig:storage_syn}}
\caption{Index storage cost on real/synthetic graphs, compared with baseline methods.}
\label{fig:index_storage}
\end{figure}

\begin{figure*}[h!]
\centering\hspace{-2ex}
\subfigure[][{\small $Syn\text{-}Uni$}]{                    
\scalebox{0.2}[0.2]{\includegraphics{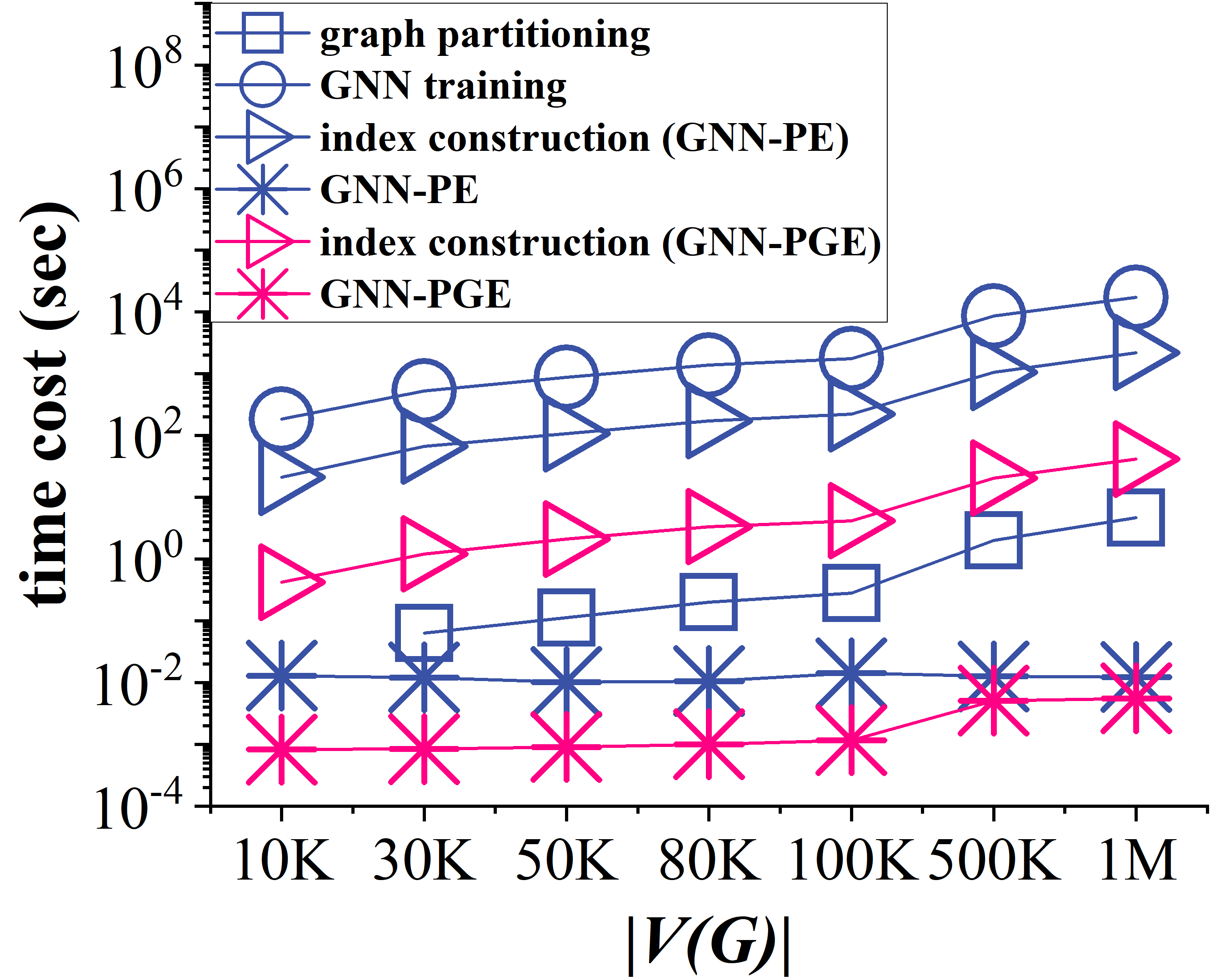}}\label{subfig:precost_uni}}\qquad\qquad
\subfigure[][{\small $Syn\text{-}Gau$}]{
\scalebox{0.2}[0.2]{\includegraphics{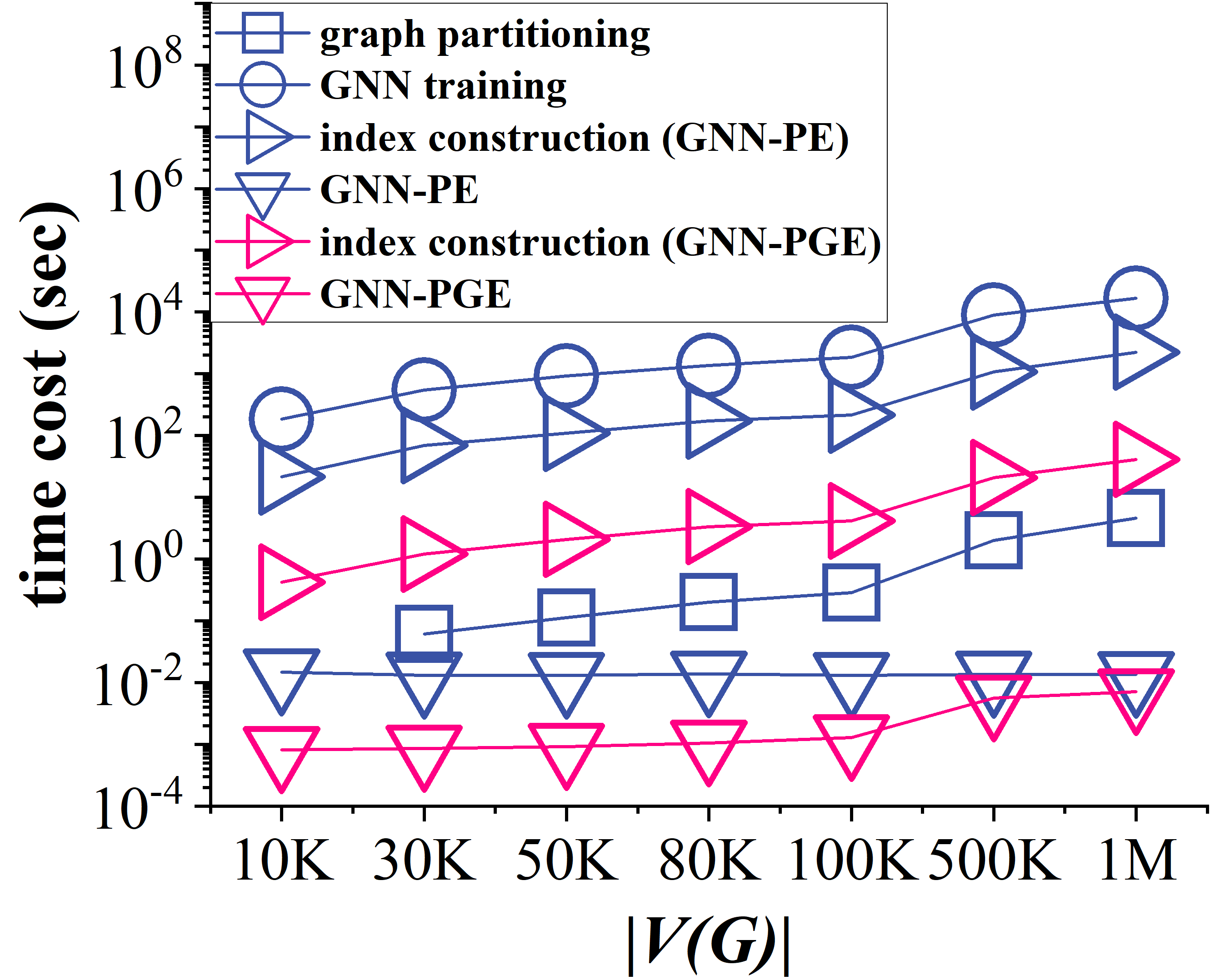}}\label{subfig:precost_gau}}\qquad\qquad
\subfigure[][{\small $Syn\text{-}Zipf$}]{
\scalebox{0.2}[0.2]{\includegraphics{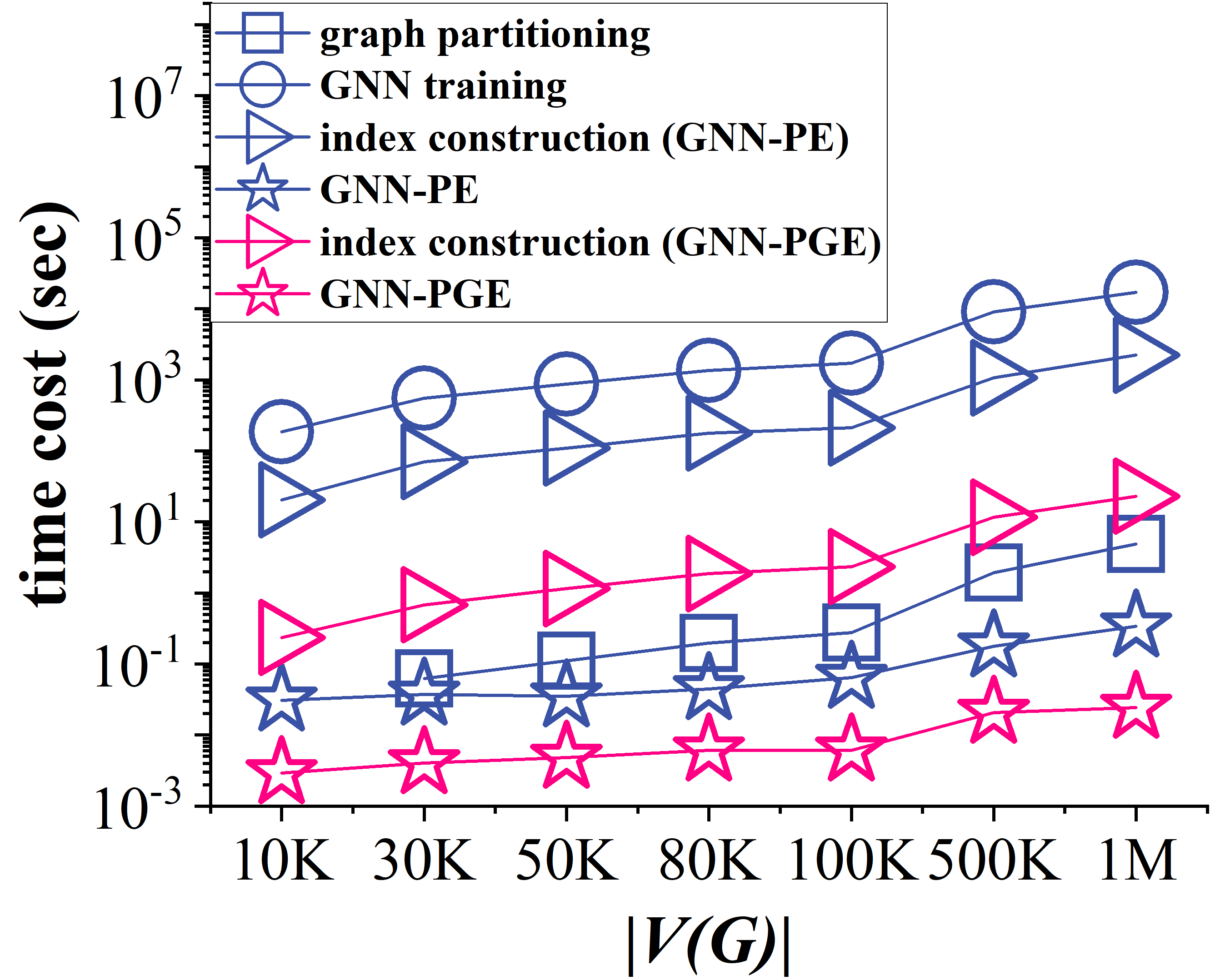}}\label{subfig:precost_zipf}}
\caption{The GNN-PE pre-computation and query costs w.r.t data graph size $\bm{|V(G)|}$.}
\label{fig:precost_size}
\end{figure*}

\begin{figure}[h!]
\centering
\subfigure[][{\small real-world graphs}]{                    
\scalebox{0.17}[0.17]{\includegraphics{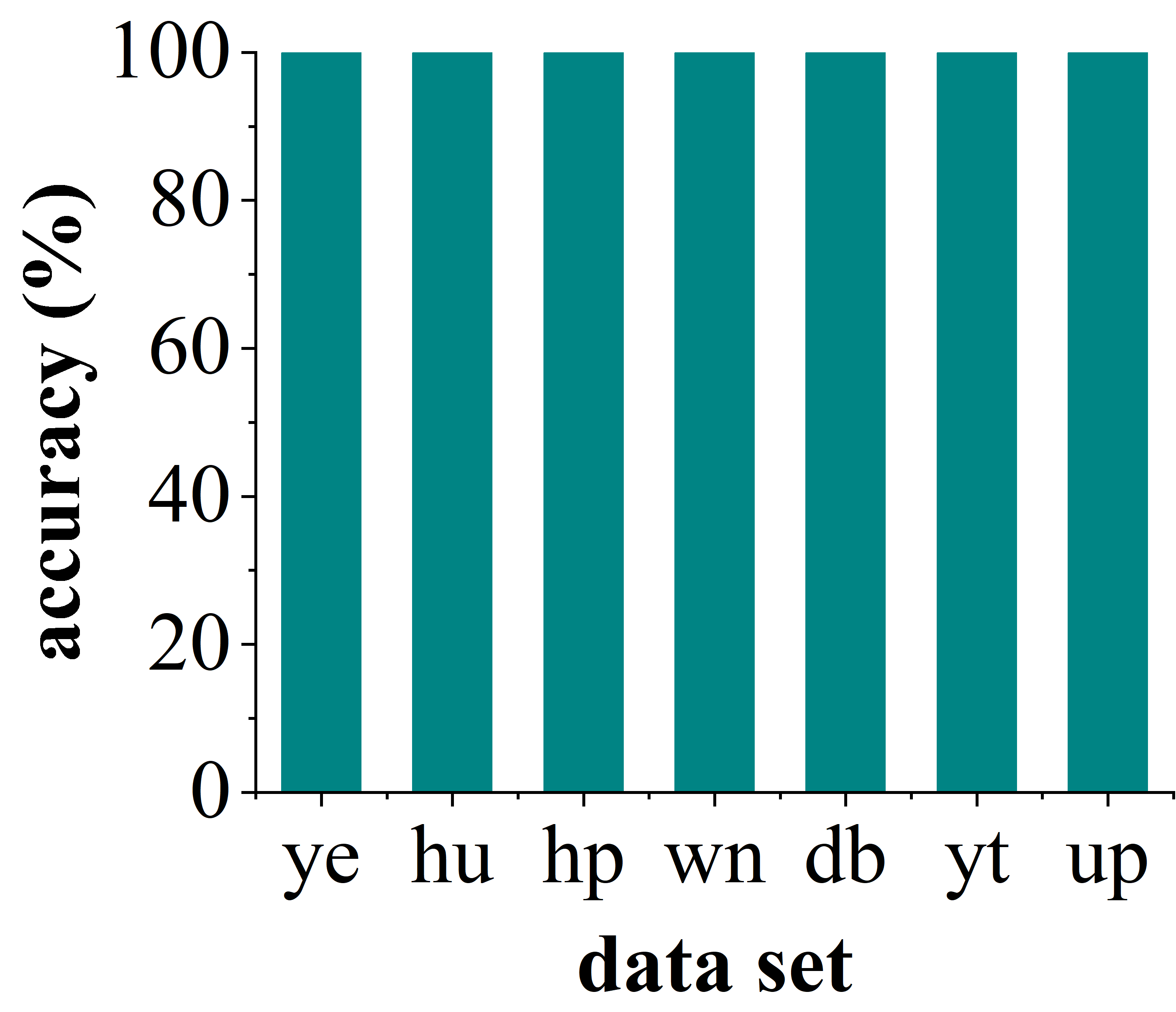}}\label{subfig:corr_real}}
\subfigure[][{\small synthetic graphs}]{
\scalebox{0.17}[0.17]{\includegraphics{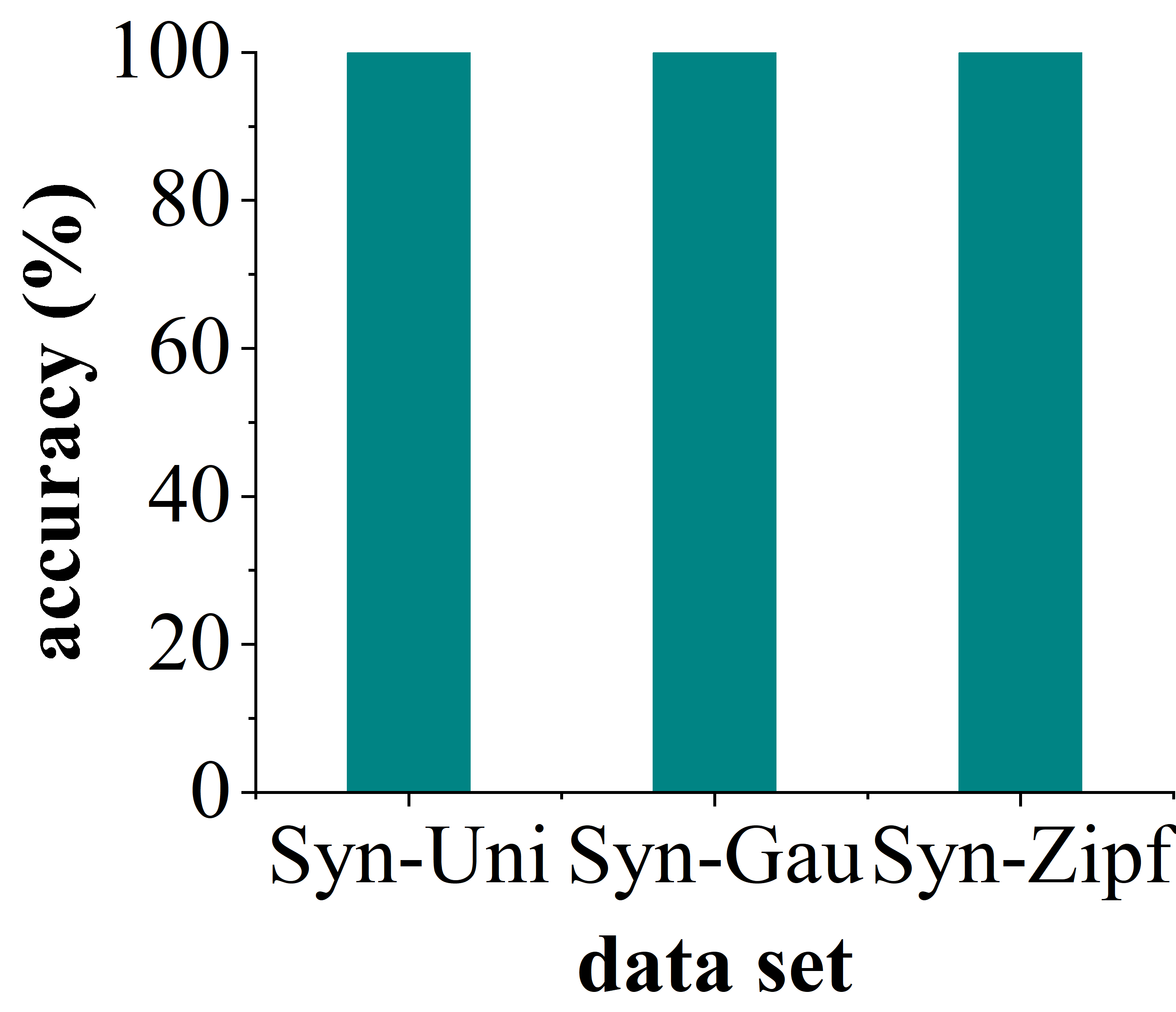}}\label{subfig:corr_syn}}
\caption{GNN-PE accuracy analysis.}
\label{fig:correctness}
\end{figure}

\subsection{Offline Pre-Computation Performance}

\noindent{\bf The GNN-PE Offline Pre-Computation Cost on Real/Synthetic Graphs.}
We compare the offline pre-computation time of our GNN-PE approach (including time costs of the graph partitioning, GNN training, and index construction on path embeddings) with online query time over real/synthetic graphs on a \textit{single} machine, where parameters are set to default values. In Figure~\ref{fig:offlinecost}, for graph size from 3K to 3.77M, the overall pre-computation time varies from 2 $min$ to 37.22 $hours$ and the online query time varies from 0.01 $sec$ to 0.56 $sec$. Specifically, the time costs of the graph partitioning, GNN training, and index construction are 0.01$\sim$35.36 $sec$, 1.24 $min$$\sim$34.21 $hours$, 12.77 $sec$$\sim$2.96 $hours$, respectively.

\noindent{\bf The GNN-PE/GNN-PGE Index Construction Time/Space Costs on Synthetic Graphs.} Figures \ref{fig:index_time} and \ref{fig:index_storage} compare the index construction time and storage cost, respectively, between our proposed GNN-PE and GNN-PGE approaches (path length $l=1$ or $2$) with 8 baselines, where other parameters are set to default values. 
From these figures, compared with baselines, in general, our GNN-PE approach needs higher time costs for offline index construction and more space costs for storing/indexing pre-computed GNN-based path embeddings. 
By grouping the paths that start from the same vertices, our GNN-PGE approach can significantly reduce both time and storage costs and achieve comparable performance with baselines on some datasets, such as $hu$ and $wn$.
Note that, unlike baseline methods that construct an index for each query graph during online subgraph matching, our index construction is \textit{offline and one-time only}, and the index construction time and storage cost are the summations over all partitions, which can be further optimized by using \textit{multiple} servers in a distributed parallel environment (as our future work). Thus, our offline constructed index can be used to accelerate \textit{numerous} online subgraph matching requests from users simultaneously with \textit{high throughput}. 

\noindent{\bf Offline Pre-Computation Cost w.r.t. Data Graph Size $|V(G)|$.} 
Figure~\ref{fig:precost_size} evaluates the offline pre-computation time of our GNN-PE and GNN-PGE approaches, including time costs of the graph partitioning, GNN training, and index construction on path embeddings (and path group embeddings), compared with online GNN-PE and GNN-PGE query time, over synthetic graphs on a \textit{single} machine, where we vary the graph size $|V(G)|$ from $10K$ to $1M$ and other parameters are set to default values. Specifically, for graph sizes from $10K$ to $1M$, the time costs of the graph partitioning and GNN training are 0.06$\sim$4.7 $sec$ and 3.07 $min$$\sim$4.79 $hours$, respectively. For index construction, GNN-PE needs 21.07 $sec$$\sim$36.63 $min$ to insert all paths in the data graph. Due to the path grouping, the cost of GNN-PGE is reduced to 0.42 $sec$ $\sim$ 41.5 $sec$.
Compared with the offline pre-computation time, the subgraph matching query cost is much smaller (i.e., 0.01$\sim$0.34 $sec$ for GNN-PE and $<$ 0.02 $sec$ for GNN-PGE).

In Figures \ref{subfig:precost_gau} and \ref{subfig:precost_zipf}, we can see similar experimental results over $Syn\text{-}Gau$ and $Syn\text{-}Zipf$ graphs, respectively, where the subgraph matching query cost is much smaller than the overall offline pre-computation time.

\subsection{The Accuracy of GNN-PE Query Answers}
\label{subsec:correctness}

\noindent{\bf The Correctness of GNN-PE Query Answers on Real/Synthetic Graphs.}
We report the correctness of query results returned by our GNN-PE approach over real/synthetic graphs, where parameters are set to default values. 
In Figure~\ref{fig:correctness}, we compare query answers of our GNN-PE approach with Hybrid \cite{sun2020memory} (a baseline that provides exact subgraph matching results). From the figure, we can see that the matching accuracy of our GNN-PE approach over all data sets is 100\%, which confirms that GNN-PE introduces no false dismissals.

The experiments with other parameters (e.g., $K$, $F$, $F'$, etc.) over real/synthetic graphs have similar results, which are thus omitted here.

\balance

\section{Related Work}
\label{sec:related_work}
The subgraph matching has been proved to be an NP-complete problem \cite{lewis1983michael}, which is not tractable. Many prior works \cite{shang2008taming,he2008graphs,bi2016efficient,bhattarai2019ceci,han2019efficient,bonnici2013subgraph,carletti2015vf2,lou2020neural,duong2021efficient} designed various database techniques (e.g., pruning \cite{shang2008taming,he2008graphs,bi2016efficient,bhattarai2019ceci,han2019efficient,bonnici2013subgraph,carletti2015vf2} or approximation \cite{lou2020neural, duong2021efficient}) to reduce the computational overhead to optimize the computational complexity in the average case. However, in this work, we consider a novel and effective perspective, that is, an AI-based approach via GNNs, to enable efficient processing of exact subgraph matching queries without any false dismissals.

\noindent{\bf Exact Subgraph Matching.} Prior works on exact subgraph matching fall into two major categories, i.e.,  join-based \cite{lai2015scalable, lai2016scalable, aberger2017emptyheaded, ammar2018distributed, mhedhbi2019optimizing, lai2019distributed, sun2020rapidmatch} and backtracking-search-based algorithms \cite{shang2008taming, carletti2015vf2, bonnici2013subgraph, sun2012efficient, he2008graphs, bi2016efficient, bhattarai2019ceci, han2019efficient, carletti2017challenging}. 
Specifically, the join-based algorithms split the query graph into different subgraphs (e.g., edges \cite{aberger2017emptyheaded, ammar2018distributed}, paths \cite{mhedhbi2019optimizing}, and triangles \cite{sun2020rapidmatch}) and join candidates of each subgraph to obtain final answers.
Moreover, the backtracking search-based algorithms search candidates of each vertex in the query graph, and then apply the backtracking search in the data graph to enumerate subgraph matching answers.
In contrast, our proposed GNN-PE approach transforms graph paths to GNN-based embeddings in the embedding space (instead of directly over graph structures) and utilizes spatial indexes to conduct efficient searches at low query costs.
Different from GNN-PE in the conference version \cite{ye2024efficient}, in this long version, we propose a more efficient GNN-PGE approach, which groups embedding vectors of paths with the same starting vertices as the basic search unit to further optimize the subgraph matching efficiency and reduce the (index) space cost.

Several existing works \cite{shasha2002algorithmics,james2004daylight,yan2004graph} also considered subgraph matching by using feature vectors (e.g., frequency vectors based on path \cite{shasha2002algorithmics,james2004daylight} or non-path \cite{yan2004graph} patterns) to represent graphs. However, they retrieve small graphs in a graph database (e.g., chemical molecules) that contain the query graph, which differs from our problem that finds subgraphs in a single large data graph.

\noindent{\bf Approximate Subgraph Matching.} An alternative problem is to develop methods to quickly return approximate subgraphs similar to a given query graph $q$. 
Existing works on approximate subgraph matching usually search for top-$k$ most similar subgraphs from the data graph by using different measures \cite{zhu2011structure,du2017first,dutta2017neighbor,li2018efficient}. With AI techniques, recent works  \cite{li2019graph, bai2019simgnn, xu2019cross, mcfee2009partial, vendrov2015order, lou2020neural} proposed to use GNNs/DNNs to improve the efficiency of approximate subgraph matching. Although these methods can quickly check the subgraph isomorphism by avoiding the comparison of graph structures, they cannot guarantee the accuracy of query answers and have limited task scenarios (e.g., only applicable to approximately comparing two graphs, instead of finding subgraph locations in a data graph).  

\noindent {\bf Graph Neural Networks.} The GNN representation for graph data has exhibited great success in various graph-related applications \cite{wang2020gognn,hao2021ks,wang2021binarized,wang2022powerful,huang2022able}. 
Typically, GNNs can learn the node representations with iterative aggregation of neighborhood information for each node, e.g., GCN \cite{kipf2017semi}, GAT \cite{velivckovic2018graph}, GraphSAGE \cite{hamilton2017inductive}, and GIN \cite{xu2019powerful}. 
With variants of GNNs, recent works \cite{li2019graph,bai2019simgnn,duong2021efficient} have used embedding techniques for graph matching. 
However, these embedding methods do not strictly impose structural relationships between the data graph and query graph (i.e., subgraph relationships). 
In contrast, our GNN-based path dominance embedding approach transforms vertices/paths in the data graph into embedding vectors, where subgraph relationships of the corresponding vertices/paths can be guaranteed without introducing false dismissals. 
To our best knowledge, no prior works used the GNN model to process exact subgraph matching over a large-scale data graph with 100\% accuracy.

\noindent {\bf Learning-Based Graph Data Analytics.} Recent works on learning-based graph data analytics utilized GNNs to generate graph embeddings for different tasks, such as subgraph counting \cite{chen2020can,yu2023learning}, graph distance prediction \cite{ranjan2022greed}, and subgraph matching \cite{lou2020neural,roy2022interpretable}). These works, however, either considered a different problem (e.g., counting or graph distance estimation) or only reported approximate subgraph matching answers (instead of exact subgraph matching studied in our problem). Therefore, we cannot apply their GNN embedding approaches to solve our problem.

\nop{
\textcolor{blue}{
Due to the prosperity of GNN, various AI-based methods \cite{yu2023learning,chen2020can,ranjan2022greed,roy2022interpretable,lou2020neural} have been studied in graph data management. These methods use GNN to generate graph embeddings for different downstream tasks (i.e., subgraph counting \cite{chen2020can,yu2023learning}, graph distance prediction \cite{ranjan2022greed}, and subgraph matching \cite{lou2020neural,roy2022interpretable}). To obtain the final result in a learning-based way, these AI-based methods apply a similar framework and workflow. Specifically, the architecture usually contains two main components: a GNN model for embedding generation and a DNN model for result prediction. Thus, the workflow includes two corresponding steps: first generating the embeddings of two given graphs, and then predicting the results of different tasks over their embeddings. However, these works either considered a different problem (e.g., counting or graph distance estimation) or only reported approximate subgraph matching answers (instead of exact subgraph matching studied in our problem). Therefore, we cannot apply these AI-based methods to solve our problem.
}

}

\section{Conclusions}
\label{sec:conclusions}

In this paper, we propose a novel \textit{GNN-based path embedding} (GNN-PE) framework for efficient processing of exact subgraph matching queries over a large-scale data graph. We carefully design GNN models to encode paths (and their surrounding 1-hop neighbors) in the data graph into embedding vectors, where subgraph relationships are strictly reflected by vector dominance constraints in the embedding space. 
The resulting embedding vectors can be used for efficient exact subgraph matching without false dismissals.
To further optimize GNN-PE, we design a GNN-based path group embedding (GNN-PGE) approach, which performs subgraph matching over the grouped path embedding vectors. Due to the path grouping with the same starting vertices, we can significantly reduce both the search space and the storage cost.
Extensive experiments have been conducted to show the efficiency and effectiveness of our proposed GNN-PE and GNN-PGE approaches over both real and synthetic graph data sets.

\nop{

\begin{acks}
 This work was supported by the [...] Research Fund of [...] (Number [...]). Additional funding was provided by [...] and [...]. We also thank [...] for contributing [...].
\end{acks}

}

\nop{

Potential extensions:

multiple keywords per vertex
label/keyword grouping 
high dimensions - high dimensional index like X-tree?

}

\balance

\clearpage
\bibliographystyle{ACM-Reference-Format}
\bibliography{sample}

\end{document}